\newtheorem{theorem}{Theorem}[section]
\newtheorem{proposition}[theorem]{Proposition}
\newtheorem{lemma}[theorem]{Lemma}
\newtheorem{corollary}[theorem]{Corollary}
\theoremstyle{definition}
\newtheorem{definition}[theorem]{Definition}
\newtheorem{remark}[theorem]{Remark}
\title{Benefits of Permutation-Equivariance in\\Auction Mechanisms}
\author{%
  Tian Qin\textsuperscript{1, 2}\thanks{Both authors contributed equally.}
  \quad
  Fengxiang He\textsuperscript{2}\footnotemark[1]
  \quad
  Dingfeng Shi\textsuperscript{2}
  \quad
  Wenbing Huang\textsuperscript{3, 4}
  \quad
  Dacheng Tao\textsuperscript{2}\\
  $^1$University of Science and Technology of China\quad
  $^2$JD Explore Academy, JD.com Inc.\\
  $^3$Gaoling School of Artificial Intelligence, Renmin University of China\\
  $^4$Beijing Key Laboratory of Big Data Management and Analysis Methods\\
  \texttt{\nolinkurl{tqin@mail.ustc.edu.cn}, \nolinkurl{fengxiang.f.he@gmail.com},}\\ \texttt{\nolinkurl{shidingfeng@buaa.edu.cn}, \nolinkurl{hwenbing@126.com}, \nolinkurl{dacheng.tao@gmail.com}}
 }
\begin{document}

\maketitle

\begin{abstract}
Designing an incentive-compatible auction mechanism that maximizes the auctioneer's revenue while minimizes the bidders’ ex-post regret is an important yet intricate problem in economics. Remarkable progress has been achieved through learning the optimal auction mechanism by neural networks. In this paper, we consider the popular {\it additive valuation} and {\it symmetric valuation} setting; {\it i.e.}, the valuation for a set of items is defined as the sum of all items’ valuations in the set, and the valuation distribution is invariant when the bidders and/or the items are permutated. We prove that permutation-equivariant neural networks have significant advantages: the permutation-equivariance decreases the expected ex-post regret, improves the model generalizability, while maintains the expected revenue invariant. This implies that the permutation-equivariance helps approach the theoretically optimal {\it dominant strategy incentive compatible} condition, and reduces the required sample complexity for desired generalization. Extensive experiments fully support our theory. To our best knowledge, this is the first work towards understanding the benefits of permutation-equivariance in auction mechanisms. 
\end{abstract}
\section{Introduction}\label{intro}
Optimal auction design \cite{2016Twenty} has wide applications in economics, including computational advertising \citep{jansen2008sponsored}, resource allocation \citep{huang2008auction}, and supply chain \citep{2019Auction}.  
In an auction, every 
bidder has a private valuation profile 
over all items, and accordingly, submits her bid profile. An auctioneer collects the bids from all bidders, and determines a feasible item allocation 
to the bidders as well as the prices 
the bidders need to pay. Consequently, every bidder receives her utility. 
From the auctioneer's perspective, the optimal auction {mechanism} is required to maximize her {\it revenue}, defined as the sum of all bidders' payments. 
From the aspect of the bidders, the optimal auction {mechanism} needs to incentivize every bidder to bid their truthful valuation profiles (truthful bidding). 
This is summarized as the {\it dominant strategy incentive compatible} (DSIC) condition; {\it i.e.}, truthful bidding is always the dominant strategy for every bidder \cite{2007Algorithmic}.

The optimal auction mechanism can be approximated via neural networks \cite{dutting2019optimal, rahme2021permutation, duan2022context}. The ``approximation error'', or the ``distance'' to the DSIC condition, is usually measured by the {\it ex-post regret}, defined as the gap between the bidder's utility of truthful bidding and the utility when her bid profile is only the best to herself (selfish bidding), while the bid profiles of all other bidders are fixed in both cases \cite{dutting2019optimal}.
When a bidder's ex-post regret is $0$, truthful bidding is her dominant strategy. Therefore, the optimal auction design can be modeled as a linear programming problem, where the object is to maximize the expected revenue subject to the expected ex-post regret being $0$ for all bidders \cite{dutting2019optimal}. Another major consideration in learning the optimal mechanism is the generalizability to unseen data, usually measured by the {\it generalization bound}, {\it i.e.}, the upper bound of the gap between the expected revenue/ex-post regret and their empirical counterparts on the training data  \cite{dutting2019optimal}. 

In this paper, we consider the popular setting of {\it additive valuation} and {\it symmetric valuation} \cite{dutting2019optimal,rahme2021permutation,duan2022context}. The additive valuation condition defines the valuation for a set of items as the sum of the valuations for all items in this set. The symmetric valuation condition assumes the joint distribution of all bidders' valuation profiles to be invariant when bidders and/or items are permutated. This setting covers many applications in practice. For example, when items are independent, the additive valuation condition holds. Moreover, if the auction is anonymous or the order of the items is not prior-known, the symmetric valuation condition holds.



We demonstrate that permutation-equivariant models have significant advantages in learning the optimal auction mechanism as follows. \textbf{(1)} We prove that the permutation-equivariance in auction mechanisms decreases the expected ex-post regret while maintaining the expected revenue invariant. Conversely, and equivalently, the permutation-equivariance promises a larger expected revenue, when the expected ex-post regret is fixed. 
\textbf{(2)} We show that the permutation-equivariance of auction mechanisms reduces the required sample complexity for desirable generalizability. We prove that the $l_{\infty,1}$-distance between any two mechanisms in the mechanism space decreases when they are projected to the permutation-equivariant mechanism (sub-)space. This smaller distance implies a smaller covering number of the permutation-equivariant mechanism space, which further leads to a small generalization bound \cite{dutting2019optimal}.

We further provide an explanation for the learning process of non-permutation-equivariant neural networks (NPE-NNs). 
In learning the optimal auction mechanism by an NPE-NN, we show that an extra positive term exists in the quadratic penalty of the ex-post regret based on the result \textbf{(1)}. This term serves as a regularizer to penalize the ``non-permutation-equivariance''. Moreover, this regularizer also interferes the revenue maximization, and thus affects the learning performance of NPE-NNs. This further explains the advantages of permutation-equivariance in auction design.

Experiments in extensive auction settings are conducted to verify our theory. We design permutation-equivariant versions of RegretNet (RegretNet-PE and RegretNet-test) by projecting the RegretNet \cite{dutting2019optimal} to the permutation-equivariant mechanism space in the training and test stage respectively. 
The empirical results show that permutation-equivariance helps: {(1) significantly improve the revenue while maintain the same ex-post regret;
(2) record the same revenue with a significantly lower ex-post regret;
and (3) narrow the generalization gaps between the training ex-post regret and its test counterpart.} These results fully support our theory.

\textbf{Related works.}\label{rela} 
Myerson completely solves the optimal auction design problem in one-item auctions \cite{myerson1981optimal}. However, solutions are not clear when the number of bidders/items exceeds one \cite{dutting2019optimal}. 
{ 
Initial attempts have been presented on the characterization of optimal auction mechanisms \cite{manelli2006bundling,pavlov2011optimal,daskalakis2017strong} and algorithmic solutions \cite{2011Bayesian,2013The,2012An}.} Remarkable advances have been made in the required sample complexity for learning the optimal auction mechanism in various settings, including single-item auctions \cite{cole2014sample,2013Learning,huang2008auction}, multi-item single-bidder auctions \cite{2014Sampling}, combinatorial auctions \cite{balcan2016sample,2017A}, and allocation mechanisms \cite{2016A}.
Machine learning-based auction design (automated auction design) have obtained considerable progress \cite{conitzer2002complexity,conitzer2004self,sandholm2015automated}. The optimal auction design is modeled as a linear programming problem \cite{conitzer2002complexity, conitzer2004self}. However, early works suffer from the scalability issues that the number of the constraints grows exponentially when the bidder number and the item numbers increase. To address this issue, recent works propose to learn the optimal auction mechanism by deep learning. RegretNet 
is designed for multi-bidder and multi-item settings \cite{dutting2019optimal}. Then, RegretNet is developed to meet more restrictive constraints, such as the budget condition \cite{feng2018deep} and the certifying strategyproof condition \cite{curry2020certifying}. 
Rahme {\it et al.} \cite{rahme2021permutation} propose the first equivariant neural network-based auction mechanism
design method with significant empirical advantages. 
Ivanov {\it et al.} \cite{ivanov2022optimal} propose a RegretFormer which (1) introduces attention layers to RegretNet to learn permutation-equivariant auction mechanisms, and (2) adopts a new interpretable loss function to control the revenue-regret trade-off.
Duan {\it et al.} \cite{duan2022context} extend the applicable domain to contextual settings. 
All these works make remarkable contributions in designing new algorithms from the empirical aspect only. 
However, the theoretical foundations are still elusive. 
To our best knowledge, our paper is the first work on theoretically studying the benefits of permutation equivariance in auction design via deep learning.


\section{Notations and Preliminaries}\label{pre}
\textbf{Auction.}
Suppose $n$ bidders are bidding $m$ items in an auction. Every bidder $i$ has her bidder-context (feature) $x_i\in\mathcal{X}$, while every item $j$ is associated with its item-context (feature) $y_j\in\mathcal{Y}$. The bidder $i$ has a private valuation $v_{i j}\in\mathcal{V}\subset\mathbb{R}_{\ge 0}$ for the item $j$, which is sampled from a conditioned distribution $\mathbb{P}(\cdot|x_i,y_j)$. The value profile $v_i = (v_{i 1},\ldots,v_{i m})$ is unknown to the auctioneer. For the simplicity, we define $x= (x_1^T,\ldots,x_n^T)^T$, $y=(y_1,\ldots,y_m)$, $v=(v_1^T,\ldots,v_n^T)^T$, $v_{-i}=(v_1^T,\ldots,v_{i-1}^T,v_{i+1}^T,\ldots,v_n^T)^T$, and $(v_i',v_{-i})=(v_1^T,\ldots,(v_{i}')^T\ldots,v_n^T)^T$.

Every bidder submits a bid profile $b_i$ to the auctioneer according to her valuation profile. Then, the auctioneer determines a feasible item allocation $g(b,x,y)$ and corresponding payments $p(b,x,y)$ as per an auction mechanism $(g,p)$. Consequently, every bidder receives her utility as
\begin{equation*}
    u_i(v_i,b,x,y)=\sum_{j=1}^m g_{i j}(b,x,y)\cdot v_{i j}-p_i(b,x,y).
\end{equation*}
The auction mechanism $(g,p)$ consists of an allocation rule $g:\mathbb{R}^{n\times m}\times\mathcal{X}^n\times\mathcal{Y}^m\to\mathbb{R}^{n\times m}$ and a payment rule $p:\mathbb{R}^{n\times m}\times\mathcal{X}^n\times\mathcal{Y}^m\to\mathbb{R}^{n\times m}$, where $g_{i j}$ is the probability of allocating item $j$ to the bidder $i$, and $p_i=\sum_{j=1}^m p_{ij}$ is the price that the bidder $i$ should pay. To avoid allocating an item over once, the allocation rule is constrained such that $\sum_{i=1}^n g_{i j}(b,x,y)\le 1$ for all $j\in[m]$.  Every $v_i$ in our notations can be replaced by $b_i$.  
Thus, we can define the similar notations: $b_{-i}=(b_1^T,\ldots,b_{i-1}^T,b_{i+1}^T,\ldots,b_n^T)^T$, $(b_i,v_{-i})=(v_1^T,\ldots,(b_{i})^T\ldots,v_n^T)^T$ and $(v_i,b_{-i})=(b_1^T,\ldots,(v_{i})^T\ldots,b_n^T)^T$.

\textbf{Optimal auction mechanism.}
An auction mechanism $(g,p)$ is defined to be {\it dominant strategy incentive compatible} (DSIC), if truthful bidding is always a dominant strategy of every bidder; {\it i.e.},
\begin{equation*}
    u_i(v_i,(v_i,b_{-i}),x,y)\ge u_i(v_i,b,x,y),
\end{equation*}
for all $i\in [n]$, $v,b\in\mathcal{V}^{n\times m}$, $x\in\mathcal{X}^n$ and $y\in\mathcal{Y}^m$.    
In addition, an auction mechanism $(g,p)$ is called {\it individually rational} (IR), if for any bidder-contexts $x\in\mathcal{X}^n$, any item-contexts $y\in\mathcal{Y}^m$, any bidder $i\in [n]$ $x\in\mathcal{X}^n$, valuation profile and bid profile $v,b\in\mathcal{V}^{n\times m}$,  truthful bidding always leads to a non-negative utility, {\it i.e.},
\begin{equation*}
    u_i(v_i,(v_i,b_{-i}),x,y)\ge 0.
\end{equation*}
If an auction mechanism is DSIC and IR, a rational bidder with an obvious dominant strategy will play it (bidding truthfully).
Moreover, an optimal auction mechanism is required to maximize the auctioneer's expected revenue $rev=\mathbb{E}_{(v,x,y)}\big[\sum_{i=1}^n p_i(v,x,y)\big]$.

\textbf{Auction design.} The ex-post regret $reg_{i}(v,x,y)$ for the bidder $i$ is defined as
\begin{equation*}
    \max_{b_i'\in\mathcal{V}^m} u_i(v_i,(b_i',v_{-i}),x,y)-u_i(v_i,v,x,y).
\end{equation*}
An auction mechanism $(g,p)$ is DSIC, if and only if $\sum_{i=1}^n {reg}_i(v,x,y)=0$ for any value profile $v\in\mathcal{V}^{n\times m}$, bidder-context $x\in\mathcal{X}^n$, and item-context $y\in\mathcal{Y}^m$. 
Suppose the payment rule $p$ satisfies $p_i(b,x,y)\le \sum_{j=1}^mg_{i j}(b,x,y)b_{i j}$, which implies that each bidder has a non-negative utility. Then, the auction design can be modeled as a linear programming problem that maximizes the expected revenue  $\mathbb{E}_{(v,x,y)}[\sum_{i=1}^n p_i(v,x,y)]$ subject to the expected ex-post regret $\mathbb{E}_{(v,x,y)}[\sum_{i=1}^n reg_i(v,x,y)]=0$. Without loss of generality, the ex-post regret may refer to the average of all bidders' ex-post regrets. 
\begin{definition}
Suppose the network's parameter is $\omega$, and the bidder $i$'s empirical payment and ex-post regret are defined as 
$\frac{1}{L}\sum_{l=1}^L p^\omega_i(v^{(l)},x^{(l)},y^{(l)}),$
and
$\widehat{reg}_i(\omega)=\frac{1}{L}\sum_{l=1}^Lreg_i^\omega(v^{(l)},x^{(l)},y^{(l)})$, 
where the sample set $\{(v^{(l)},x^{(l)},y^{(l)})\}_{l=1}^L$ is {\it i.i.d.} sampled from the following prior distribution, 
$\mathbb{P}(v,x,y)=\prod_{i,j=1}^{n,m}\mathbb{P}(v_{ij}|x_i,y_j)\mathbb{P}_{X_i}(x_i)\mathbb{P}_{Y_j}(y_j)$.
\end{definition}


\textbf{Equivariant mapping.}
We define a mapping $f$ as $G$-{\it equivariant} if 
$\psi_g\circ f=f\circ\rho_g$
for two chosen group linear representations $\rho$ and $\psi$ and any $g$ in group $G$.
\begin{definition}[Permutation-Equivariant Mapping]
A permutation-equivariant mapping is defined to be $f:\mathbb{R}^{n\times m}\to \mathbb{R}^{n\times m}$ that for any instance $x\in\mathbb{R}^{n\times m}$, and permutation matrices $\sigma_n\in\mathbb{R}^{n\times n}$ and $\sigma_m\in\mathbb{R}^{m\times m}$, we have $f(\sigma_nx\sigma_m)=\sigma_nf(x)\sigma_m$.
\end{definition}
In this paper, we consider the bidder-permutation $\sigma_n\in \mathbb{R}^{n\times n}$ and item-permutation $\sigma_m\in \mathbb{R}^{m\times m}$. Specifically, we define a mapping $f$ is bidder-symmetric or item-symmetric, if $f(\sigma_n x)=\sigma_n f(x)$ or $f(x \sigma_m)=f(x)\sigma_m$, respectively. Moreover, we define an auction mechanism $(g,p)$ as bidder-symmetric or item-symmetric, if the allocation rule $g$ and the payment rule $p$ are both bidder-symmetric or item-symmetric. 

\textbf{Orbit averaging.}
For any feature mapping $f:\mathcal{F} \to \mathcal{G}$, the orbit averaging $\mathcal{Q}$ on $f$ is defined as
   $ \mathcal{Q}f=\frac{1}{|G|}\sum_{g\in G}\psi_{g}^{-1}\circ f\circ \rho_g,$
where $\rho$ and $\psi$ are two chosen group representations acting on the feature spaces $\mathcal{F}$ and $\mathcal{G}$, respectively. 
Orbit averaging can project any mapping to be equivariant:
\begin{proposition}\label{pro}
Orbit averaging $\mathcal{Q}$ is a projection to the equivariant mapping space $\{f:\psi\circ f=f\circ \rho\}$, i.e., $\psi\circ\mathcal{Q}f=\mathcal{Q}f\circ\rho$ and $\mathcal{Q}^2=\mathcal{Q}$. In particular, if $f$ is already equivariant, then $\mathcal{Q}f=f$.
\end{proposition}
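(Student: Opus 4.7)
I would verify that $\mathcal{Q}$ satisfies the two defining properties of a linear projection onto the equivariant subspace: (P1) the image of $\mathcal{Q}$ lies in that subspace, i.e., $\mathcal{Q}f$ is always equivariant, and (P2) $\mathcal{Q}$ fixes every equivariant $f$, i.e., $\mathcal{Q}f=f$ whenever $\psi_g\circ f=f\circ\rho_g$ for all $g\in G$. Idempotence $\mathcal{Q}^2=\mathcal{Q}$ then drops out by applying (P2) to $\mathcal{Q}f$, which (P1) has already shown to be equivariant. The only structural ingredients I would use are that $\rho,\psi$ are group homomorphisms (hence $\psi_a\circ\psi_b=\psi_{ab}$ and $\psi_g^{-1}=\psi_{g^{-1}}$, likewise for $\rho$) and that for every fixed $g'\in G$ the map $g\mapsto g'g^{-1}$ is a bijection of $G$, so the uniform average $\tfrac{1}{|G|}\sum_{g\in G}$ is invariant under such reindexing.

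\textbf{Step 1 (proof of P1).} Fix any $g'\in G$ and compute $\psi_{g'}\circ\mathcal{Q}f$ directly by pulling $\psi_{g'}$ inside the average and merging it with $\psi_g^{-1}$ via the homomorphism property, producing terms of the form $\psi_{g'g^{-1}}\circ f\circ\rho_g$. Reindexing by $h=g'g^{-1}$ (so $g=h^{-1}g'$ and $\rho_g=\rho_{h^{-1}}\circ\rho_{g'}$), I can factor $\rho_{g'}$ out on the right. After a final relabeling $k=h^{-1}$ and the identity $\psi_{k^{-1}}=\psi_k^{-1}$, the remaining average is exactly $\mathcal{Q}f$, yielding $\psi_{g'}\circ\mathcal{Q}f=\mathcal{Q}f\circ\rho_{g'}$ for every $g'\in G$. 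This is the equivariance relation for $\mathcal{Q}f$.

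\textbf{Step 2 (proof of P2, idempotence, and the ``in particular'' clause).} If $f$ is already equivariant, each summand collapses via $\psi_g^{-1}\circ f\circ\rho_g=\psi_g^{-1}\circ(\psi_g\circ f)=f$, so the uniform average returns $\mathcal{Q}f=f$; this is simultaneously the ``in particular'' statement. Applying this fact to $\mathcal{Q}f$ (which is equivariant by Step 1) gives $\mathcal{Q}(\mathcal{Q}f)=\mathcal{Q}f$, i.e., $\mathcal{Q}^2=\mathcal{Q}$.

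\textbf{Main difficulty.} There is no deep obstacle — this is the standard Reynolds/orbit-averaging trick for finite groups — but the one point that demands care is the index substitution in Step 1: the inverses sit on \emph{opposite} sides of $f$ in the summand, so one must consistently use $\psi_g^{-1}=\psi_{g^{-1}}$ and $\rho_{h^{-1}g'}=\rho_{h^{-1}}\circ\rho_{g'}$ and explicitly check that $g\mapsto g'g^{-1}$ is a bijection of $G$ before swapping the summation variable; otherwise one risks producing $\rho_{g'}$ on the wrong side of the composition and losing the equivariance structure. Because $G$ is finite and the averaging uniform, convergence and measurability issues do not arise, and linearity of $\mathcal{Q}$ is immediate from its definition.
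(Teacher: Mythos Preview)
Your proposal is correct and follows essentially the same approach as the paper: both prove equivariance of $\mathcal{Q}f$ by a reindexing of the group sum (you start from $\psi_{g'}\circ\mathcal{Q}f$ and substitute $h=g'g^{-1}$, while the paper starts from $\mathcal{Q}f\circ\rho_g$ and substitutes $h\mapsto hg$, which are the same trick from opposite ends), and both obtain $\mathcal{Q}f=f$ for equivariant $f$ by collapsing each summand via $\psi_g^{-1}\circ\psi_g\circ f=f$. Your explicit derivation of $\mathcal{Q}^2=\mathcal{Q}$ from (P1)+(P2) is slightly more spelled out than the paper's, but there is no substantive difference.
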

Moreover, $\mathcal{Q}u$ and $\mathcal{Q}{reg}$ refer to the utility and the ex-post regret induced by $\mathcal{Q}g$ and $\mathcal{Q}p$. For the simplicity, we denote the orbit averagings that modify the auction mechanism to be bidder-symmetric, item-symmetric, and bidder/item-symmetric by bidder averaging $\mathcal{Q}_1$, item averaging $\mathcal{Q}_2$, and {bidder-item aggregated averaging} $\mathcal{Q}_3$. {Besides, a detailed proof of the feasibility of the projected mechanisms can be found in Appendix \ref{A.1}.}

\textbf{Hypothesis complexity.} The generalizatbility to unseen data is usually measured by the generalization bound, which depends on the hypothesis set's complexity. To characterize the complexity of the hypothesis set, we introduce the following definitions of {\it covering number} $\mathcal{N}_{\infty,1}$ and its corresponding distance $l_{\infty,1}$. Based on the covering number, we can obtain a generalization bound in Theorem \ref{gen}.
\begin{definition}[$l_{\infty,1}$-distance]
Let $\mathcal{X}$ be a feature space and $\mathcal{F}$ a space of functions from $\mathcal{X}$ to $\mathbb{R}^n$. The $l_{\infty,1}$-distance on the space $\mathcal{F}$ is defined as 
$l_{\infty,1}(f,g)=\max_{x\in \mathcal{X}}(\sum_{i=1}^n|f_i(x)-g_i(x)|)$.
\end{definition}
\begin{definition}[Covering number]
Covering number $\mathcal{N}_{\infty,1}(\mathcal{F},r)$ is the minimum number of balls with radius $r$ that can cover $\mathcal{F}$ under $l_{\infty,1}$-distance.
\end{definition}

\section{Theoretical Results}\label{main}
This section presents the theoretical results. {For simplicity, we view $p=(p_1,\dots,p_n)^T$ as a $n\times 1$ matrix to present the prices the bidders should pay.} We first prove that the permutation-equivariance induces the same expected revenue and a smaller expected ex-post regret in Section \ref{revenue and regret}. 
Next in Section \ref{gener}, we prove that the permutation-equivariant mechanism space has a smaller covering number, which promises a smaller required sample complexity and a better generalization. Detailed proofs are omitted from the main text and given in supplementary materials due to space limitation. 

\subsection{Benefits for Revenue and Ex-Post Regret}\label{revenue and regret}

In this section, we discuss the benefits for the revenue and the ex-post regret in the conditions of bidder-symmetry and item-symmetry separately, and then discuss the benefits when both of them hold. Based on these results, we also study the learning process of non-permutation-equivariant neural networks for auction design.



\subsubsection{Benefits in the Bidder/Item-Symmetry Condition}\label{bidder}
When the bidders come from the same distribution, the joint valuation distribution $f$ is invariant under bidder-permutation, {\it i.e.} $f(\sigma_n v,\sigma_n x,y)=f(v,x,y)$ for any $\sigma_n\in S_n$. Meanwhile, when the items are indistinguishable, the joint distribution $f$ is invariant under item-permutation, {\it i.e.}, $f(v\sigma_m,x,y\sigma_m)=f(v,x,y)$ for any $\sigma_m\in S_m$. Both conditions do not always hold simultaneously. In this section, we study them separately. 

To measure the ``non-permutation-equivariance'' of the mechanism, we introduce the conception of {\it regret gap} between the projected mechanism and the original mechanism as below,
\begin{equation*}
	\Delta_\cdot(g,p;v,x,y)=\max_{v'\in\mathcal{V}^{n\times m}}\sum_{i=1}^nu_i(v_i,(v_i',v_{-i}),x,y)-\max_{v'\in\mathcal{V}^{n\times m}}\sum_{i=1}^n[\mathcal{Q}_\cdot{u}]_i(v_i,(v_i',v_{-i}),x,y),
\end{equation*}
{where $v$ is the valuation profiles, $v_i$ is the valuation profile of bidder $i$, $x$ is the bidder-context, $y$ is the item-context, and the orbit averaging $\mathcal{Q}_\cdot$ can be the bidder averaging $\mathcal{Q}_1$ or the item averaging $\mathcal{Q}_2$.}

The bidder averaging $\mathcal{Q}_1$ and the item averaging $\mathcal{Q}_2$ acting on the allocation rule $g$ and the payment rule $p$, respectively, are as below,
\begin{gather*}
	\mathcal{Q}_1{g}(v,x,y)=\frac{1}{n!}\sum_{\sigma_n\in S_n}\sigma_n^{-1}g(\sigma_n v,\sigma_n x,y),~~ \mathcal{Q}_1{p}(v,x,y)=\frac{1}{n!}\sum_{\sigma_n\in S_n}\sigma_n^{-1}p(\sigma_n v,\sigma_n x,y),\\
\mathcal{Q}_2{g}(v,x,y)=\frac{1}{m!}\sum_{\sigma_m\in S_m}g(v \sigma_m,x,y\sigma_m)\sigma_m^{-1}, ~~\text{and}~~ \mathcal{Q}_2{p}(v,x,y)=\frac{1}{m!}\sum_{\sigma_m\in S_m}p(v\sigma_m,x,y\sigma_m).
\end{gather*}

We thus can prove the following theorem that characterizes the benefits of permutation-equivariance for revenue and ex-post regret in the condition of bidder/item-symmetry.
\begin{theorem}[Benefits for revenue and ex-post regret in the condition of bidder/item-symmetry]\label{the}
	When the valuation distribution is invariant under permutations of bidders/items, the projected mechanism has the same expected revenue and a smaller expected ex-post regret, that is,
	\begin{gather}\label{equ1}
		\mathbb{E}_{(v,x,y)}\bigg[\sum\limits_{i=1}^n [\mathcal{Q}_\cdot{p}]_i(v,x,y)\bigg]=
		\mathbb{E}_{(v,x,y)}\bigg[\sum\limits_{i=1}^n{p}_i(v,x,y)\bigg], ~~\text{and}\\
		\label{equ2}
		\mathbb{E}_{(v,x,y)}\bigg[\sum\limits_{i=1}^n{reg}_i(v,x,y)\bigg]-
		\mathbb{E}_{(v,x,y)}\bigg[\sum\limits_{i=1}^n[\mathcal{Q}_\cdot{reg}]_i(v,x,y)\bigg]
		=\mathbb{E}_{(v,x,y)}\Big[\Delta_\cdot(g,p;v,x,y)\Big]\ge 0,
	\end{gather}
	where $p$ is the payment rule, $reg$ is the ex-post regret, and $\mathcal{Q}_\cdot$ is the bidder/item averaging.
\end{theorem}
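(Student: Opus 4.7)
The plan is to prove both equalities and inequalities by (i) expanding each orbit average explicitly, (ii) exploiting that summing over a permuted index is the same as summing over the original index, and (iii) using the permutation-invariance of the joint density of $(v,x,y)$ to absorb the permutations inside the expectation. I carry out the argument for the bidder averaging $\mathcal{Q}_1$; the item case is analogous with $\sigma_n$ replaced by $\sigma_m$ acting on the right and invariance $f(v\sigma_m,x,y\sigma_m)=f(v,x,y)$.

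For the revenue equality \eqref{equ1}, I first observe that because $\sigma_n^{-1}$ just permutes the entries of the payment vector, we have $\sum_{i}[\sigma_n^{-1}p(\sigma_n v,\sigma_n x,y)]_i=\sum_{i}p_i(\sigma_n v,\sigma_n x,y)$. Hence $\sum_i[\mathcal{Q}_1 p]_i(v,x,y)=\tfrac{1}{n!}\sum_{\sigma_n}\sum_i p_i(\sigma_n v,\sigma_n x,y)$. Taking the expectation and performing the change of variables $(v,x)\mapsto(\sigma_n^{-1}v,\sigma_n^{-1}x)$, which leaves the joint density invariant by the bidder-symmetry assumption, each summand reduces to $\mathbb{E}[\sum_i p_i(v,x,y)]$, yielding \eqref{equ1}.

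For \eqref{equ2}, I first algebraically decompose
\begin{equation*}
\sum_i \mathrm{reg}_i - \sum_i [\mathcal{Q}_1 \mathrm{reg}]_i
= \Delta_1(g,p;v,x,y) + \Big(\sum_i[\mathcal{Q}_1 u]_i(v_i,v,x,y) - \sum_i u_i(v_i,v,x,y)\Big),
\end{equation*}
using the definition $\mathrm{reg}_i=\max_{v_i'}u_i(v_i,(v_i',v_{-i}),x,y)-u_i(v_i,v,x,y)$. The truthful-utility bracket equals $\sum_i \sum_j g_{ij}v_{ij}-\sum_i p_i$ minus its $\mathcal{Q}_1$ counterpart. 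By the same relabeling trick as above, plus the fact that permuting the rows of $g$ by $\sigma_n^{-1}$ and the valuations by $\sigma_n$ preserves the bilinear sum $\sum_{i,j}g_{ij}v_{ij}$, I can write the bracket as $\tfrac{1}{n!}\sum_{\sigma_n}$ of a quantity of the form $F(\sigma_n v,\sigma_n x,y)-F(v,x,y)$, whose expectation vanishes again by bidder-symmetric invariance of the density. This reduces \eqref{equ2} to showing $\mathbb{E}[\Delta_1]\geq 0$.

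For the nonnegativity of the expected regret gap, the key step is to bound $\max_{v_i'}[\mathcal{Q}_1 u]_i(v_i,(v_i',v_{-i}),x,y)$ from above by moving the maximum inside the group average. Writing $[\mathcal{Q}_1 u]_i(v_i,b,x,y)=\tfrac{1}{n!}\sum_{\sigma_n}u_{\sigma_n(i)}(v_i,\sigma_n b,\sigma_n x,y)$ and using $\max\tfrac{1}{n!}\sum\leq\tfrac{1}{n!}\sum\max$, I then re-index with $k=\sigma_n(i)$: under $\sigma_n$, the deviation $v_i'$ sits in slot $k$ of the permuted bid profile while the remaining slots carry $[\sigma_n v]_{-k}$, and the ``true valuation'' of bidder $k$ is $[\sigma_n v]_k$. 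Hence the per-$\sigma_n$ summand becomes exactly $\sum_k\max_{b_k'}u_k([\sigma_n v]_k,(b_k',[\sigma_n v]_{-k}),\sigma_n x,y)$, which is the sum of individual maximum deviation utilities at the permuted input. Taking the expectation and invoking the distributional invariance one last time, each such term equals $\mathbb{E}[\sum_i\max_{v_i'}u_i(v_i,(v_i',v_{-i}),x,y)]$, giving $\mathbb{E}[\Delta_1]\geq 0$.

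The main obstacle is the bookkeeping in the final step: correctly identifying that, after relabeling $k=\sigma_n(i)$, the maximization variables originally attached to different bidders $i$ collect into a single sum $\sum_k\max_{b_k'}(\cdot)$ over bidders in the permuted world without double-counting or leaving coupled constraints. Once the index shuffling is stated carefully, the inequality $\max\sum\leq\sum\max$ and distributional invariance close the argument, and the analogous reasoning with $\sigma_m$ acting on items gives the corresponding result for $\mathcal{Q}_2$.
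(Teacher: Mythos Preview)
Your proposal is correct and follows essentially the same approach as the paper: expand the orbit average, relabel the permuted indices inside the sum, invoke the permutation-invariance of the joint density to absorb the permutation into the expectation, and apply $\max\sum\le\sum\max$ for the regret gap. The only organizational difference is that the paper separates the argument into (Step 2) showing $\mathbb{E}[\sum_i[\mathcal{Q}_1u]_i(v_i,v,x,y)]=\mathbb{E}[\sum_i u_i(v_i,v,x,y)]$ and (Step 3) bounding the maximized utilities, whereas you write the pointwise identity $\sum_i reg_i-\sum_i[\mathcal{Q}_1 reg]_i=\Delta_1+(\text{truthful-utility bracket})$ first and then handle each piece; the content is the same.
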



A smaller expected ex-post regret implies this mechanism is closer to the {\it dominant strategy incentive compatible} condition. Conversely, and equivalently, when the expected ex-post regrets are fixed, the projected auction mechanism has a larger expected revenue. 
For any auction mechanism, in the bidder/item-symmetry condition, we can project it through the bidder/item averaging. 

\begin{remark}

The mechanism space can be decomposed into the direct sum of the permutation-equivariant mechanism space $\{\mathcal{M}:\mathcal{Q}\mathcal{M}=\mathcal{M}\}$ and the complementary space $\{\mathcal{N}:\mathcal{Q}\mathcal{N}=0\}$ \cite{elesedy2021provably}. Thus, a mechanism $\mathcal{M}$ has a unique decomposition: $\mathcal{M}=\mathcal{Q}\mathcal{M}+\mathcal{N}$. The pure permutation-equivariant part $\mathcal{Q}\mathcal{M}$ contains all and only the ``permutation-equivariance'' 
of the mechanism $\mathcal{M}$. The pure non-permutation-equivariant part $\mathcal{N}$ is independent from the permutation-equivairance. 
In this way, we may study the influence of permutation-equivariance by comparing the mechanism $\mathcal{M}$ and its permutation equivariant part $\mathcal{Q}\mathcal{M}$.
\end{remark}

\subsubsection{Interplay between Bidder-Symmetry and Item-Symmetry.}\label{modify}
If the valuation distribution is invariant under both bidder-permutation and item-permutation, we can project the mechanism to be permutation-equivariant with respect to both bidder and item in two steps
(by mapping $\mathcal{Q}_1\circ\mathcal{Q}_2$ or mapping $\mathcal{Q}_2\circ\mathcal{Q}_1$). Consequently, the projected mechanism has the same expected revenue and a smaller expected ex-post regret. Equivalently, we can also project an auction mechanism to be bidder-symmetric and item-symmetric immediately by the bidder-item aggregated averaging $\mathcal{Q}_3$ as below,
\begin{gather*}
    \mathcal{Q}_3{g}(v,x,y)=\frac{1}{n!m!}{\sum\limits_{\sigma_n\in S_n}\sum\limits_{\sigma_m\in S_m}}\sigma_n^{-1}g(\sigma_n v \sigma_m, \sigma_n x,y\sigma_m)\sigma_m^{-1},~~\text{and}\\
    \mathcal{Q}_3{p}(v,x,y)=\frac{1}{n!m!}{\sum\limits_{\sigma_n\in S_n}\sum\limits_{\sigma_m\in S_m}}\sigma_n^{-1}p(\sigma_n v \sigma_m, \sigma_n x,y\sigma_m).
\end{gather*}

We can prove that the bidder-item aggregated averaging $\mathcal{Q}_3$ is the composition of the orbit averaging operators $\mathcal{Q}_1$ and $\mathcal{Q}_2$, as shown in the following lemma. This lemma shows that the order of $\mathcal{Q}_1$ and $\mathcal{Q}_2$ would not influence their composition.
\begin{lemma}\label{lemma:composition}
The bidder-item aggregated averaging is the composition of bidder averaging and item averaging:
    $\mathcal{Q}_3=\mathcal{Q}_1\circ \mathcal{Q}_2=\mathcal{Q}_2\circ \mathcal{Q}_1.$
\end{lemma}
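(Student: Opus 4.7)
The plan is to unfold the two compositions directly from their definitions and observe that bidder permutations (which act on the left of matrices in $\mathbb{R}^{n\times m}$) and item permutations (which act on the right) commute as matrix operations. Since $\mathcal{Q}_1$ only touches the first index (it left-multiplies the argument $v$ and output by $\sigma_n$ and $\sigma_n^{-1}$, and simultaneously permutes $x$) while $\mathcal{Q}_2$ only touches the second index (right-multiplication by $\sigma_m$, $\sigma_m^{-1}$, and permutation of $y$), their actions on arguments and on outputs are independent. This is the core reason why their composition factors into a single double sum.

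First I would write out $\mathcal{Q}_1(\mathcal{Q}_2 g)(v,x,y)$ by substituting the inner definition: $\mathcal{Q}_1(\mathcal{Q}_2 g)(v,x,y) = \frac{1}{n!}\sum_{\sigma_n} \sigma_n^{-1} (\mathcal{Q}_2 g)(\sigma_n v,\sigma_n x,y)$, then expand $\mathcal{Q}_2 g$ at the shifted arguments to get $\frac{1}{n!m!}\sum_{\sigma_n,\sigma_m}\sigma_n^{-1} g(\sigma_n v \sigma_m, \sigma_n x, y\sigma_m)\sigma_m^{-1}$, which is exactly $\mathcal{Q}_3 g(v,x,y)$. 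Then I would do the symmetric calculation for $\mathcal{Q}_2(\mathcal{Q}_1 g)$: since $\sigma_m^{-1}$ on the right commutes with $\sigma_n^{-1}$ on the left, and evaluating $\mathcal{Q}_1 g$ at $(v\sigma_m,x,y\sigma_m)$ produces $\sigma_n v \sigma_m$ inside $g$, the two orders of summation yield the same expression. The same argument applies verbatim to the payment rule $p$ (the only difference is that $p$ in our notation is an $n\times 1$ object, so only the left $\sigma_n^{-1}$ appears and the right $\sigma_m^{-1}$ drops out as in the definition of $\mathcal{Q}_3 p$).

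The only genuine obstacle is a bookkeeping one: making sure that the left and right actions of the permutation matrices in the definitions of $\mathcal{Q}_1$ and $\mathcal{Q}_2$ really do commute when composed, i.e. that $\sigma_n^{-1}(g(\sigma_n v\sigma_m,\sigma_n x,y\sigma_m)\sigma_m^{-1}) = (\sigma_n^{-1} g(\sigma_n v\sigma_m,\sigma_n x,y\sigma_m))\sigma_m^{-1}$. This is immediate from the associativity of matrix multiplication, so no deeper argument is required; invariance of the summation indices $\sigma_n$ and $\sigma_m$ under reordering the two finite sums completes the proof.
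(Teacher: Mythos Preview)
Your proposal is correct and follows essentially the same approach as the paper: both unfold the two compositions directly from the definitions of $\mathcal{Q}_1$, $\mathcal{Q}_2$, and $\mathcal{Q}_3$ and observe that the resulting double sums coincide. The paper phrases the key step as linearity of the averaging operators, while you phrase it as commutativity of left and right matrix multiplication (associativity), but this is the same computation.
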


Based on this lemma, we can prove the following theorem on the benefits of permutation-equivariance for revenue and ex-post regret in the condition of both bidder-symmetry and item-symmetry.

\begin{theorem}[Benefits for revenue and ex-post regret in the condition of both bidder-symmetry and item-symmetry]\label{theorem plus}
When the valuation distribution is invariant under both item-permutation and item-permutation, then the projected mechanism has a same expected revenue and a smaller expected ex-post regret, that is,
\begin{gather*}
    \mathbb{E}_{(v,x,y)}\bigg[\sum\limits_{i=1}^n \mathcal{Q}_3{p}_i(v,x,y)\bigg]=
    \mathbb{E}_{(v,x,y)}\bigg[\sum\limits_{i=1}^n{p}_i(v,x,y)\bigg] ~~\text{and}\\
    \mathbb{E}_{(v,x,y)}\bigg[\sum\limits_{i=1}^n{reg}_i(v,x,y)\bigg]-\mathbb{E}_{(v,x,y)}\bigg[\sum\limits_{i=1}^n[\mathcal{Q}_3{reg}]_i(v,x,y)\bigg]=\mathbb{E}_{(v,x,y)}\Big[\Delta_3(g,p;v,x,y)\Big]\ge 0,
\end{gather*}
where $p$ is the payment rule, $reg$ is the ex-post regret, and $\mathcal{Q}_3$ is the bidder-item aggregated averaging.
\end{theorem}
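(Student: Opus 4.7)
The plan is to reduce the theorem to Theorem~\ref{the} by recognizing $\mathcal{Q}_3$ as orbit averaging under the product group $G=S_n\times S_m$ acting on $(v,x,y)$ via $(\sigma_n,\sigma_m)\cdot(v,x,y)=(\sigma_n v\sigma_m,\sigma_n x,y\sigma_m)$. Under the joint bidder-symmetry and item-symmetry assumption, the valuation distribution is invariant under this product action, and by Appendix~\ref{A.1} the projected mechanism is feasible. The two-step pattern from Theorem~\ref{the} — one step for the revenue identity, one step for the regret inequality — then carries through with $G$ replacing $S_n$ or $S_m$.

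For the revenue identity I would start from the definition of $\mathcal{Q}_3 p$ and use that row-permutation preserves row sums, so $\sum_i[\mathcal{Q}_3 p]_i(v,x,y)=\frac{1}{n!\,m!}\sum_{\sigma_n,\sigma_m}\sum_i p_i(\sigma_n v\sigma_m,\sigma_n x,y\sigma_m)$. Taking expectations and invoking the product-invariance of the distribution on each summand collapses the average back to $\mathbb{E}[\sum_i p_i(v,x,y)]$, which is exactly the first claim.

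For the regret identity and inequality, I would split each $reg_i$ into a misbidding piece $\max_{v'}u_i(v_i,(v_i',v_{-i}),x,y)$ and a truthful piece $u_i(v_i,v,x,y)$. The truthful piece averages out exactly in expectation by distributional invariance, mirroring the revenue argument. The misbidding piece is controlled by the subadditivity $\max_{v'}\frac{1}{|G|}\sum_{g\in G}(\cdot)\le \frac{1}{|G|}\sum_{g\in G}\max_{v'}(\cdot)$ applied inside the expectation; one more use of invariance turns the right-hand average into $\mathbb{E}[\max_{v'}\sum_i u_i]$, yielding $\mathbb{E}[\max_{v'}\sum_i[\mathcal{Q}_3 u]_i]\le \mathbb{E}[\max_{v'}\sum_i u_i]$. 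Subtracting the truthful and misbidding pieces produces the stated identity $\mathbb{E}[\sum_i reg_i]-\mathbb{E}[\sum_i[\mathcal{Q}_3 reg]_i]=\mathbb{E}[\Delta_3(g,p;v,x,y)]$ together with its non-negativity.

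An alternative route is to iterate Theorem~\ref{the}: apply $\mathcal{Q}_2$ first under item-symmetry, then apply $\mathcal{Q}_1$ to the resulting mechanism under bidder-symmetry, and use Lemma~\ref{lemma:composition} to identify the composition with $\mathcal{Q}_3$. The non-negative regret gap and revenue invariance follow immediately by telescoping. The main obstacle along this route is that the combined gap $\mathbb{E}[\Delta_2(g,p)]+\mathbb{E}[\Delta_1(\mathcal{Q}_2 g,\mathcal{Q}_2 p)]$ does not obviously coincide with $\mathbb{E}[\Delta_3(g,p)]$, since $\Delta_3$ involves a single outer maximization against $\mathcal{Q}_3 u$ rather than two nested averaged maxima. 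This telescoping bookkeeping is the only subtle point in the argument, and the product-group perspective in the previous paragraphs sidesteps it cleanly, so I would present that as the main proof and relegate the iterative viewpoint to a remark.
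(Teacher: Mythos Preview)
Your product-group argument is correct: the three-step pattern of Theorem~\ref{the} carries through verbatim with $G=S_n\times S_m$ once the distribution is assumed invariant under the full product action, and this yields both the revenue identity and the regret inequality directly.

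The paper, however, takes precisely your ``alternative route'' as its actual proof: it uses Lemma~\ref{lemma:composition} to write $\mathcal{Q}_3=\mathcal{Q}_2\circ\mathcal{Q}_1$ and then applies Theorem~\ref{the} twice in succession. Your stated obstacle on that route is not real. The telescoping identity
\[
\Delta_3(g,p;v,x,y)=\Delta_1(g,p;v,x,y)+\Delta_2(\mathcal{Q}_1 g,\mathcal{Q}_1 p;v,x,y)
\]
(equivalently with the roles of $\mathcal{Q}_1,\mathcal{Q}_2$ swapped, as you wrote it) holds \emph{pointwise}, not merely in expectation. Each $\Delta_\cdot$ is by definition a difference of two terms of the form $\max_{v'}\sum_i[\cdot\,u]_i$, and since utility is linear in $(g,p)$, Lemma~\ref{lemma:composition} gives $\mathcal{Q}_3 u=\mathcal{Q}_2(\mathcal{Q}_1 u)$; the intermediate $\max_{v'}\sum_i[\mathcal{Q}_1 u]_i$ then cancels telescopically. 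There are no ``nested averaged maxima'': $\mathcal{Q}_3 u$ is a single averaged utility and the outer $\max_{v'}$ is the same in every $\Delta$. The paper records exactly this pointwise decomposition in Section~\ref{modify} and uses it (in expectation) in the appendix proof.

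So both routes are short and correct. Your product-group version is self-contained but repeats the three-step computation; the paper's iterative version is terser, leaning on Lemma~\ref{lemma:composition} and two black-box invocations of Theorem~\ref{the}.
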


The difference between bidder-symmetry and item-symmetry is significant in practice. For example, for a symmetric valuation distribution, when the mechanism is already bidder-symmetric but not item-symmetric, we can project it to be item-symmetric to gain an extra benefit from item-symmetry. That means, the two regret gaps induced by $
\mathcal{Q}_1$ and $\mathcal{Q}_2$ are ``additive'' as below,
\begin{align*}
    &\Delta_3(g,p;v,x,y)
    =\Delta_1(g,p;v,x,y)+\Delta_2(\mathcal{Q}_1g,\mathcal{Q}_1p;v,x,y).
\end{align*}

In general, 
    $\mathbb{E}[\Delta_2(g,p;v,x,y)]\not=\mathbb{E}[\Delta_2(\mathcal{Q}_1g,\mathcal{Q}_1p;v,x,y)]$
and thus 
    $\mathbb{E}[\Delta_3(g,p;v,x,y)]\not=\mathbb{E}[\Delta_1(g,p;v,x,y)]+\mathbb{E}[\Delta_2(g,p;v,x,y)].$
Thus, the benefits from bidder-symmetry and item-symmetry are ``additive'' but not strictly ``independent''.

\subsubsection{Insights on Training Non-Permutation-Equivariant Mechanism}
Because the expected revenue is always the same for the original mechanism and  the projected permutation-equivariant mechanism, we only consider the gradient caused by the expected ex-post regret. We can decompose the original expected ex-post regret into the sum of the expected ex-post regret of the projected mechanism and the expectation of the regret gap as below,
\begin{equation*}
	\mathbb{E}_{(v,x,y)}\bigg[\sum\limits_{i=1}^n{reg}_i(v,x,y)\bigg]=\mathbb{E}_{(v,x,y)}\bigg[\sum\limits_{i=1}^n[\mathcal{Q}_{3}{reg}]_i(v,x,y)\bigg]+\mathbb{E}_{(v,x,y)}\Big[\Delta_{3}(g,p;v,x,y)\Big].
\end{equation*}

The regret gap $\Delta_{3}(g,p;\cdot)$ follows from the ``non-permutation-equivariance'' of the mechanism $\mathcal{M}$. When the distance $l(\mathcal{M},\mathcal{Q}\mathcal{M})$ tends to $0$, the regret gap converges to $0$. 
When the auction mechanism has a negligible ex-post regret, the expectation of the regret gap is also close to $0$. That means, the mechanism is close to being permutation-equivariant.
However, even using a symmetric dataset or adopting data augmentation in training, the learned mechanism will not be permutation-equivariant in general \cite{lyle2020benefits}. As a result, to achieve negligible ex-post regret, the non-permutation-equivariant models need to learn more samples to approach permutation-equivariance. That is because the non-permutation-equivariant part (expected regret gap) would mislead the gradient of the expected regret but have no benefit to the expected revenue and the expected ex-post regret.

On the other hand, the regret gap can be viewed as a regularizer in the ex-post regret to penalize the ``non-permutation-equivariance'' of the mechanism. When the optimizer tries to minimize the ex-post regret, the auction mechanism approaches to be permutation-equivariant. Therefore, if the mechanism achieves a negligible ex-post regret, it is almost to be permutation-equivariant. This result can explain why RegretNet struggles to find permutation-equivariant auction mechanisms \citep{rahme2021permutation}. However, in complex settings, it will be harder for non-permutation-equivariant models to approach the negligible ex-post regret.  It can explain why the permutation-equivariant models show a significant improvement in complex settings, compared with that they have similar performances in simple settings \cite{duan2022context,ivanov2022optimal}, which shows the great importance of adopting permutation-equivariant models in complex settings.

\subsection{Benefits for Generalization}\label{gener}
In this section, we study permutation-equivariance from the aspect of generalizability \cite{mohri2018foundations, he2020recent}, which characterizes the performance gap of a learned mechanism on collected training data and unseen data. 


We first study the covering number of the permutation-equivariant mechanism space. {Let $\mathcal{U}=\{u^\omega:\omega\in\Omega\}$ and $\mathcal{P}=\{p^\omega:\omega\in\Omega\}$ be the spaces of all possible utilities and payment rules, and $\mathcal{Q}_\cdot{\mathcal{U}}=\{\mathcal{Q}_\cdot{u}:u\in\mathcal{U}\}$ and $\mathcal{Q}_\cdot{\mathcal{P}}=\{\mathcal{Q}_\cdot{p}:p\in\mathcal{P}\}$ the spaces of all projected utilities and payment rules. In addition, let $\mathcal{N}_{\infty,1}({\mathcal{U}},r)$ and $\mathcal{N}_{\infty,1}({\mathcal{P}},r)$ be the minimum numbers of balls with radius $r$ that can cover $\mathcal{U}$ and $\mathcal{P}$ under $l_{\infty,1}$-distance, respectively.} We obtain the following result, which indicates the projected permutation-equivariant mechanism space has smaller covering numbers. 
\begin{theorem}[Covering number of the permutation-equivariant mechanism space]\label{covering}
The space of all projected bidder-symmetric mechanisms has smaller covering numbers, that is,
\begin{equation*}
    \mathcal{N}_{\infty,1}(\mathcal{Q}_1{\mathcal{U}},r)\le\mathcal{N}_{\infty,1}(\mathcal{U},r)
~~\text{   and }~~
    \mathcal{N}_{\infty,1}(\mathcal{Q}_1{\mathcal{P}},r)\le\mathcal{N}_{\infty,1}(\mathcal{P},r).
\end{equation*}
The space of all projected item-symmetric mechanisms has smaller covering numbers, that is,
\begin{equation*}\label{covering2}
    \mathcal{N}_{\infty,1}(\mathcal{Q}_2{\mathcal{U}},r)\le\mathcal{N}_{\infty,1}(\mathcal{U},r)
~~\text{and}~~
    \mathcal{N}_{\infty,1}(\mathcal{Q}_2{\mathcal{P}},r)\le\mathcal{N}_{\infty,1}(\mathcal{P},r).
\end{equation*}
\end{theorem}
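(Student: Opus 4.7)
The strategy is to establish that each orbit-averaging operator $\mathcal{Q}_1$ and $\mathcal{Q}_2$ is $1$-Lipschitz (non-expansive) with respect to the $l_{\infty,1}$-distance, so that any minimal $r$-cover of $\mathcal{U}$ or $\mathcal{P}$ descends, element by element, to an $r$-cover of the projected space of no larger size. Granting the contraction $l_{\infty,1}(\mathcal{Q}_\cdot f,\mathcal{Q}_\cdot g)\le l_{\infty,1}(f,g)$ for $\mathcal{Q}_\cdot \in \{\mathcal{Q}_1,\mathcal{Q}_2\}$, let $\{f_1,\dots,f_N\}$ be a minimum $r$-cover of $\mathcal{U}$ with $N=\mathcal{N}_{\infty,1}(\mathcal{U},r)$. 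For any $\mathcal{Q}_\cdot f\in \mathcal{Q}_\cdot\mathcal{U}$, pick an index $k$ with $l_{\infty,1}(f,f_k)\le r$; then the contraction gives $l_{\infty,1}(\mathcal{Q}_\cdot f,\mathcal{Q}_\cdot f_k)\le r$, so $\{\mathcal{Q}_\cdot f_k\}_{k=1}^N$ is an $r$-cover of $\mathcal{Q}_\cdot\mathcal{U}$, yielding $\mathcal{N}_{\infty,1}(\mathcal{Q}_\cdot \mathcal{U},r)\le N$. The identical argument applied to $\mathcal{P}$ delivers all four inequalities in the theorem.

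For the contraction itself, write $\mathcal{Q}_\cdot h=\tfrac{1}{|G|}\sum_{\sigma}\psi_\sigma^{-1}\circ h\circ \rho_\sigma$, where $\rho_\sigma$ acts on the input by permuting bidders (or items) together with their contexts, and $\psi_\sigma$ acts on the $n$-dimensional output by the appropriate coordinate permutation (trivial for item averaging on the payment rule, a nontrivial bidder permutation for $\mathcal{Q}_1$). By the triangle inequality, at any input $\xi$,
\[
\sum_i \bigl|[\mathcal{Q}_\cdot f-\mathcal{Q}_\cdot g]_i(\xi)\bigr|
\;\le\;\frac{1}{|G|}\sum_{\sigma}\sum_i \bigl|[\psi_\sigma^{-1}(f-g)(\rho_\sigma \xi)]_i\bigr|.
\]
Because each $\psi_\sigma^{-1}$ is a coordinate permutation (or identity), it leaves the $\ell_1$-sum $\sum_i|\cdot|$ invariant, so the inner sum equals $\sum_i |(f-g)_i(\rho_\sigma \xi)|$, which is bounded by $l_{\infty,1}(f,g)$ since $\rho_\sigma \xi$ still lies in the input domain. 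Averaging over $\sigma$ and then maximizing over $\xi$ yields the contraction.

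The main obstacle is identifying the correct input and output representations $\rho,\psi$ in each of the four cases (bidder vs.\ item averaging, acting on utilities vs.\ payments), and checking that each $\psi_\sigma$ is genuinely a coordinate permutation so that the $\ell_1$-preservation step is valid. This is immediate for payments under $\mathcal{Q}_2$ (trivial output action) and for both utilities and payments under $\mathcal{Q}_1$ (where the $n$ output coordinates are permuted by $\sigma_n$), but it requires more care for item averaging on utilities, since the valuation $v_{ij}$ appears inside $u_i=\sum_{j}g_{ij}v_{ij}-p_i$. There one invokes the linearity of $u$ in the mechanism $(g,p)$ together with the linearity of $\mathcal{Q}_2$ on mechanisms to rewrite $\mathcal{Q}_2 u-\mathcal{Q}_2 u'$ as a $\sigma_m$-average of the underlying $(g-g',p-p')$ contribution, after which the same coordinate-preservation argument applies verbatim. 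Once these four representations are pinned down, both bounds in the theorem follow from a single template.
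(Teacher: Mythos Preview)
Your proposal is correct and follows essentially the same route as the paper: first establish that $\mathcal{Q}_1,\mathcal{Q}_2$ are non-expansive in the $l_{\infty,1}$-distance (via triangle inequality, $\ell_1$-invariance of coordinate permutations on the output, and bijectivity of the input action on the domain), then push a minimal $r$-cover forward through $\mathcal{Q}_\cdot$. The paper carries out the contraction by explicit case-by-case computation rather than your unified $\rho,\psi$ template, and you correctly flag the one subtle case (item averaging on utilities, where $v_i$ must also be permuted inside $u_i$), which the paper handles in the same way you suggest.
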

Intuitively, the orbit averaging $\mathcal{Q}$ narrows the distance between two mechanisms: $l(\mathcal{Q}\mathcal{M},\mathcal{Q}\mathcal{M}')\le l(\mathcal{M},\mathcal{M}')$, for any two mechanisms. Then, any $r$-cover $\mathcal{A}$ for space $\mathcal{U}$ or space $\mathcal{P}$ induces an $r$-cover $\mathcal{Q}\mathcal{A}$ for space $\mathcal{Q}\mathcal{U}$ or space $\mathcal{Q}\mathcal{P}$.

Combining with Lemma \ref{lemma:composition}, we have the following results,
	\begin{gather*}
		\mathcal{N}_{\infty,1}(\mathcal{Q}_3\mathcal{U},r)=\mathcal{N}_{\infty,1}(\mathcal{Q}_1\mathcal{Q}_2\mathcal{U},r)\le\mathcal{N}_{\infty,1}(\mathcal{Q}_2\mathcal{U},r)\le\mathcal{N}_{\infty,1}(\mathcal{U},r)~~\text{and}\\
		\mathcal{N}_{\infty,1}(\mathcal{Q}_3\mathcal{P},r)=\mathcal{N}_{\infty,1}(\mathcal{Q}_1\mathcal{Q}_2\mathcal{P},r)\le\mathcal{N}_{\infty,1}(\mathcal{Q}_2\mathcal{P},r)\le\mathcal{N}_{\infty,1}(\mathcal{P},r).
	\end{gather*}

We then prove that two generalization bounds of permutation-equivariant mechanisms, which characterize the gap between the expected revenue/ex-post regret and their empirical counterparts. Similar generalization results are existing in previous works \cite{duan2022context,dutting2019optimal}.

\begin{theorem}[Generalization bounds of permutation-equivariant mechanisms]\label{gen}
If for any bidder, her valuation satisfies that $v_i(S)\le 1$ for any $S\subset[m]$, then with probability at least $1-\delta$, we have the following inequalities with $\epsilon\ge\sqrt{\frac{9n^2}{2L}(\log\frac{4}{\delta}+\max\{\log\mathcal{N}_{\infty,1}(\mathcal{P},\frac{\epsilon}{3}),log\mathcal{N}_{\infty,1}(\mathcal{U},\frac{\epsilon}{6})\})}$,
\begin{gather}
\label{eq:gen1}
    \bigg|\mathbb{E}\bigg[\sum\limits_{i=1}^n p_i^{\omega}(v,x,y)\bigg]-\frac{1}{L}\sum\limits_{l=1}^L\sum\limits_{i=1}^n p_i^{\omega}(v^{(l)},x^{(l)},y^{(l)})\bigg|\le\epsilon,~~\text{and}\\
\label{eq:gen2}
   \bigg|\mathbb{E}\bigg[\sum\limits_{i=1}^n {reg}_i^{\omega}(v,x,y)\bigg]-\sum\limits_{i=1}^n \widehat{reg}_i(\omega)\bigg|\le\epsilon,
\end{gather}
where $L$ is the number of samples, $\mathcal{U}$ and $\mathcal{P}$ are the spaces of all possible utilities and payment rules.  
\end{theorem}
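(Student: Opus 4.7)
The plan is a textbook covering-number--based uniform-convergence argument, applied separately to the payment class $\mathcal{P}$ and the utility class $\mathcal{U}$ and then combined via a union bound. For each of \eqref{eq:gen1} and \eqref{eq:gen2} I would (i) fix a finite $l_{\infty,1}$-cover of the relevant class at a suitable sub-$\epsilon$ resolution, (ii) apply Hoeffding's inequality to the bounded i.i.d.\ sum obtained by freezing one cover element, (iii) union-bound across the cover, and (iv) close the residual gap via a Lipschitz-type estimate in $l_{\infty,1}$. Splitting the failure budget $\delta$ evenly between the two events accounts for the $\log(4/\delta)$ in the bound, and the prefactor $9n^2/(2L)$ emerges from Hoeffding applied to a random variable in $[0,n]$ after triangle-inequality losses of $\epsilon/3$.

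For \eqref{eq:gen1}, the IR constraint together with the valuation normalization $v_i([m])\le 1$ gives $0\le\sum_{i} p_i^{\omega}(v,x,y)\le n$ pointwise. Taking an $\epsilon/3$-cover $\mathcal{P}_0$ of $\mathcal{P}$, Hoeffding gives, for each fixed $p'\in\mathcal{P}_0$, a failure probability at most $2\exp\bigl(-2L\epsilon^{2}/(9n^{2})\bigr)$, and a union bound produces the factor $\mathcal{N}_{\infty,1}(\mathcal{P},\epsilon/3)$ inside the logarithm. The pointwise estimate $|\sum_{i} p_i^{\omega}(v,x,y)-\sum_{i} p_i'(v,x,y)|\le l_{\infty,1}(p^{\omega},p')\le\epsilon/3$ transfers to both the expectation and the empirical average, and two applications of the triangle inequality close the argument.

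For \eqref{eq:gen2} the same template applies to $\mathcal{U}$ at resolution $\epsilon/6$, with $\sum_{i} reg_i^{\omega}\in[0,n]$. The Lipschitz step is the delicate one: starting from $reg_i^{\omega}(v,x,y)=\max_{b_i'\in\mathcal{V}^{m}}u_i^{\omega}(v_i,(b_i',v_{-i}),x,y)-u_i^{\omega}(v_i,v,x,y)$ and using $|\max f-\max g|\le\sup|f-g|$, I obtain $|reg_i^{\omega}-reg_i^{\omega'}|\le 2\sup_{b_i'}|u_i^{\omega}(v_i,(b_i',v_{-i}),x,y)-u_i^{\omega'}(v_i,(b_i',v_{-i}),x,y)|$ pointwise; summing over $i$ and converting to the $l_{\infty,1}$-distance between $u^{\omega}$ and its nearest cover element yields a total-regret discrepancy of at most $\epsilon/3$, and the factor of $2$ inherited from the max/value structure of the regret is exactly what forces the cover radius to be $\epsilon/6$ rather than $\epsilon/3$. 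This translation from the pointwise estimate to the total-regret gap, and in particular the careful treatment of the fact that the best-response argument $b_i'$ depends on $i$, is the main technical obstacle; the remainder (combining the two cover sizes inside the $\max$, splitting $\delta$, and inverting the Hoeffding exponent for $\epsilon$) is routine bookkeeping.
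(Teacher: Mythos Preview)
Your proposal is correct and follows essentially the same route as the paper: the paper packages the two ingredients you describe into a concentration lemma (cover at $\epsilon/3$, Hoeffding on $[0,n]$, union bound over the cover) and a covering-number comparison $\mathcal{N}_{\infty,1}(\mathcal{R},2r)\le\mathcal{N}_{\infty,1}(\mathcal{U},r)$ whose proof is exactly your ``Lipschitz step'' exploiting $|\max f-\max g|\le\sup|f-g|$ plus the truthful-utility term. Applying the concentration lemma to $\mathcal{P}$ and to $\mathcal{R}$ (hence to $\mathcal{U}$ at half the radius) and splitting $\delta$ in two gives the stated bound, just as you outline.
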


Equivalently, we can rewrite this result in the form of the sample complexity,
\begin{corollary}\label{corollary}
For any $\epsilon>0$, $\delta\in (0,1)$, and mechanism parameter $\omega$, when the sample complexity $L\ge \frac{9n^2}{2\epsilon^2}\Big(\log\frac{4}{\delta}+\max\Big\{\log\mathcal{N}_{\infty,1}(\mathcal{P},\frac{\epsilon}{3}),log\mathcal{N}_{\infty,1}(\mathcal{U},\frac{\epsilon}{6})\Big\}\Big)$, with probability at least $1-\delta$, the generalization bounds, eqs. (\ref{eq:gen1}) and (\ref{eq:gen2}), hold.
\end{corollary}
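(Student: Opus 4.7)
The plan is to obtain Corollary \ref{corollary} as a direct algebraic rearrangement of Theorem \ref{gen}. The two statements are dual reformulations of one another: Theorem \ref{gen} expresses the generalization gap $\epsilon$ as a function bounded below in terms of $L, n, \delta$ and the covering numbers, while Corollary \ref{corollary} fixes $\epsilon$ and turns the same inequality around to a lower bound on the required sample size $L$. No new probabilistic machinery is needed — the probability $1-\delta$ and the event on which the bounds (\ref{eq:gen1}) and (\ref{eq:gen2}) hold are transported verbatim from Theorem \ref{gen}.

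Concretely, I would start from the hypothesis of Theorem \ref{gen},
\begin{equation*}
\epsilon \;\ge\; \sqrt{\frac{9n^2}{2L}\left(\log\frac{4}{\delta}+\max\left\{\log\mathcal{N}_{\infty,1}(\mathcal{P},\tfrac{\epsilon}{3}),\;\log\mathcal{N}_{\infty,1}(\mathcal{U},\tfrac{\epsilon}{6})\right\}\right)},
\end{equation*}
and observe that both sides are non-negative, so squaring preserves the inequality. Multiplying through by $L$ and dividing by $\epsilon^{2}$ yields the equivalent condition
\begin{equation*}
L \;\ge\; \frac{9n^2}{2\epsilon^2}\left(\log\frac{4}{\delta}+\max\left\{\log\mathcal{N}_{\infty,1}(\mathcal{P},\tfrac{\epsilon}{3}),\;\log\mathcal{N}_{\infty,1}(\mathcal{U},\tfrac{\epsilon}{6})\right\}\right),
\end{equation*}
which is exactly the hypothesis in the corollary. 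Since this manipulation is reversible, whenever $L$ meets this lower bound the hypothesis of Theorem \ref{gen} is satisfied with the given $\epsilon$, so the conclusion of Theorem \ref{gen} — namely that (\ref{eq:gen1}) and (\ref{eq:gen2}) hold with probability at least $1-\delta$ — applies verbatim.

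Because the argument is purely a change of viewpoint from an implicit bound on $\epsilon$ to an explicit bound on $L$, there is no genuine obstacle; the only point worth flagging is that the covering-number terms $\mathcal{N}_{\infty,1}(\mathcal{P},\epsilon/3)$ and $\mathcal{N}_{\infty,1}(\mathcal{U},\epsilon/6)$ depend on $\epsilon$ rather than on $L$, so the rearrangement is genuinely a closed-form bound on $L$ rather than an implicit equation. Consequently, the corollary follows in one line once Theorem \ref{gen} is in hand.
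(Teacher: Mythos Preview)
Your proposal is correct and matches the paper's approach exactly: the paper proves Theorem \ref{gen} and Corollary \ref{corollary} together, and for the corollary simply writes ``Equivalently, when the number of samples $L$ is large enough, \textit{i.e.}, $L\ge \frac{9n^2}{2\epsilon^2}(\ldots)$, then eqs.\ (\ref{p1}) and (\ref{p2}) both hold with probability at least $1-\delta$,'' which is precisely the algebraic rearrangement you describe.
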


\begin{remark}
Combining Theorem \ref{covering}, we have proved that the permutation-equivariance can improve the generalizability.
\end{remark}


\section{Experiments}\label{ex}
This section presents our experimental results. More details and results are presented in the supplementary materials.

\textbf{Model architecture.}\label{impl}
We project RegretNet \citep{dutting2019optimal} to the permutation-equivariant mechanism space via 
employing bidder-item aggregated averaging for the bidder-symmetry and item-symmetry condition. The projected model is called RegretNet-PE. {We also project the well-trained RegretNet, called RegretNet-Test.} Specifically, RegretNet is an auction mechanism defined as $(g^{\omega}, p^{\omega})$, in which both the allocation rule $g^{\omega}$ and the payment rule $p^{\omega}$ are neural networks that consist of three fully-connected layers, and $\omega$ is the overall model parameter of the auction mechanism. The detailed architecture is given in the supplementary materials. 

{\textbf{Comparison with EquivariantNet.} 
RegretNet uses two feed-forward fully-connected networks to learn the allocation rule and payment rule, respectively. We denote the weight matrix in the layer $\ell$ as $W^{(\ell)}$. Both EquivariantNet and RegretNet-PE inherit the architecture of RegretNet (with some modifications), but utilize different approaches to realize the permutation-equivariance. EquivariantNet applies parameter-sharing in every layer during training, to constrain $W^{(\ell)}$ to be equivariant. In contrast, RegretNet-PE employs orbit averaging to be permutation-equivariant. Specifically, RegretNet-PE adopts a weight matrix $I_{K}\otimes W^{(\ell)}(\rho^T_{g_{1}}\dots\rho^T_{g_K})^T$ in the first layer, weight matrices $I_{K}\otimes W^{(\ell)}$ in the following layers, and multiples a matrix $(\rho_{g_1}^{-1},\dots,\rho_{g_K}^{-1})$ to the output layer, where $K$ is the scale of the group $G=\{g_1,\dots,g_K\}$, $\rho_{g_k}$ represents the permutation operator on bidders and items, $I_K$ is an identity matrix, and $\otimes$ is the Kronecker product. It is worth noting that RegretNet-PE is only designed for verifying our theory.}

\begin{table}[t]
\centering{
\caption{{Experimental results. "$n\times m$ Uniform" refers that there are $n$ bidders and $m$ items, and the valuations are i.i.d. drawn from the uniform distribution $U[0,1]$.} To simplify, we multiply all results by a factor of $10^5$ for the ex-post regret and generalization error (GE).}
\label{table:single_item}
\setlength{\tabcolsep}{1.6mm}
\renewcommand{\arraystretch}{1.2}
\begin{tabular}{c|ccc|ccc|ccc}
\toprule
\multirow{2}{*}{Method} & \multicolumn{3}{c|}{$2\times1$ Uniform}            & \multicolumn{3}{c|}{$3\times1$ Uniform}   & \multicolumn{3}{c}{$5\times1$ Uniform}        \\
& Revenue & Regret & GE & Revenue & Regret & GE   & Revenue & Regret & GE\\          
\midrule
Optimal                 & ${0.417}$   & 0    & -  & ${0.531}$  & 0   & -  &  ${0.672}$       & 0      & -  \\
\midrule
RegretNet               & {$0.415$}   & {$17.4$} & {$6.00$} & ${0.535}$ & ${18.3}$ & $11.4$ &  $0.658$  & ${15.9}$ & ${6.40}$\\
RegretNet-Test          & ${0.415}$   & {$16.3$} & -    & $0.535$ &$13.3$ & - & $0.658$   & $6.50$ & - \\
RegretNet-PE            & {$0.420$}   & {$14.6$} & ${3.90}$  & {$0.541$} &
${16.4}$ & {$10.2$} &  ${0.677}$  & ${13.2}$ & ${5.10}$\\
\bottomrule

\end{tabular}
}
\end{table}
\begin{table}[t]
    \centering{
    \caption{{Experimental results. "$n\times m$ Uniform" refers that there are $n$ bidders and $m$ items, and the valuations are i.i.d. sampled from the uniform distribution $U[0,1]$.}}
    \label{table+}
	\setlength{\tabcolsep}{1.6mm}
	\renewcommand{\arraystretch}{1.2}
		\begin{tabular}{c|cc|cc}
			\toprule
			\multirow{2}{*}{Method} & \multicolumn{2}{c|}{$1\times2$ Uniform}     &   \multicolumn{2}{c}{$2\times 2$ Uniform}    \\
			& Revenue & Regret & Revenue & Regret\\          
			\midrule
		    RegretNet           & $0.562$   & $0.00061$   & $0.870$ & $0.00070$            \\
			EquivariantNet      & $0.551$   & $0.00013$     & $0.873$ & $0.00100$             \\
			RegretNet-Test      & $0.562$   & $0.00052$     & $0.870$ & $0.00054$             \\
			RegretNet-PE        & $0.563$   & $0.00037$     & $0.913$ & $0.00067$   \\
			\bottomrule
		\end{tabular}
    }
\end{table}

\textbf{Auction settings.}
{We first adopt the two-bidder single-item, two-bidder single-item, three-bidder single-item, and five-bidder single-item settings in the experiments that compare the learned mechanisms with theoretical optimal mechanisms.}
The optimal auction mechanism for any single-item auction is known \citep{myerson1981optimal}. We thus compare the mechanisms leaned by our method with the optimal auction mechanisms in the single-item settings. {Also, we compare RegretNet-PE and EquivariantNet in the one-bidder, two-item setting, and the two-bidder, two-item setting.}
Besides, we employ a multivariate uniform distribution $U[0,1]^m$ to model the bidder valuation profiles. 
{ In all settings, we sample 640,000 data points for training and 5,000 points for test.} Due to the space limitation, we place the results of two-bidder five-item and five-bidder three-item settings in Appendix \ref{B.2}.



\textbf{Model training.}
We optimize the auction mechanism model via solving the following optimization problem, following the standard settings \cite{dutting2019optimal, rahme2021permutation, duan2022context,ivanov2022optimal},
\begin{equation*}\label{lag}
    \mathcal{L}_{\rho}(\omega,\lambda)=-\frac{1}{L}\sum\limits_{l=1}^L\sum\limits_{i=1}^n p_i^\omega(v^{(l)},x^{(l)},y^{(l)})+\sum\limits_{i=1}^n\lambda_i\widehat{reg}_i(\omega)+\frac{\rho}{2}\Big(\sum\limits_{i=1}^n\widehat{reg}_i(\omega)\Big)^2,
\end{equation*}
where $\lambda\in\mathbb{R}^n$ is the Lagrange multiplier and $\rho>0$ is the factor of the quadratic regularization term.
During the training process, the objective function $\mathcal{L}_{\rho}(\omega,\lambda)$ is minimized via Adam with a learning rate of $0.001$ with respect to the model parameter $\omega$ and the Lagrange multiplier $\lambda$ is updated once in every 100 iterations, until the ex-post regret is smaller than $0.001$. The regularization factor $\rho$ is set to $1.0$ initially and gradually increased along the training process.
In calculating the best bid profile $v_i'$ of every bidder $i$, we first randomly initialize the bid profiles once in training and 1,000 times in test, optimize each of them individually via Adam with the same settings, and take the best one as the approximated best bidding.

\textbf{Evaluation.}
We leverage three metrics to evaluate the performance of the auction mechanism, which are:
(1) the empirical revenue $\widehat{rev}$,
(2) the empirical ex-post regret averaging across all bidders, {\it i.e.}, $\widehat{reg}=\frac{1}{n}\sum_{i=1}^n\widehat{reg}_i$,
and (3) the generalization error defined on top of regrets, {\it i.e.}, $GE = \lvert \widehat{reg}_{test} - \widehat{reg}_{train}\rvert$, where $\widehat{reg}_{test}$ and $\widehat{reg}_{train}$ are the empirical ex-post regrets during test and training, respectively.

\textbf{Computing resource. } The experiments are conducted on 1 GPU (NVIDIA\textsuperscript{\textregistered} Tesla\textsuperscript{\textregistered} V100 16GB) and 10 CPU cores (Intel\textsuperscript{\textregistered} Xeon\textsuperscript{\textregistered} Processor E5-2650 v4 @ 2.20GHz).

\textbf{Experimental results.}
We train a RegretNet and a RegretNet-PE on the training data. The well-trained RegretNet is then projected to be permutation-equivariant, denoted as ``RegretNet-Test''.
The results are collected in Tables {\ref{table:single_item} and \ref{table+}.} 

{
From Tables \ref{table:single_item} and \ref{table+}, we observe that 
(1) compared to RegretNet, RegretNet-PE has a significantly higher revenue with a lower ex-post regret, and narrows the generalization gap between the training ex-post regret and its test counterpart; (2) compared to RegretNet, RegretNet-Test receives the same revenue with a significantly lower ex-post regret; and (3) under comparable ex-post regrets, RegretNet-PE has considerably higher revenue than EquivariantNet, while all permutation-equivariant models (RegretNet-Test and RegretNet-PE) can outperform RegretNet. These results show significant benefits of permutation-equivariance on revenue, ex-post regret, and generalizability, which fully supports our theoretical findings in Theorems \ref{the}, \ref{theorem plus}, and \ref{covering}.}

\section{Conclusion and future works}\label{con}
In this paper, under {\it additive valuation} and {\it symmetric valuation} setting, we study the benefits of permutation-equivariance in auction mechanisms in two aspects: a better performance and a better generalization. First, we prove a smaller expected ex-post regret and the same expected revenue when projecting a mechanism to be permutation-equivariant. Next, we propose the permutation-equivariant mechanism space has a smaller covering number, which promises the permutation-equivariant models a better generalization. Extensive experiments are conducted to verify our theoretical results. Our results help understand the optimal auction mechanisms' characterization and the learning processes difference between non-equivariant models and equivariant models.

Beyond the additive valuation setting, an interesting direction is to extend our results to other conditions, including the combinatorial and the unit-demand auctions. Meanwhile, the understanding of the difference in the aspect of the training process between non-equivariant models and equivariant models is still elusive. 

\textbf{Social impact.}
Our results can help understand and design optimal auction mechanisms for symmetric valuation distribution. As a result, our work could inspire more near-optimal auction mechanisms and promote economic growth. No potential negative social impact is identified.

\begin{ack}
This work was supported in part by the Major Science and Technology Innovation 2030 ``New Generation Artificial Intelligence'' Key Project (No. 2021ZD0111700), the National Natural Science Foundation of China (No. 62006137), and the Beijing Outstanding Young Scientist Program (No. BJJWZYJH012019100020098).

We sincerely appreciate Hang Yu, Xiaowen Wei, Kaifan Yang, Shaopeng Fu, and Qingsong Zhang for the valuable comments and the anonymous NeurIPS reviewers for the helpful feedback.
\end{ack}

\section*{Checklist}
\begin{enumerate}
\item For all authors...
\begin{enumerate}
  \item Do the main claims made in the abstract and introduction accurately reflect the paper's contributions and scope?
    \answerYes{} Our contributions are stated in the 4th-8th sentences of the abstract, and the 4th-6th paragraphs of the introduction. The scope is described in the related work section. 
  \item Did you describe the limitations of your work?
    \answerYes{} We describes the problem setting we considered in the 3rd paragraph of the introduction. Our contributions are for this problem setting.
  \item Did you discuss any potential negative societal impacts of your work?
    \answerYes{} Our work is committed to learn the optimal auction mechanism. 
    No negative societal impact is identified.
  \item Have you read the ethics review guidelines and ensured that your paper conforms to them?
    \answerYes{} We have read the ethics review guidelines. We ensure that our paper conforms all of them.
\end{enumerate}
\item If you are including theoretical results...
\begin{enumerate}
  \item Did you state the full set of assumptions of all theoretical results?
    \answerYes{} We state all of our assumptions in the third paragraph of introduction, the first paragraph of preliminary, Theorem \ref{the}, and Theorem \ref{gener}. Wherever we use an assumption, we always state it again.
        \item Did you include complete proofs of all theoretical results?
    \answerYes{} Full proofs are given in the supplemental material.
\end{enumerate}
\item If you ran experiments...
\begin{enumerate}
  \item Did you include the code, data, and instructions needed to reproduce the main experimental results (either in the supplemental material or as a URL)?
    \answerNo{} 
    We are associated in a industrial organization. The code is currently under review of our organization.
  \item Did you specify all the training details (e.g., data splits, hyperparameters, how they were chosen)?
    \answerYes{} We mostly follow the setting of the RegretNet. Other details are given in Sections \ref{impl} and \ref{exdetails}.
        \item Did you report error bars (e.g., with respect to the random seed after running experiments multiple times)?
    \answerNo{}
        \item Did you include the total amount of compute and the type of resources used (e.g., type of GPUs, internal cluster, or cloud provider)?
    \answerYes{} They are given in Section \ref{impl}.
\end{enumerate}
\item If you are using existing assets (e.g., code, data, models) or curating/releasing new assets...
\begin{enumerate}
  \item If your work uses existing assets, did you cite the creators?
    \answerYes{} They are given in Section \ref{ex}.
  \item Did you mention the license of the assets?
\answerYes{} They are given in Section \ref{impl}.
\item Did you include any new assets either in the supplemental material or as a URL?
    \answerNo{}
    We are associated in a industrial organization. The code is currently under review of our organization.
  \item Did you discuss whether and how consent was obtained from people whose data you're using/curating?
\answerYes{} The data is synthetic in our data.
  \item Did you discuss whether the data you are using/curating contains personally identifiable information or offensive content?
\answerYes{} The data is synthetic in our data. No personally identifiable information or offensive content is contained.
\end{enumerate}
\item If you used crowdsourcing or conducted research with human subjects...
\begin{enumerate}
  \item Did you include the full text of instructions given to participants and screenshots, if applicable?
    \answerNA{}
  \item Did you describe any potential participant risks, with links to Institutional Review Board (IRB) approvals, if applicable?
    \answerNA{}
  \item Did you include the estimated hourly wage paid to participants and the total amount spent on participant compensation?
    \answerNA{}
\end{enumerate}
\end{enumerate}



\appendix
\section{Proofs}

This appendix collects all proofs omitted from the main text due to space limitation.
{\subsection{Proofs about Orbit Averaging}\label{A.1}
In this section, we prove Proposition \ref{pro} and the feasibility of the projected mechanisms. For simplicity, we use the notations $\sigma_n(i)$ and $\sigma_m(j)$ to present the ranks of the bidder $i$ and the item $j$ after bidder-permutation $\sigma_n$ and item-permutation $\sigma_m$, respectively.
\paragraph{Proof of Proposition \ref{pro}.} By the definition of orbit averaging $\mathcal{Q}$, we have
\begin{equation*}
    \mathcal{Q}f\circ\rho_g(x)=\frac{1}{|G|}\sum_{h\in G}\psi_h^{-1}f(\rho_h\rho_g x)=\psi_{g}\circ \frac{1}{|G|}\sum_{h\in G}\psi_{hg}^{-1}f(\rho_{hg} x)=\psi_g\circ \mathcal{Q}f(x).
\end{equation*}
In addition, if $f$ is equivariant, then we have
\begin{equation*}
    \mathcal{Q}f=\frac{1}{|G|}\sum_{g\in G}\psi_g^{-1}\circ f \circ \rho_g=\frac{1}{|G|}\sum_{g\in G}\psi_g^{-1}\circ \psi_g\circ f=\frac{1}{|G|}\sum_{g\in G} f=f.
\end{equation*}
Thus, orbit averaging is a projection to equivariant function space and fixes all equivariant functions. In addition, orbit averaging fixes all equivariant functions. That means, every equivariant function can be obtained by orbit averaging. In this sense, every equivariant models are contained in the orbit averaging framework.

\paragraph{Proof of the feasibility of projected mechanisms.}
We verify all feasibility conditions for the projected mechanisms as follows.

Firstly, for the allocation rule, we have
\begin{align*}
    &\sum_{i=1}^n(\mathcal{Q}_1g)_{i j}(v,x,y)=\frac{1}{n!}\sum_{i=1}^n\sum_{\sigma_n\in S_n}g_{\sigma_n^{-1}(i)j}(\sigma_n v,\sigma_n x,y)\\
    =&\frac{1}{n!}\sum_{\sigma_n\in S_n}\bigg[\sum_{i=1}^n g_{\sigma_n^{-1}(i)j}(\sigma_n v,\sigma_n x,y)\bigg]\le \frac{1}{n!}\sum_{\sigma_n\in S_n}1=1,
\end{align*}
\begin{align*}
    &\sum_{i=1}^n(\mathcal{Q}_2g)_{i j}(v,x,y)=\frac{1}{m!}\sum_{i=1}^n\sum_{\sigma_m\in S_m}g_{i\sigma_m^{-1}(j)}(v\sigma_m,x,y\sigma_m)\\
    =&\frac{1}{m!}\sum_{\sigma_m\in S_m}\bigg[\sum_{i=1}^n g_{i\sigma_m^{-1}(j)}(v\sigma_m,x,y\sigma_m)\bigg]\le \frac{1}{m!}\sum_{\sigma_m\in S_m}1=1,
\end{align*}
and
\begin{align*}
    &\sum_{i=1}^n(\mathcal{Q}_3g)_{i j}(v,x,y)=\frac{1}{n!m!}\sum_{i=1}^n\sum_{\sigma_n\in S_n}\sum_{\sigma_m\in S_m}g_{\sigma_n^{-1}(i)\sigma_m^{-1}(j)}(\sigma_n v\sigma_m,\sigma_n x,y\sigma_m)\\
    =&\frac{1}{n!m!}\sum_{\sigma_n\in S_n}\sum_{\sigma_m\in S_m}\bigg[\sum_{i=1}^n g_{\sigma_n^{-1}(i)\sigma_m^{-1}(j)}(\sigma_n v\sigma_m,\sigma_n x,y\sigma_m)\bigg]\le \frac{1}{n!m!}\sum_{\sigma_n\in S_n}\sum_{\sigma_m\in S_m}1=1,
\end{align*}
thus we know the projected allocation rule is also feasible, which will never allocate one item more than once.

In addition, for the payment rule, we have
\begin{align*}
    &(\mathcal{Q}_1 p)_{i}(v,x,y)=\frac{1}{n!}\sum_{\sigma_n\in S_n}p_{\sigma_n^{-1}(i)}(\sigma_n v,\sigma_n x,y)\\
    \le&\frac{1}{n!}\sum_{\sigma_n\in S_n}\bigg[\sum_{j=1}^m g_{\sigma_{n}^{-1}(i)j}(\sigma_n v,\sigma_n x,y)(\sigma_n v)_{\sigma_n^{-1}(i) j}\bigg]\\
    =&\sum_{j=1}^m\bigg[\frac{1}{n!}\sum_{\sigma_n\in S_n}g_{\sigma_{n}^{-1}(i)j}(\sigma_n v,\sigma_n x,y)v_{i j}\bigg]\le\sum_{j=1}^m(\mathcal{Q}_1 g)_{i j}(v,x,y)v_{i j},
\end{align*}
\begin{align*}
    &(\mathcal{Q}_2 p)_{i}(v,x,y)=\frac{1}{m!}\sum_{\sigma_m\in S_m}p_{i}(v\sigma_m,x,y\sigma_m)\\
    \le&\frac{1}{m!}\sum_{\sigma_m\in S_m}\bigg[\sum_{j=1}^m g_{i j}(v\sigma_m, x,y\sigma_m)(v\sigma_m)_{i j}\bigg]\\
    =&\sum_{j=1}^m\bigg[\frac{1}{m!}\sum_{\sigma_m\in S_m}g_{i \sigma_m^{-1}(j)}(v\sigma_m, x,y\sigma_m)v_{i j}\bigg]\le\sum_{j=1}^m(\mathcal{Q}_2 g)_{i j}(v,x,y)v_{i j},
\end{align*}
and
\begin{align*}
    &(\mathcal{Q}_3 p)_{i}(v,x,y)=\frac{1}{n!m!}\sum_{\sigma_n\in S_n}\sum_{\sigma_m\in S_m}p_{\sigma_n^{-1}(i)}(\sigma_n v\sigma_m,\sigma_n x,y\sigma_m)\\
    \le&\frac{1}{n!m!}\sum_{\sigma_n\in S_n}\sum_{\sigma_m\in S_m}\bigg[\sum_{j=1}^m g_{\sigma_n^{-1}(i) j}(\sigma_n v\sigma_m, \sigma_n x,y\sigma_m)(\sigma_n v\sigma_m)_{\sigma_n^{-1}(i) j}\bigg]\\
    =&\sum_{j=1}^m\bigg[\frac{1}{n!m!}\sum_{\sigma_n\in S_n}\sum_{\sigma_m\in S_m}g_{\sigma_n^{-1}(i) \sigma_m^{-1}(j)}(\sigma_n v\sigma_m, \sigma_n x,y\sigma_m)v_{i j}\bigg]\le\sum_{j=1}^m(\mathcal{Q}_3 g)_{i j}(v,x,y)v_{i j},
\end{align*}
thus, we have completed this proof.
}
\subsection{Proof of Theorem \ref{the}}

In this section, we proves Theorem \ref{the}. 

\begin{proof}[Proof of Theorem \ref{the}.]



We first study the part in the condition of bidder-symmetry; {\it i.e.}, 
the orbit averaging $\mathcal{Q}_\cdot$ is the bidder averaging $\mathcal{Q}_1$, 
acting on the allocation rule $g$ and the payment rule $p$ as below,
\begin{equation*}
		\mathcal{Q}_1{g}(v,x,y)=\frac{1}{n!}\sum_{\sigma_n\in S_n}\sigma_n^{-1}g(\sigma_n v,\sigma_n x,y),
\end{equation*}
and
\begin{equation*}
\mathcal{Q}_1{p}(v,x,y)=\frac{1}{n!}\sum_{\sigma_n\in S_n}\sigma_n^{-1}p(\sigma_n v,\sigma_n x,y).
\end{equation*}

\paragraph{Step 1:} We first prove that the auction  mechanism has the same expected revenue after projection, {\it i.e.}, \begin{equation*}
	\underset{(v,x,y)}{\mathbb{E}}\bigg[\sum\limits_{i=1}^n[\mathcal{Q}_1p]_i(v,x,y)\bigg]=\underset{(v,x,y)}{\mathbb{E}}\bigg[\sum\limits_{i=1}^n p_i(v,x,y)\bigg].
\end{equation*}

{Given that for all permutation $\pi$,
\begin{equation*}
    \sum_{i=1}^np_i=\sum_{i=1}^n p_{\pi (i)},
\end{equation*}
we have the following equation,}
\begin{align*}
	&\underset{(v,x,y)}{\mathbb{E}}\bigg[\sum\limits_{i=1}^n \mathcal{Q}_1{p}_i(v,x,y)\bigg]
	=\underset{(v,x,y)}{\mathbb{E}}\bigg[\frac{1}{n!}\sum\limits_{i=1}^n\sum\limits_{\sigma_n\in S_n}{p}_{\sigma_n^{-1}(i)}(\sigma_n v,\sigma_n x,y)\bigg]\\
	=&\underset{(v,x,y)}{\mathbb{E}}\bigg[\frac{1}{n!}\sum\limits_{i=1}^n\sum\limits_{\sigma_n\in S_n}{p}_{i}(\sigma_n v,\sigma_n x,y)\bigg]
	=\frac{1}{n!}\sum\limits_{i=1}^n\sum\limits_{\sigma_n\in S_n} \underset{(v,x,y)}{\mathbb{E}}\bigg[{p}_i(\sigma_n v,\sigma_n x,y)\bigg]\\
	=&\frac{1}{n!}\sum\limits_{i=1}^n\sum\limits_{\sigma_n\in S_n} \underset{(v,x,y)}{\mathbb{E}}\bigg[{p}_i(v,x,y)\bigg]
	=\sum\limits_{i=1}^n\underset{(v,x,y)}{\mathbb{E}}\bigg[{p}_i(v,x,y)\bigg].
\end{align*}

{Thus, we complete the first step.}

\paragraph{Step 2:} We then prove that 
the sum of all bidders' utilities remains the same after projection, {\it i.e.}, \begin{equation*}
	\underset{(v,x,y)}{\mathbb{E}}\bigg[\sum_{i=1}^n[\mathcal{Q}_1u]_i(v_i,v,x,y)\bigg]=\underset{(v,x,y)}{\mathbb{E}}\bigg[\sum_{i=1}^n u_i(v_i,v,x,y)\bigg].
\end{equation*}

Given that
\begin{equation*}
    \sum_{i=1}^n g_{\pi(i)j}v_{\pi(i)j}=\sum_{i=1}^n g_{i j}v_{i j},
\end{equation*} 
we have the following equation,
	\begin{align*}
		&\underset{(v,x,y)}{\mathbb{E}}\bigg[\sum\limits_{i=1}^n[\mathcal{Q}_1{u}]_i(v_i,v,x,y)\bigg]\\
		=&\underset{(v,x,y)}{\mathbb{E}}\bigg[\sum\limits_{i=1}^n\Big[\sum\limits_{j=1}^m [\mathcal{Q}_1{g}]_{i j}(v,x,y)v_{i j}-[\mathcal{Q}_1{p}]_i(v,x,y)\Big]\bigg]\\
		=&\underset{(v,x,y)}{\mathbb{E}}\bigg[\sum\limits_{i=1}^n\sum\limits_{j=1}^m[\mathcal{Q}_1{g}]_{i j}(v,x,y)v_{i j} \bigg]-\underset{(v,x,y)}{\mathbb{E}}\bigg[\sum\limits_{i=1}^n{p}_i(v,x,y)\bigg]\\
		=&\underset{(v,x,y)}{\mathbb{E}}\bigg[\frac{1}{n!}\sum\limits_{i=1}^n\sum\limits_{j=1}^m\sum\limits_{\sigma_n\in S_n}{g}_{\sigma_n^{-1}(i) j}(\sigma_n v,\sigma_n x,y)v_{i j}\bigg]-\underset{(v,x,y)}{\mathbb{E}}\bigg[\sum\limits_{i=1}^n{p}_i(v,x,y)\bigg]\\
		=&\frac{1}{n!}\sum\limits_{\sigma_n\in S_n}\underset{(v,x,y)}{\mathbb{E}}\bigg[\sum\limits_{i=1}^n\sum\limits_{j=1}^m{g}_{\sigma_n^{-1}(i) j}(\sigma_n v,\sigma_n x,y)[\sigma_n v]_{\sigma_n^{-1}(i)j}\bigg]-\underset{(v,x,y)}{\mathbb{E}}\bigg[\sum\limits_{i=1}^n{p}_i(v,x,y)\bigg]\\
		=&\frac{1}{n!}\sum\limits_{\sigma_n\in S_n}\underset{(v,x,y)}{\mathbb{E}}\bigg[\sum\limits_{i=1}^n\sum\limits_{j=1}^m{g}_{i j}(\sigma_n v,\sigma_n x,y)[\sigma_n v]_{i j}\bigg]-\underset{(v,x,y)}{\mathbb{E}}\bigg[\sum\limits_{i=1}^n{p}_i(v,x,y)\bigg]\\
		=&\frac{1}{n!}\sum\limits_{\sigma_n\in S_n}\underset{(v,x,y)}{\mathbb{E}}\bigg[\sum\limits_{i=1}^n\sum\limits_{j=1}^m{g}_{i j}(v,x,y)v_{i j}\bigg]-\underset{(v,x,y)}{\mathbb{E}}\bigg[\sum\limits_{i=1}^n{p}_i(v,x,y)\bigg]\\
		=&\underset{(v,x,y)}{\mathbb{E}}\bigg[\sum\limits_{i=1}^n\sum\limits_{j=1}^m{g}_{i j}(v,x,y)v_{i j}\bigg]-\underset{(v,x,y)}{\mathbb{E}}\bigg[\sum\limits_{i=1}^n{p}_i(v,x,y)\bigg]\\
		=&\underset{(v,x,y)}{\mathbb{E}}\bigg[\sum\limits_{i=1}^n u_i(v_i,v,x,y)\bigg].
	\end{align*}
	
Thus, we have completed the second step.

\paragraph{Step 3:} We lastly prove that the auction mechanism has a smaller ex-post regret after projection, {\it i.e.},
		\begin{equation*}\mathbb{E}\bigg[\max\limits_{v'\in\mathcal{V}^{n\times m}}\sum\limits_{i=1}^n[\mathcal{Q}_1{u}]_i(v_i,(v_i',v_{-i}),x,y)\bigg]\le\mathbb{E}\bigg[\max\limits_{v'\in\mathcal{V}^{n\times m}}\sum\limits_{i=1}^n{u}(v_i,(v_i',v_{-i}),x,y)\bigg].
		\end{equation*}
		
For the simplicity, we denote $\sum\limits_{j=1}^m g_{i j}v_{i j}$ by $\langle g_i,v_i\rangle$. Then, we have,
\begin{align*}
     &\underset{(v,x,y)}{\mathbb{E}}\bigg[\max\limits_{v'\in\mathcal{V}^{n\times m}}\sum\limits_{i=1}^n [\mathcal{Q}_1{u}]_i\big(v_i,(v_i',v_{-i}),x,y\big)\bigg]\\
     =&\underset{(v,x,y)}{\mathbb{E}}\bigg[\max\limits_{v'\in\mathcal{V}^{n\times m}}\sum\limits_{i=1}^n\sum\limits_{j=1}^m[\mathcal{Q}_1{g}]_{i j}\big((v'_i,v_{-i}),x,y\big)v_{i j}-[\mathcal{Q}_1{p}]_{i}\big((v_i',v_{-i}),x,y\big)\bigg]\\
     =&\underset{(v,x,y)}{\mathbb{E}}\bigg[\max\limits_{v'\in\mathcal{V}^{n\times m}}\sum\limits_{i=1}^n\big\langle[\mathcal{Q}_1{g}]_{i}\big((v'_i,v_{-i}),x,y\big),v_{i}\big\rangle-[\mathcal{Q}_1{p}]_{i}\big((v_i',v_{-i}),x,y\big)\bigg]\\
     =&\underset{(v,x,y)}{\mathbb{E}}\bigg[\max\limits_{v'\in\mathcal{V}^{n\times m}}\sum\limits_{i=1}^n\frac{1}{n!}\sum\limits_{\sigma_n\in S_n}\big\langle {g}_{\sigma_n^{-1}(i)}\big(\sigma_n(v'_i,v_{-i}),\sigma_n x,y\big),v_{i}\big\rangle-{p}_{\sigma_n^{-1}(i)}\big(\sigma_n(v_i',v_{-i}),\sigma_n x,y\big)\bigg].
\end{align*}  

Combining the following inequality, 
\begin{equation*}
    \max_{z}\sum_{k=1}^K f_k(z)\le \sum_{k=1}^K \max_{z} f_k(z),
\end{equation*}
we have that  
\begin{align*}
	&\underset{(v,x,y)}{\mathbb{E}}\bigg[\max\limits_{v'\in\mathcal{V}^{n\times m}}\sum\limits_{i=1}^n\frac{1}{n!}\sum\limits_{\sigma_n\in S_n}\big\langle {g}_{\sigma_n^{-1}(i)}\big(\sigma_n(v'_i,v_{-i}),\sigma_n x,y\big),v_{i}\big\rangle-{p}_{\sigma_n^{-1}(i)}\big(\sigma_n(v_i',v_{-i}),\sigma_n x,y\big)\bigg]\\
     \le&\underset{(v,x,y)}{\mathbb{E}}\bigg[\frac{1}{n!}\sum\limits_{\sigma_n\in S_n}\max\limits_{v'\in\mathcal{V}^{n\times m}}\sum\limits_{i=1}^n\big\langle {g}_{\sigma_n^{-1}(i)}\big(\sigma_n(v'_i,v_{-i}),\sigma_n x,y\big),v_{i}\big\rangle-{p}_{\sigma_n^{-1}(i)}\big(\sigma_n(v_i',v_{-i}),\sigma_n x,y\big)\bigg]\\
     =&\frac{1}{n!}\sum\limits_{\sigma_n\in S_n}\underset{(v,x,y)}{\mathbb{E}}\bigg[\max\limits_{v'\in\mathcal{V}^{n\times m}}\sum\limits_{i=1}^n\big\langle {g}_{\sigma_n^{-1}(i)}\big(\sigma_n(v'_i,v_{-i}),\sigma_n x,y\big),v_{i}\big\rangle-{p}_{\sigma_n^{-1}(i)}\big(\sigma_n(v_i',v_{-i}),\sigma_n x,y\big)\bigg]\\
     =&\frac{1}{n!}\sum\limits_{\sigma_n\in S_n}\underset{(v,x,y)}{\mathbb{E}}\bigg[\max\limits_{v'\in\mathcal{V}^{n\times m}}\sum\limits_{i=1}^n {u}_i\big(v_i,(v_i',v_{-i}),x,y\big)\bigg]\\
     =&\underset{(v,x,y)}{\mathbb{E}}\bigg[\max\limits_{v'\in\mathcal{V}^{n\times m}}\sum\limits_{i=1}^n {u}_i\big(v_i,(v_i',v_{-i}),x,y\big)\bigg].
\end{align*}

We thus have completed the proof of eqs. (\ref{equ1}) and (\ref{equ2}) when the orbit averaging $\mathcal{Q}_\cdot$ is the condition of bidder-symmetry.


Then, we prove this theorem in the condition of item-symmetry; {\it i.e.}, the orbit averaging $\mathcal{Q}_\cdot$ is the item averaging $\mathcal{Q}_2$, 
acting on the allocation rule $g$ and the payment rule $p$ as shown below,
\begin{equation*}
	\mathcal{Q}_2{g}(v,x,y)=\frac{1}{m!}\sum_{\sigma_m\in S_m}g(v \sigma_m,x,y\sigma_m)\sigma_m^{-1},
\end{equation*}
and,
\begin{equation*}
\mathcal{Q}_2{p}(v,x,y)=\frac{1}{m!}\sum_{\sigma_m\in S_m}p(v\sigma_m,x,y\sigma_m).
\end{equation*}

\paragraph{Step 1:} We first prove that the auction mechanism has the same expected revenue after projection, {\it i.e.}, \begin{equation*}
	\underset{(v,x,y)}{\mathbb{E}}\bigg[\sum\limits_{i=1}^n[\mathcal{Q}_2p]_i(v,x,y)\bigg]=\underset{(v,x,y)}{\mathbb{E}}\bigg[\sum\limits_{i=1}^n p_i(v,x,y)\bigg].
\end{equation*}

Since the valuation joint distribution is invariant under bidder permutation, we have 
    $\mathbb{E}_{(v,x,y)}[f(v\sigma_m,x,y\sigma_m)]=\mathbb{E}_{(v,x,y)}[f(v,x,y)].$
Then, we have
\begin{align*}
	&\underset{(v,x,y)}{\mathbb{E}}\bigg[\sum\limits_{i=1}^n [\mathcal{Q}_2{p}]_i(v,x,y)\bigg]\\
	=&\underset{(v,x,y)}{\mathbb{E}}\bigg[\frac{1}{m!}\sum\limits_{i=1}^n\sum\limits_{\sigma_m\in S_m}{p}_{i}(v\sigma_m,x,y\sigma_m)\bigg]\\
	=&\frac{1}{m!}\sum\limits_{i=1}^n\sum\limits_{\sigma_m\in S_m} \underset{(v,x,y)}{\mathbb{E}}\bigg[{p}_i(v\sigma_m,x,y\sigma_m)\bigg]\\
	=&\frac{1}{m!}\sum\limits_{i=1}^n\sum\limits_{\sigma_m\in S_m} \underset{(v,x,y)}{\mathbb{E}}\bigg[{p}_i(v,x,y)\bigg]
	=\sum\limits_{i=1}^n\underset{(v,x,y)}{\mathbb{E}}\bigg[{p}_i(v,x,y)\bigg].
\end{align*}

We have thus completed the first step.

\paragraph{Step 2:} We then prove that the sum of all bidders' utilities remains same after projection, {\it i.e.}, \begin{equation*}
	\underset{(v,x,y)}{\mathbb{E}}\bigg[\sum_{i=1}^n[\mathcal{Q}_2u]_i(v_i,v,x,y)\bigg]=\underset{(v,x,y)}{\mathbb{E}}\bigg[\sum_{i=1}^n u_i(v_i,v,x,y)\bigg].
\end{equation*}

Given that for all permutation $\pi$,
\begin{equation*}
    \sum_{j=1}^mg_{i\pi(j)}v_{i\pi(j)}=\sum_{j=1}^mg_{i j}v_{i j},
\end{equation*}
we have the following equation,
\begin{align*}
	&\underset{(v,x,y)}{\mathbb{E}}\bigg[\sum\limits_{i=1}^n[\mathcal{Q}_2{u}]_i(v_i,v,x,y)\bigg]\\
	=&\underset{(v,x,y)}{\mathbb{E}}\bigg[\sum\limits_{i=1}^n\Big[\sum\limits_{j=1}^m [\mathcal{Q}_2{g}]_{i j}(v,x,y)v_{i j}-[\mathcal{Q}_2{p}]_i(v,x,y)\Big]\bigg]\\
	=&\underset{(v,x,y)}{\mathbb{E}}\bigg[\sum\limits_{i=1}^n\sum\limits_{j=1}^m[\mathcal{Q}_2{g}]_{i j}(v,x,y)v_{i j} \bigg]-\underset{(v,x,y)}{\mathbb{E}}\bigg[\sum\limits_{i=1}^n{p}_i(v,x,y)\bigg]\\
	=&\underset{(v,x,y)}{\mathbb{E}}\bigg[\frac{1}{m!}\sum\limits_{i=1}^n\sum\limits_{j=1}^m\sum\limits_{\sigma_m\in S_m}{g}_{i \sigma_m^{-1}(j)}(v\sigma_m,x,y\sigma_m)v_{i j}\bigg]-\underset{(v,x,y)}{\mathbb{E}}\bigg[\sum\limits_{i=1}^n{p}_i(v,x,y)\bigg]\\
	=&\frac{1}{m!}\sum\limits_{\sigma_m\in S_m}\underset{(v,x,y)}{\mathbb{E}}\bigg[\sum\limits_{i=1}^n\sum\limits_{j=1}^m{g}_{i \sigma_m^{-1}(j)}(v\sigma_m,x,y\sigma_m)[v\sigma_m]_{i \sigma_m^{-1}(j)}\bigg]-\underset{(v,x,y)}{\mathbb{E}}\bigg[\sum\limits_{i=1}^n{p}_i(v,x,y)\bigg]\\
	=&\frac{1}{m!}\sum\limits_{\sigma_m\in S_m}\underset{(v,x,y)}{\mathbb{E}}\bigg[\sum\limits_{i=1}^n\sum\limits_{j=1}^m{g}_{i j}(v\sigma_m,x,y\sigma_m)[v\sigma_m]_{i j}\bigg]-\underset{(v,x,y)}{\mathbb{E}}\bigg[\sum\limits_{i=1}^n{p}_i(v,x,y)\bigg]\\
	=&\frac{1}{m!}\sum\limits_{\sigma_m\in S_m}\underset{(v,x,y)}{\mathbb{E}}\bigg[\sum\limits_{i=1}^n\sum\limits_{j=1}^m{g}_{i j}(v,x,y)v_{i j}\bigg]-\underset{(v,x,y)}{\mathbb{E}}\bigg[\sum\limits_{i=1}^n{p}_i(v,x,y)\bigg]\\
	=&\underset{(v,x,y)}{\mathbb{E}}\bigg[\sum\limits_{i=1}^n\sum\limits_{j=1}^m{g}_{i j}(v,x,y)v_{i j}\bigg]-\underset{(v,x,y)}{\mathbb{E}}\bigg[\sum\limits_{i=1}^n{p}_i(v,x,y)\bigg]\\
	=&\underset{(v,x,y)}{\mathbb{E}}\bigg[\sum\limits_{i=1}^n u_i(v_i,v,x,y)\bigg].
\end{align*}

Thus, we have completed the second step.

\paragraph{Step 3:}
We lastly prove that the auction mechanism have a smaller ex-post regret after projection, {\it i.e.},
\begin{equation*}\mathbb{E}\bigg[\max\limits_{v'\in\mathcal{V}^{n\times m}}\sum\limits_{i=1}^n[\mathcal{Q}_2{u}]_i(v_i,(v_i',v_{-i}),x,y)\bigg]\le\mathbb{E}\bigg[\max\limits_{v'\in\mathcal{V}^{n\times m}}\sum\limits_{i=1}^n{u}(v_i,(v_i',v_{-i}),x,y)\bigg].
\end{equation*}

By the definition of the utility $u$ and the item averaging $\mathcal{Q}_2$, we have
\begin{align*}
	&\underset{(v,x,y)}{\mathbb{E}}\bigg[\max\limits_{v'\in\mathcal{V}^{n\times m}}\sum\limits_{i=1}^n [\mathcal{Q}_2{u}]_i\big(v_i,(v_i',v_{-i}),x,y\big)\bigg]\\
	=&\underset{(v,x,y)}{\mathbb{E}}\bigg[\max\limits_{v'\in\mathcal{V}^{n\times m}}\sum\limits_{i=1}^n\sum\limits_{j=1}^m[\mathcal{Q}_2{g}]_{i j}\big((v'_i,v_{-i}),x,y\big)v_{i j}-[\mathcal{Q}_2{p}]_{i}\big((v_i',v_{-i}),x,y\big)\bigg]\\
	=&\underset{(v,x,y)}{\mathbb{E}}\bigg[\max\limits_{v'\in\mathcal{V}^{n\times m}}\sum\limits_{i=1}^n\big\langle[\mathcal{Q}_2{g}]_{i}\big((v'_i,v_{-i}),x,y\big),v_{i}\big\rangle-[\mathcal{Q}_2{p}]_{i}\big((v_i',v_{-i}),x,y\big)\bigg]\\
	=&\underset{(v,x,y)}{\mathbb{E}}\bigg[\max\limits_{v'\in\mathcal{V}^{n\times m}}\sum\limits_{i=1}^n\frac{1}{m!}\sum\limits_{\sigma_m\in S_m}\big\langle {g}_{i}\big((v'_i,v_{-i})\sigma_m,x,y\sigma_m\big)\sigma_m^{-1},v_{i}\big\rangle-{p}_{i}\big((v_i',v_{-i})\sigma_m,x,y\sigma_m\big)\bigg].
\end{align*}  

Combining the following inequality
\begin{equation*}
    \max_{z}\sum_{k=1}^K f_k(z)\le \sum_{k=1}^K \max_{z} f_k(z),
\end{equation*}
we have  
\begin{align*}
	&\underset{(v,x,y)}{\mathbb{E}}\bigg[\max\limits_{v'\in\mathcal{V}^{n\times m}}\sum\limits_{i=1}^n\frac{1}{m!}\sum\limits_{\sigma_m\in S_m}\big\langle {g}_{i}\big((v'_i,v_{-i})\sigma_m,x,y\sigma_m\big)\sigma_m^{-1},v_{i}\big\rangle-{p}_{i}\big((v_i',v_{-i})\sigma_m,x,y\sigma_m\big)\bigg]\\
	\le&\underset{(v,x,y)}{\mathbb{E}}\bigg[\frac{1}{m!}\sum\limits_{\sigma_m\in S_m}\sum\limits_{i=1}^n\max\limits_{v'\in\mathcal{V}^{n\times m}}\big\langle {g}_{i}\big((v'_i,v_{-i})\sigma_m,x,y\sigma_m\big)\sigma_m^{-1},v_{i}\big\rangle-{p}_{i}\big((v_i',v_{-i})\sigma_m,x,y\sigma_m\big)\bigg]\\
	=&\frac{1}{m!}\sum\limits_{\sigma_m\in S_m}\underset{(v,x,y)}{\mathbb{E}}\bigg[\max\limits_{v'\in\mathcal{V}^{n\times m}}\sum\limits_{i=1}^n\big\langle {g}_{i}\big((v'_i,v_{-i})\sigma_m,x,y\sigma_m\big),v_{i}\sigma_m\big\rangle-{p}_{i}\big((v_i',v_{-i})\sigma_m,x,y\sigma_m\big)\bigg]\\
	=&\frac{1}{m!}\sum\limits_{\sigma_m\in S_m}\underset{(v,x,y)}{\mathbb{E}}\bigg[\max\limits_{v'\in\mathcal{V}^{n\times m}}\sum\limits_{i=1}^n {u}_i\big(v_i,(v_i',v_{-i}),x,y\big)\bigg]\\
	=&\underset{(v,x,y)}{\mathbb{E}}\bigg[\max\limits_{v'\in\mathcal{V}^{n\times m}}\sum\limits_{i=1}^n {u}_i\big(v_i,(v_i',v_{-i}),x,y\big)\bigg].
\end{align*}

{We thus have proved this theorem in the condition of item-symmetry.}

The proofs are completed.
\end{proof}

\subsection{Proof of Lemma \ref{lemma:composition}}

In this section, we present the proof of Lemma \ref{lemma:composition}.

\begin{proof}[Proof of Lemma \ref{lemma:composition}.]
Both the bidder averaging and the item averaging are linear. Thus, we have the following results,
	\begin{align*}
		&\mathcal{Q}_1\circ\mathcal{Q}_2 f(v,x,y)\\
		=&\mathcal{Q}_1 \bigg[\frac{1}{m!}\sum_{\sigma_m\in S_m}f(v\sigma_m,x,y\sigma_m)\sigma_m^{-1}\bigg]\\
		=&\frac{1}{n!}\sum_{\sigma_n\in S_n}\sigma_n^{-1}\bigg[\frac{1}{m!}\sum_{\sigma_m\in S_m}f(\sigma_n v\sigma_m,\sigma_n x,y\sigma_m)\sigma_m^{-1}\bigg]\\
		=&\frac{1}{n!m!}\sum_{\sigma_n\in S_n}\sum_{\sigma_m\in S_m}\sigma_n^{-1}f(\sigma_n v\sigma_m,\sigma_n x,y\sigma_m)\sigma_m^{-1}=\mathcal{Q}_3f(v,x,y),
	\end{align*}
and
	\begin{align*}
		&\mathcal{Q}_2\circ \mathcal{Q}_1 f(v,x,y)\\
		=&\mathcal{Q}_2\bigg[\frac{1}{n!}\sum_{\sigma_n\in S_n}\sigma_n^{-1}f(\sigma_n v,\sigma_n x,y)\bigg]\\
		=&\frac{1}{m!}\sum_{\sigma_m\in S_m}\bigg[\frac{1}{n!}\sum_{\sigma_n\in S_n}\sigma_n^{-1}f(\sigma_n v\sigma_m,\sigma_n x,y\sigma_m)\bigg]\sigma_m^{-1}\\
		=&\frac{1}{m!n!}\sum_{\sigma_n\in S_n}\sum_{\sigma_m\in S_m}\sigma_n^{-1}f(\sigma_n v\sigma_m,\sigma_n x,y\sigma_m)\sigma_m^{-1}=\mathcal{Q}_3f(v,x,y).
	\end{align*}

The above two equations hold for any $f$. Then, we may prove that
\begin{equation*}
    \mathcal{Q}_3=\mathcal{Q}_1\circ \mathcal{Q}_2=\mathcal{Q}_2\circ\mathcal{Q}_1,
\end{equation*}
which is exactly the claim of this theorem.

The proof is completed.
\end{proof}

\subsection{Proof of Theorem \ref{theorem plus}}

In this section, we apply our Lemma \ref{lemma:composition} and Theorem \ref{the} to prove Theorem \ref{theorem plus}.

	\begin{proof}[Proof of Theorem \ref{theorem plus}]
		
		For the simplicity, we rewrite $\mathcal{Q}_3p$ and $\mathcal{Q}_3reg$ as $\mathcal{Q}_2(\mathcal{Q}_1p)$ and $\mathcal{Q}_2(\mathcal{Q}_1reg)$, respectively. Then, for a payment rule $p$, we have that,
		\begin{align*}
			&\underset{(v,x,y)}{\mathbb{E}}\bigg[\sum_{i=1}^n[\mathcal{Q}_3p]_i(v,x,y)\bigg]\\
			=&\underset{(v,x,y)}{\mathbb{E}}\bigg[\sum_{i=1}^n[\mathcal{Q}_2(\mathcal{Q}_1p)]_i(v,x,y)\bigg]
			=\underset{(v,x,y)}{\mathbb{E}}\bigg[\sum_{i=1}^n[\mathcal{Q}_1p]_i(v,x,y)\bigg]\\
			=&\underset{(v,x,y)}{\mathbb{E}}\bigg[\sum_{i=1}^n p_i(v,x,y)\bigg].
		\end{align*}
		
	Also, we have the following result,
	\begin{align*}
		&\underset{(v,x,y)}{\mathbb{E}}\bigg[\sum_{i=1}^n [\mathcal{Q}_3reg]_i(v,x,y)\bigg]\\
		=&\underset{(v,x,y)}{\mathbb{E}}\bigg[\sum_{i=1}^n [\mathcal{Q}_2(\mathcal{Q}_1reg)]_i(v,x,y)\bigg]
		\le \underset{(v,x,y)}{\mathbb{E}}\bigg[\sum_{i=1}^n [\mathcal{Q}_1reg]_i(v,x,y)\bigg]\\
		\le&\underset{(v,x,y)}{\mathbb{E}}\bigg[\sum_{i=1}^n reg_i(v,x,y)\bigg].
	\end{align*}
	
	Moreover, we have the following result on the regret gap $\Delta_3$,
	\begin{equation*}
	    \underset{(v,x,y)}{\mathbb{E}}[\Delta_3(g,p;v,x,y)]=\underset{(v,x,y)}{\mathbb{E}}[\Delta_1(g,p;v,x,y)]+\underset{(v,x,y)}{\mathbb{E}}[\Delta_2(\mathcal{Q}_1g,\mathcal{Q}_1p;v,x,y)]\ge 0.
	\end{equation*}
	
This proof is completed.
	\end{proof}

\subsection{Proof of Theorem \ref{covering}}
In this section, we present the proof of Theorem \ref{covering}. 

We start with the definitions or notations necessary for our proof. We define the allocation rule space and the payment rule space as follows, 
\begin{equation*}
	\mathcal{G}=\{g^\omega:\omega\in\Omega\}~~\text{and}~~\mathcal{P}=\{p^\omega:\omega\in\Omega\},
\end{equation*}
where $\omega$ is the auction mechanism parameter and $\Omega$ is the set of all feasible parameters.
We then define the induced utility and ex-post regret spaces as follows,
\begin{equation*}
	\mathcal{U}=\Big\{u^\omega:u_i^\omega(v_i',v,x,y)=\sum\limits_{j=1}^m g^\omega_{i j} (v,x,y)v_{i j}'-p^\omega_i(v,x,y)\Big\},
\end{equation*}
and
\begin{equation*}
	\mathcal{R}=\Big\{reg^\omega:reg^\omega_i(v,x,y)=\max_{v_i'}u^{\omega}(v_i,(v_i',v_{-i}),x,y)-u^{\omega}(v_i,v,x,y)\Big\}.
\end{equation*}
Then, the $l_{\infty,1}$-distance on $\mathcal{U}$ and $\mathcal{P}$ is defined as below,
\begin{equation*}
	l_{\infty,1}(u,u')=\max\limits_{(v,v_i',x,y)}\bigg(\sum\limits_{i=1}^n|u_i(v_i,(v_i',v_{-i}),x,y)-u_i'(v_i,(v_i',v_{-i}),x,y)|\bigg),
\end{equation*}
and
\begin{equation*} l_{\infty,1}(p,p')=\max\limits_{(v,x,y)}\bigg(\sum\limits_{i=1}^n|p_i(v,x,y)-p_i'(v,x,y)|\bigg).
\end{equation*}

We now present the proof of Theorem \ref{covering}.
\begin{proof}[Proof of Theorem \ref{covering}.]
We prove Theorem \ref{covering} in two steps:
(1) we first prove that the distance between any two mechanisms is smaller when we project them to be permutation-equivariant; (2) then, we prove that the smaller distance implies a smaller covering number. 
 
 \paragraph{Step 1:} We prove that the distance between two mechanisms becomes smaller after projection, 
 {\it i.e.},
\begin{equation*}
     l_{\infty,1}(\mathcal{Q}_\cdot{p},\mathcal{Q}_\cdot{p'})\le l_{\infty,1}(p,p'),
\end{equation*}
and
\begin{equation*}
l_{\infty,1}(\mathcal{Q}_\cdot{u},\mathcal{Q}_\cdot{u'})\le l_{\infty,1}(u,u'),
\end{equation*}
where $u,u'\in\mathcal{U}$, $p,p' \in \mathcal{P}$, and $\mathcal{Q}_\cdot=\mathcal{Q}_1$ or $\mathcal{Q}_2$.

When $\mathcal{Q}_\cdot$ is $\mathcal{Q}_1$, we have that
\begin{align*}
    &l_{\infty,1}(\mathcal{Q}_1{p},\mathcal{Q}_1{p}')\\
    =&\max\limits_{(v,x,y)}\sum\limits_{i=1}^n|\mathcal{Q}_1{p}_i(v,x,y)-\mathcal{Q}_1{p}_i'(v,x,y)|\\
    =&\max\limits_{(v,x,y)}\sum\limits_{i=1}^n\bigg|\frac{1}{n!}\sum\limits_{\sigma_n\in S_n}\big[{p}_{\sigma_n^{-1}(i)}(\sigma_n v,\sigma_n x,y)-{p}_{\sigma_n^{-1}(i)}'(\sigma_n v,\sigma_n x,y)\big]\bigg|\\
    \le&\max\limits_{(v,x,y)}\sum\limits_{i=1}^n\frac{1}{n!}\sum\limits_{\sigma_n\in S_n}\big|{p}_{\sigma_n^{-1}(i)}(\sigma_n v,\sigma_n x,y)-{p}_{\sigma_n^{-1}(i)}'(\sigma_n v,\sigma_n x,y)\big|\\
    \le&\sum\limits_{\sigma_n\in S_n}\frac{1}{n!}\max\limits_{(v,x,y)}\sum\limits_{i=1}^n \big|{p}_{\sigma_n^{-1}(i)}(\sigma_n v,\sigma_n x,y)-{p}_{\sigma_n^{-1}(i)}'(\sigma_n v,\sigma_n x,y)\big|\\
    =&\sum\limits_{\sigma_n\in S_n}\frac{1}{n!}\max\limits_{(v,x,y)}\sum\limits_{i=1}^n \big|{p}_{i}(\sigma_n v,\sigma_n x,y)-{p}_{i}'(\sigma_n v,\sigma_n x,y)\big|\\
    =&\sum\limits_{\sigma_n\in S_n}\frac{1}{n!}\max\limits_{(v,x,y)}\sum\limits_{i=1}^n\big|{p}_{i}(v,x,y)-{p}_{i}'(v,x,y)\big|\\
    =&\max\limits_{(v,x,y)}\sum\limits_{i=1}^n\big|{p}_{i}(v,x,y)-{p}_{i}'(v,x,y)\big|\\
    =&l_{\infty,1}({p},{p}'),
\end{align*} 
and 
\begin{align*}
    &l_{\infty,1}(\mathcal{Q}_1{u},\mathcal{Q}_1{u}')\\
    =&\max\limits_{v,v',x,y}\sum\limits_{i=1}^n|[\mathcal{Q}_1{u}]_i(v_i,(v_i',v_{-i}),x,y)-[\mathcal{Q}_1{u'}]_i(v_i,(v_i',v_{-i}),x,y)|\\
    =&\max\limits_{v,v',x,y}\sum\limits_{i=1}^n\bigg|\frac{1}{n!}\sum\limits_{\sigma_n\in S_n}{u}_{\sigma_n^{-1}(i)}(v_i,\sigma_{n}(v_i',v_{-i}),\sigma_n x,y)-{u}_{\sigma_n^{-1}(i)}'(v_i,\sigma_n (v_i',v_{-i}),\sigma_n x,y)\bigg|\\
    \le&\max\limits_{v,v',x,y}\sum\limits_{i=1}^n\frac{1}{n!}\sum\limits_{\sigma_n\in S_n}|{u}_{\sigma_n^{-1}(i)}(v_i,\sigma_{n}(v_i',v_{-i}),\sigma_n x,y)-{u}_{\sigma_n^{-1}(i)}'(v_i,\sigma_n (v_i',v_{-i}),\sigma_n x,y)|\\
    \le &\frac{1}{n!}\sum\limits_{\sigma_n\in S_n}\max\limits_{v,v',x,y}\sum\limits_{i=1}^n|{u}_{\sigma_n^{-1}(i)}( v_i,\sigma_{n}(v_i',v_{-i}),\sigma_n x,y)-{u}_{\sigma_n^{-1}(i)}'( v_i,\sigma_n (v_i',v_{-i}),\sigma_n x,y)|\\
    =&\frac{1}{n!}\sum\limits_{\sigma_n\in S_n}\max\limits_{v,v',x,y}\sum\limits_{i=1}^n|{u}_{i}(v_i,(v_i',v_{-i}),x,y)-{u}_{i}'(v_i,(v_i',v_{-i}),x,y)|\\
    =&\max\limits_{v,v',x,y}\sum\limits_{i=1}^n|{u}_{i}(v_i,(v_i',v_{-i}),x,y)-{u}_{i}'(v_i,(v_i',v_{-i}),x,y)|\\
    =&l_{\infty,1}({u},{u}').
\end{align*} 

Then, when $\mathcal{Q}_\cdot$ is $\mathcal{Q}_2$, we prove the result as below,
\begin{align*}
	&l_{\infty,1}(\mathcal{Q}_2{p},\mathcal{Q}_2{p}')\\
	=&\max\limits_{(v,x,y)}\sum\limits_{i=1}^n|\mathcal{Q}_2{p}_i(v,x,y)-\mathcal{Q}_2{p}_i'(v,x,y)|\\
	=&\max\limits_{(v,x,y)}\sum\limits_{i=1}^n\bigg|\frac{1}{m!}\sum\limits_{\sigma_m\in S_m}\big[{p}_{i}(v\sigma_m,x,y\sigma_m)-{p}_{i}'(v\sigma_m,x,y\sigma_m)\big]\bigg|\\
	\le&\max\limits_{(v,x,y)}\sum\limits_{i=1}^n\frac{1}{m!}\sum\limits_{\sigma_m\in S_m}\big|{p}_{i}(v\sigma_m,x,y\sigma_m)-{p}_{i}'(v\sigma_m,x,y\sigma_m)\big|\\
	\le&\sum\limits_{\sigma_m\in S_m}\frac{1}{m!}\max\limits_{(v,x,y)}\sum\limits_{i=1}^n \big|{p}_{i}(v\sigma_m,x,y\sigma_m)-{p}_{i}'(v\sigma_m,x,y\sigma_m)\big|\\
	=&\sum\limits_{\sigma_m\in S_m}\frac{1}{m!}\max\limits_{(v,x,y)}\sum\limits_{i=1}^n\big|{p}_{i}(v,x,y)-{p}_{i}'(v,x,y)\big|\\
	=&\max\limits_{(v,x,y)}\sum\limits_{i=1}^n\big|{p}_{i}(v,x,y)-{p}_{i}'(v,x,y)\big|\\
	=&l_{\infty,1}({p},{p}'),
\end{align*} 
and 
\begin{align*}
	&l_{\infty,1}(\mathcal{Q}_2{u},\mathcal{Q}_2{u'})\\
	=&\max\limits_{v,v',x,y}\sum\limits_{i=1}^n|[\mathcal{Q}_2{u}]_i(v_i,(v_i',v_{-i}),x,y)-[\mathcal{Q}_2{u'}]_i(v_i,(v_i',v_{-i}),x,y)|\\
	=&\max\limits_{v,v',x,y}\sum\limits_{i=1}^n\bigg|\frac{1}{m!}\sum\limits_{\sigma_m\in S_m}|{u}_{i}(v_i\sigma_m,(v_i',v_{-i})\sigma_m,x,y\sigma_m)-{u}_{i}'(v_i\sigma_m,(v_i',v_{-i})\sigma_m,x,y\sigma_m)\bigg|\\
	\le&\max\limits_{v,v',x,y}\sum\limits_{i=1}^n\frac{1}{m!}\sum\limits_{\sigma_m\in S_m}|{u}_{i}(v_i\sigma_m,(v_i',v_{-i})\sigma_m,x,y\sigma_m)-{u}_{i}'(v_i\sigma_m,(v_i',v_{-i})\sigma_m,x,y\sigma_m)|\\
	\le &\frac{1}{m!}\sum\limits_{\sigma_m\in S_m}\max\limits_{v,v',x,y}\sum\limits_{i=1}^n|{u}_{i}(v_i\sigma_m,(v_i',v_{-i})\sigma_m,x,y\sigma_m)-{u}_{i}'(v_i\sigma_m,(v_i',v_{-i})\sigma_m,x,y\sigma_m)|\\
	=&\frac{1}{m!}\sum\limits_{\sigma_m\in S_m}\max\limits_{v,v',x,y}\sum\limits_{i=1}^n|{u}_{i}(v_i,(v_i',v_{-i}),x,y)-{u}_{i}'(v_i,(v_i',v_{-i}),x,y)|\\
	=&\max\limits_{v,v',x,y}\sum\limits_{i=1}^n|{u}_{i}(v_i,(v_i',v_{-i}),x,y)-{u}_{i}'(v_i,(v_i',v_{-i}),x,y)|\\
	=&l_{\infty,1}({u},{u}').
\end{align*} 

Thus, we have completed Step 1.

\paragraph{Step 2:} We prove that a smaller distance implies a smaller covering number.

Let $\mathcal{X}$ and $\mathcal{Y}$ be two metric spaces with two different distances $l_1$ and $l_2$, respectively. There exists a surjective mapping $f$ from $\mathcal{Y}$ to $\mathcal{X}$, such that $l_1(f(x),f(y))\le l_2(x,y)$ for all $x,y\in \mathcal{Y}$. The covering numbers $\mathcal{N}_1(\mathcal{X},r)$ and $\mathcal{N}_2(\mathcal{Y},r)$ are defined as the minimum numbers of balls with radius r that can cover $\mathcal{X}$ and $\mathcal{Y}$ under $l_1$ and $l_2$, respectively.

By the definition of the covering number $\mathcal{N}_2(\mathcal{Y},r)$, there exists a set $\mathcal{A}$ of scale $\mathcal{N}_2(\mathcal{Y},r)$, such that 
\begin{equation*}
	l_2(x,\mathcal{A})=\inf\limits_{y\in\mathcal{A}}l_2(x,y)<r, \forall x\in\mathcal{Y}.
\end{equation*}

Then, $f(\mathcal{A})$ is also a $r$-cover for $\mathcal{X}$ under distance $l_1$, {\it i.e.}, for any $x\in\mathcal{Y}$, we have
\begin{equation*}
	l_1(f(x),f(\mathcal{A}))=\inf\limits_{y\in\mathcal{A}}l_1(f(x),f(y))\le\inf\limits_{y\in\mathcal{A}}l_2(x,y)=l_2(x,\mathcal{A})<r.
\end{equation*}

Because $f$ is surjective, for any $x'\in\mathcal{X}$, there exists an $x\in\mathcal{Y}$, such that $x'=f(x)$. Then, for any $x'\in\mathcal{X}$, we have that
\begin{equation*}
	l_1(x',f(\mathcal{A}))=	l_1(f(x),f(\mathcal{A}))<r.
\end{equation*}

By the definition of $\mathcal{N}_1(\mathcal{X},r)$, we have
\begin{equation*}
	\mathcal{N}_1(\mathcal{X},r)\le |f(\mathcal{A})|\le |\mathcal{A}|=\mathcal{N}_2(\mathcal{Y},r).
\end{equation*}
 
Eventually, combining the results in Step 1 and in Step 2, we have that
\begin{equation*}
    \mathcal{N}_{\infty,1}(\mathcal{Q}_\cdot\mathcal{U},r)\le\mathcal{N}_{\infty,1}(\mathcal{U},r),
\end{equation*}
and 
\begin{equation*}
    \mathcal{N}_{\infty,1}(\mathcal{Q}_\cdot\mathcal{P},r)\le\mathcal{N}_{\infty,1}(\mathcal{P},r),
\end{equation*}
for both the bidder averaging $\mathcal{Q}_1$ and the item averaging $\mathcal{Q}_2$.
\end{proof}

\subsection{Proof of Theorem \ref{gen} and Corollary \ref{corollary}}

We first introduce two lemmas. The first lemma gives a concentration inequality via the covering number. This result can be used to bound the gap between expected revenue/ex-post regret and empirical revenue/ex-post regret. The second lemma bounds the covering number $\mathcal{N}_{\infty,1}(\mathcal{R},2r)$ by the covering number $\mathcal{N}_{\infty,1}(\mathcal{U},r)$. Both lemmas has been proved by \cite{duan2022context}. We recall them here to make our paper completed.

\begin{lemma}[cf. Lemma E.1, \cite{duan2022context}]\label{lemma}
	Let $\mathcal{S}=\{z_1,\dots,z_L\}$ be a set of {i.i.d.} sample points drawn from a distribution $\mathcal{D}$ over $\mathcal{Z}$. Suppose $\mathcal{F}$ is a set of functions from $\mathcal{Z}$ to $\mathbb{R}$ such that $f(z)\in [a,b]$ for all $f\in\mathcal{F}$ and $z\in\mathcal{Z}$. We define $l_{\infty}$ on $\mathcal{F}$ as
	\begin{equation*}
	    l_\infty(f,f')=\max_{z\in\mathcal{Z}}|f(z)-f'(z)|,
	\end{equation*}
	and $\mathcal{N}_{\infty}(\mathcal{F},r)$ as the minimum number of balls with radius $r$ that can cover $\mathcal{F}$ under $l_\infty$-distance. Then, we have the following concentration inequality, 
	\begin{equation*}
		\mathbb{P}\bigg[\exists f\in\mathcal{F}:\Big|\frac{1}{L}\sum\limits_{i=1}^L f(z_i)-\mathbb{E}[f(z)]\Big|>\epsilon\bigg]\le2\mathcal{N}_{\infty}\Big(\mathcal{F},\frac{\epsilon}{3}\Big)\exp\Big(-\frac{2L\epsilon^2}{9(b-a)^2}\Big).
	\end{equation*}
\end{lemma}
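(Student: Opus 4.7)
The plan is to reduce the supremum over the infinite class $\mathcal{F}$ to a maximum over a finite $l_\infty$-cover, then apply Hoeffding's inequality to each element of the cover and union-bound. First, I would fix a minimal $\epsilon/3$-cover $\mathcal{F}_0 \subset \mathcal{F}$ under the $l_\infty$-distance, so that $|\mathcal{F}_0| = \mathcal{N}_\infty(\mathcal{F}, \epsilon/3)$ and each $f \in \mathcal{F}$ admits a representative $\tilde{f} \in \mathcal{F}_0$ with $\max_{z \in \mathcal{Z}} |f(z) - \tilde{f}(z)| \le \epsilon/3$.

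Next, I would transfer deviations from $f$ back to its representative $\tilde{f}$. Because $\tilde{f}$ is uniformly within $\epsilon/3$ of $f$, both $\frac{1}{L}\sum_i \tilde{f}(z_i)$ and $\mathbb{E}[\tilde{f}(z)]$ are within $\epsilon/3$ of their counterparts for $f$. Two applications of the triangle inequality then yield
\begin{equation*}
\Big|\frac{1}{L}\sum_{i=1}^L f(z_i) - \mathbb{E}[f(z)]\Big| \le \Big|\frac{1}{L}\sum_{i=1}^L \tilde{f}(z_i) - \mathbb{E}[\tilde{f}(z)]\Big| + \frac{2\epsilon}{3},
\end{equation*}
so the event $\{\exists f \in \mathcal{F} : |\hat{f} - \mathbb{E} f| > \epsilon\}$ is contained in the event $\{\exists \tilde{f} \in \mathcal{F}_0 : |\hat{\tilde{f}} - \mathbb{E}\tilde{f}| > \epsilon/3\}$.

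Finally, I would control the finite-cover event via Hoeffding's inequality. For each fixed $\tilde{f} \in \mathcal{F}_0$, the random variables $\tilde{f}(z_1),\ldots,\tilde{f}(z_L)$ are i.i.d.\ and bounded in $[a,b]$, so Hoeffding yields $\mathbb{P}[|\frac{1}{L}\sum_i \tilde{f}(z_i) - \mathbb{E}[\tilde{f}(z)]| > \epsilon/3] \le 2\exp(-2L\epsilon^2/(9(b-a)^2))$. A union bound over the $|\mathcal{F}_0| = \mathcal{N}_\infty(\mathcal{F}, \epsilon/3)$ elements of the cover produces exactly the stated inequality.

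The argument is a textbook application of the covering-number-plus-Hoeffding template, so there is no genuine obstacle; the only mild subtlety is bookkeeping, namely ensuring that the three factors of $1/3$ are aligned between the cover radius and the Hoeffding tail, so that the $2\epsilon/3$ slack from the two triangle-inequality steps leaves exactly $\epsilon/3$ of room for the cover-level deviation event.
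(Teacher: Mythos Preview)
Your proposal is correct and follows essentially the same approach as the paper: fix an $\epsilon/3$-cover, use the triangle inequality to absorb the $2\epsilon/3$ approximation slack, then apply Hoeffding's inequality to each cover element and union-bound. The paper's write-up is slightly terser but the decomposition and the bookkeeping of the three $\epsilon/3$ pieces are identical.
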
	

\begin{proof}
	By the definition of $\mathcal{N}_{\infty}(\mathcal{F},r)$, for any $f\in\mathcal{F}$, there exists an $f_r\in \mathcal{F}_r$ such that $\mathcal{F}_r$ is an $r$-cover for $\mathcal{F}$ and $l_{\infty}(f,f_r)<r$. Denote $\frac{1}{L}\sum_{i=1}^L f(z_i)$ by $\mathbb{E}_{\mathcal{S}}[f(z)]$. Then, we have
	\begin{align*}
		&\mathbb{P}\bigg[\exists f\in\mathcal{F}:\Big|\mathbb{E}_{\mathcal{S}}[f(z)]-\mathbb{E}[f(z)]\Big|>\epsilon\bigg]\\
		=&	\mathbb{P}\bigg[\exists f\in\mathcal{F}:\Big|\mathbb{E}_{\mathcal{S}}[f(z)]-\mathbb{E}_{\mathcal{S}}[f_r(z)]+\mathbb{E}_{\mathcal{S}}[f_r(z)]-\mathbb{E}[f_r(z)]+\mathbb{E}[f_r(z)]-\mathbb{E}[f(z)]\Big|>\epsilon\bigg]\\
		\le&\mathbb{P}\bigg[\exists f\in\mathcal{F}:\Big|\mathbb{E}_{\mathcal{S}}[f(z)]-\mathbb{E}_{\mathcal{S}}[f_r(z)]\Big|+\Big|\mathbb{E}_{\mathcal{S}}[f_r(z)]-\mathbb{E}[f_r(z)]\Big|+\Big|\mathbb{E}[f_r(z)]-\mathbb{E}[f(z)]\Big|>\epsilon\bigg]\\
		\le&\mathbb{P}\bigg[\exists f_{r}\in\mathcal{F}_{\frac{\epsilon}{3}}:\Big|\mathbb{E}_{\mathcal{S}}[f_r(z)]-\mathbb{E}[f_r(z)]\Big|>\frac{\epsilon}{3}\bigg]\\
		\le&\mathcal{N}_{\infty} \Big(\mathcal{F},\frac{\epsilon}{3}\Big)\mathbb{P}\bigg[\Big|\mathbb{E}_{\mathcal{S}}[f(z)]-\mathbb{E}[f(z)]\Big|>\frac{\epsilon}{3}\bigg]\\
		\le &2\mathcal{N}_{\infty} \Big(\mathcal{F},\frac{\epsilon}{3}\Big)\exp\Big(-\frac{2L\epsilon^2}{9(b-a)^2}\Big).
	\end{align*}
	
	The third inequality follows from the fact that when $r=\frac{\epsilon}{3}$, we have
	\begin{equation*}
	 |f(z)-f_r(z)|<\frac{\epsilon}{3},   
	\end{equation*}
	for all $z\in\mathcal{Z}$ and $f\in\mathcal{F}$. Then, from the Hoeffding's inequality, we have
	\begin{equation*}
		\Big|\frac{1}{L}\sum\limits_{i=1}^Lf(z_i)-\frac{1}{L}\sum\limits_{i=1}^Lf_r(z_i)\Big|<\frac{\epsilon}{3}~~\text{and}~~\Big|\mathbb{E}[f(z)]-\mathbb{E}[f_r(z)]\Big|<\frac{\epsilon}{3}.
	\end{equation*}
	
	The proof is completed.
\end{proof}

The following lemma bounds the covering number $\mathcal{N}_{\infty,1}(\mathcal{R},2r)$ by the covering number $\mathcal{N}_{\infty,1}(\mathcal{U},r)$. Then the gap between expected ex-post regret and empirical ex-post regret can be bounded by the covering number $\mathcal{N}_{\infty,1}(\mathcal{U},r)$.

\begin{lemma}[cf. Lemma E.3, \cite{duan2022context}]\label{lemma2}
	We define $\mathcal{N}_{\infty,1}(\mathcal{R},r)$ and $\mathcal{N}_{\infty,1}(\mathcal{U},r)$ as the minimum numbers of balls with radius $r$ that can cover spaces $\mathcal{R}$ and $\mathcal{U}$ under distance $l_{\infty,1}$, respectively. Then, we have that
	\begin{equation*}
		\mathcal{N}_{\infty,1}(\mathcal{R},2r)\le \mathcal{N}_{\infty,1}(\mathcal{U},r)
	\end{equation*}
\end{lemma}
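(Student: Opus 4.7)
The plan is to promote an $r$-cover of the utility space $\mathcal{U}$ into a $2r$-cover of the regret space $\mathcal{R}$, by exploiting the fact that the map $u\mapsto reg$ is $2$-Lipschitz under $l_{\infty,1}$. First, fix a minimal $r$-cover $\{u^{(1)},\dots,u^{(K)}\}\subset\mathcal{U}$ with $K=\mathcal{N}_{\infty,1}(\mathcal{U},r)$ and let $\{reg^{(1)},\dots,reg^{(K)}\}$ be the regrets induced by these utilities. For any $reg^\omega\in\mathcal{R}$, choose $k$ with $l_{\infty,1}(u^\omega,u^{(k)})\le r$; it then suffices to prove $l_{\infty,1}(reg^\omega,reg^{(k)})\le 2r$, whence the $K$ induced regrets form the desired $2r$-cover and the covering-number bound follows.

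The pointwise estimate is obtained as follows. For fixed $(v,x,y)$ and each bidder $i$, write
$$reg_i^\omega - reg_i^{(k)} = \Bigl(\max_{v'_i} u^\omega_i(v_i,(v'_i,v_{-i}),x,y)-\max_{v'_i} u^{(k)}_i(v_i,(v'_i,v_{-i}),x,y)\Bigr) - \bigl(u^\omega_i(v_i,v,x,y)-u^{(k)}_i(v_i,v,x,y)\bigr),$$
apply the elementary inequality $|\max f-\max g|\le \max|f-g|$ to the first bracket, and use the triangle inequality. This gives
$$|reg^\omega_i-reg^{(k)}_i|\le \max_{v'_i}|u^\omega_i(v_i,(v'_i,v_{-i}),x,y)-u^{(k)}_i(v_i,(v'_i,v_{-i}),x,y)| + |u^\omega_i(v_i,v,x,y)-u^{(k)}_i(v_i,v,x,y)|.$$

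The main obstacle is summing the first term over $i$ without losing a factor, because in general $\sum_i \max_{v'_i} F_i(v'_i)$ can exceed $\max_{v'}\sum_i F_i(v'_i)$. The key observation is that the maximizer $v^{*,i}$ in the $i$-th summand is a single row-vector that affects only bidder $i$'s summand; hence I may assemble the rows $v^{*,1},\dots,v^{*,n}$ into a single deviation matrix $v^*\in\mathcal{V}^{n\times m}$ and rewrite $\sum_i \max_{v'_i}(\cdot)=\sum_i |u^\omega_i(v_i,(v^*_i,v_{-i}),x,y)-u^{(k)}_i(v_i,(v^*_i,v_{-i}),x,y)|$, which is at most $l_{\infty,1}(u^\omega,u^{(k)})\le r$ by the definition of $l_{\infty,1}$ on $\mathcal{U}$. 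The sum of the second terms is likewise at most $l_{\infty,1}(u^\omega,u^{(k)})\le r$, obtained by setting $v'=v$. Adding the two bounds and taking the maximum over $(v,x,y)$ yields $l_{\infty,1}(reg^\omega,reg^{(k)})\le 2r$, which closes the argument.
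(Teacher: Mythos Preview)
Your proposal is correct and follows essentially the same approach as the paper's proof: lift an $r$-cover of $\mathcal{U}$ to a $2r$-cover of $\mathcal{R}$ via the induced regrets, split $|reg_i-reg'_i|$ into the max-term and the truthful-term, and bound each by $r$ using $|\max f-\max g|\le\max|f-g|$ together with the definition of $l_{\infty,1}$ on $\mathcal{U}$. If anything, you are more explicit than the paper about the one subtle step---why $\sum_i\max_{v'_i}(\cdot)$ may be replaced by a single maximum over an assembled deviation matrix $v^*$---which the paper writes simply as $\max_{v,x,y}\sum_i\max_{v'_i}(\cdot)=\max_{v,v'_i,x,y}\sum_i(\cdot)$ without comment.
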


\begin{proof}
	By the definition of $\mathcal{N}_{\infty,1}(\mathcal{U},r)$, there exists an $r$-cover $\mathcal{U}_r$ for $\mathcal{U}$, such that $|\mathcal{U}_r|=\mathcal{N}_{\infty,1}(\mathcal{U},r)$ and for any $u\in\mathcal{U}$, 
	\begin{equation*}
		l_{\infty,1}(u,\mathcal{U}_r)=\inf_{u'\in\mathcal{U}_r}l_{\infty,1}(u,u')<r.
	\end{equation*}
	We define $\mathcal{R}_r$ as 
	\begin{equation*}
		\{reg\in\mathcal{R}:reg_i(v,x,y)=\max_{v_i'}u_i(v_i,(v_i',v_{-i}),x,y)-u_i(v_i,v,x,y) \text{ for some }u_i\in\mathcal{U}_r\}.
	\end{equation*}
	Then, we can prove that $\mathcal{R}_r$ is a $2r$-cover for the space $\mathcal{R}$, {\it i.e.},
	\begin{align*}
		&l_{\infty,1}(reg,\mathcal{R}_r)\\
		=&\inf_{reg'\in\mathcal{R}_r}l_{\infty,1}(reg,reg')\\
		=&\inf_{reg'\in\mathcal{R}_r}\max_{v,x,y}\sum_{i=1}^n|reg_i(v,x,y)-reg_i'(v,x,y)|\\
		=&\inf_{u'\in\mathcal{U}_r}\max_{v,x,y}\sum_{i=1}^n\Big|[\max_{v_i'}u_i(v_i,(v_i',v_{-i}),x,y)-u_i(v_i,v,x,y)]\\
		&-[\max_{v_i'}u'_i(v_i,(v_i',v_{-i}),x,y)-u'_i(v_i,v,x,y)]\Big|\\
		\le&\inf_{u'\in\mathcal{U}_r}\max_{v,x,y}\bigg[\sum_{i=1}^n\Big|\max_{v_i'}u_i(v_i,(v_i',v_{-i}),x,y)
		-\max_{v_i'}u'_i(v_i,(v_i',v_{-i}),x,y)\Big|\\
		&+\Big|u_i(v_i,v,x,y)-u_i'(v_i,v,x,y)\Big|\bigg]\\
		\le&\max_{v,x,y}\bigg[\sum_{i=1}^n\Big|\max_{v_i'}u_i(v_i,(v_i',v_{-i}),x,y)
		-\max_{v_i'}u^*_i(v_i,(v_i',v_{-i}),x,y)\Big|\\
		&+\Big|u_i(v_i,v,x,y)-u^*_i(v_i,v,x,y)\Big|\bigg]~~\text{(where $l_{\infty,1}(u,u^*)<r$)}\\
		\le&\max_{v,x,y}\sum_{i=1}^n\Big|\max_{v_i'}u_i(v_i,(v_i',v_{-i}),x,y)
		-\max_{v_i'}u^*_i(v_i,(v_i',v_{-i}),x,y)\Big|+r\\
		\le&\max_{v,x,y}\sum_{i=1}^n\max_{v_i'}|u_i(v_i,(v_i',v_{-i}),x,y)-u^*_i(v_i,(v_i',v_{-i}),x,y)|+r\\
		=&\max_{v,v_i',x,y}\sum_{i=1}^n|u_i(v_i,(v_i',v_{-i}),x,y)-u^*_i(v_i,(v_i',v_{-i}),x,y)|+r< 2r.
	\end{align*}
	
	Eventually, we have
	\begin{equation*}
	    \mathcal{N}_{\infty,1}(\mathcal{R},2r)\le |\mathcal{R}_r|\le |\mathcal{U}_r|= \mathcal{N}_{\infty,1}(\mathcal{U},r).
	\end{equation*}
	The proof is completed.
\end{proof}


We now prove Theorem \ref{gen} and Corollary \ref{corollary}.

\begin{proof}[Proof of Theorem \ref{gen} and Corollary \ref{corollary}] Applying Lemma \ref{lemma} to the spaces $\mathcal{P}$ and $\mathcal{U}$, we have that
	\begin{align*}
		&\mathbb{P}\bigg[\exists \omega \in\Omega:\bigg|\underset{(v,x,y)}{\mathbb{E}}\bigg[\sum\limits_{i=1}^n{p_i^\omega(v,x,y)}\bigg]-\frac{1}{L}\sum\limits_{l=1}^L\sum\limits_{i=1}^n p_{i}^\omega(v^{(l)},x^{(l)},y^{(l)})\bigg|>\epsilon\bigg]\\
		\le& 2\mathcal{N}_{\infty,1} \big(\mathcal{P},\frac{\epsilon}{3}\big)\exp\bigg(-\frac{2L\epsilon^2}{9n^2}\bigg),
	\end{align*}
	and
	\begin{align*}
		&\mathbb{P}\bigg[\exists \omega \in\Omega:\bigg|\underset{(v,x,y)}{\mathbb{E}}\bigg[\sum\limits_{i=1}^n{reg}_i^{\omega}(v,x,y)\bigg]-\frac{1}{L}\sum\limits_{l=1}^L\sum\limits_{i=1}^n {reg}_{i}^\omega(v^{(l)},x^{(l)},y^{(l)})\bigg|>\epsilon\bigg]\\
		\le& 2\mathcal{N}_{\infty,1} \big(\mathcal{R},\frac{\epsilon}{3}\big)\exp\bigg(-\frac{2L\epsilon^2}{9n^2}\bigg)\\
		\le&2\mathcal{N}_{\infty,1} \big(\mathcal{U},\frac{\epsilon}{6}\big)\exp\bigg(-\frac{2L\epsilon^2}{9n^2}\bigg),
	\end{align*}
	where the last inequality follows from Lemma \ref{lemma2}.
	
	Further, we assume that
	\begin{equation*}
	\epsilon\ge\sqrt{\frac{9n^2}{2L}\bigg(\log\frac{4}{\delta}+\max\Big\{\log\mathcal{N}_{\infty,1}\Big(\mathcal{P},\frac{\epsilon}{3}\Big),log\mathcal{N}_{\infty,1}\Big(\mathcal{U},\frac{\epsilon}{6}\Big)\Big\}\bigg)}.
	\end{equation*}
	Then, we have the following inequalities,
	\begin{gather*}
		\mathbb{P}\bigg[\exists \omega \in\Omega:\bigg|\underset{(v,x,y)}{\mathbb{E}}\bigg[\sum\limits_{i=1}^n{p_i^\omega(v,x,y)}\bigg]-\frac{1}{L}\sum\limits_{l=1}^L\sum\limits_{i=1}^n p_{i}^\omega(v^{(l)},x^{(l)},y^{(l)})\bigg|>\epsilon\bigg]\le\frac{\delta}{2},
	\end{gather*}
	and
	\begin{gather*}
		\mathbb{P}\bigg[\exists \omega \in\Omega:\bigg|\underset{(v,x,y)}{\mathbb{E}}\bigg[\sum\limits_{i=1}^n{reg}_i^{\omega}(v,x,y)\bigg]-\frac{1}{L}\sum\limits_{l=1}^L\sum\limits_{i=1}^n {reg}_{i}^\omega(v^{(l)},x^{(l)},y^{(l)})\bigg|>\epsilon\bigg]\le\frac{\delta}{2}.
	\end{gather*}
	
 Thus, with probability at least $1-\delta$, for any $\omega\in \Omega$, we have that
 \begin{equation}
 	\bigg|\underset{(v,x,y)}{\mathbb{E}}\bigg[\sum\limits_{i=1}^n{p_i^\omega(v,x,y)}\bigg]-\frac{1}{L}\sum\limits_{l=1}^L\sum\limits_{i=1}^n p_{i}^\omega(v^{(l)},x^{(l)},y^{(l)})\bigg|<\epsilon,\label{p1}
 \end{equation}
and
\begin{equation}
 	\bigg|\underset{(v,x,y)}{\mathbb{E}}\bigg[\sum\limits_{i=1}^n{reg}_i^{\omega}(v,x,y)\bigg]-\frac{1}{L}\sum\limits_{l=1}^L\sum\limits_{i=1}^n {reg}_{i}^\omega(v^{(l)},x^{(l)},y^{(l)})\bigg|<\epsilon.\label{p2}
\end{equation}
 Equivalently, when the number of samples $L$ is large enough, {\it i.e.},
 \begin{equation*}
     L\ge \frac{9n^2}{2\epsilon^2}\Big(\log\frac{4}{\delta}+\max\Big\{\log\mathcal{N}_{\infty,1}\Big(\mathcal{P},\frac{\epsilon}{3}\Big),log\mathcal{N}_{\infty,1}\Big(\mathcal{U},\frac{\epsilon}{6}\Big)\Big\}\Big),
 \end{equation*}
 then, the eqs. (\ref{p1}) and (\ref{p2}) both hold with probability at least $1-\delta$.
 
 The proof is completed.
\end{proof}

{\subsection{{Proof of the Generalization Bound for Myerson Auctions}}
Denote $rev(v,x,y)$ as $\sum_{i=1}^np_i(v,x,y)$, then we have the following theorem,
\begin{theorem}
Assume the item valuation for each bidder is not larger than $1$. When the sample complexity satisfies
$L\ge\frac{1}{2\epsilon^2}\log\frac{2}{\delta}$,
with probability at least $1-\delta$, we have 
\begin{equation*}
    \bigg|\frac{1}{L}\sum_{\ell=1}^Lrev(v^{(\ell)},x^{(\ell)},y^{(\ell)})-\mathbb{E}_{(v,x,y)}\Big[rev(v,x,y)\Big]\bigg|\le\epsilon.
\end{equation*}
\end{theorem}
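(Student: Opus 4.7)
The plan is to apply Hoeffding's inequality directly to the random variables $Z_\ell = rev(v^{(\ell)}, x^{(\ell)}, y^{(\ell)})$, which are i.i.d.\ by the sampling assumption. The key preliminary observation is that the Myerson auction is a single-item auction, so at most one bidder wins and pays; under individual rationality the winning payment is upper bounded by the winning bidder's valuation, which by hypothesis is at most $1$. Combined with non-negativity of payments, this yields $rev(v,x,y) \in [0,1]$ almost surely, so every $Z_\ell$ takes values in a bounded interval of length $1$.

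Given this boundedness, Hoeffding's inequality for the sum of $L$ independent $[0,1]$-valued random variables gives
\begin{equation*}
\mathbb{P}\!\left[\left|\frac{1}{L}\sum_{\ell=1}^L Z_\ell - \mathbb{E}[Z]\right| \ge \epsilon\right] \le 2\exp(-2L\epsilon^2).
\end{equation*}
Setting the right-hand side equal to $\delta$ and solving for $L$ yields the stated sample complexity $L \ge \frac{1}{2\epsilon^2}\log\frac{2}{\delta}$, after which the event in the theorem holds with probability at least $1-\delta$.

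The only nontrivial step is justifying the uniform bound $rev \le 1$; the rest is a textbook Hoeffding application. Unlike Theorem~\ref{gen}, no covering-number argument is required here, because the Myerson mechanism is a single fixed function of $(v,x,y)$ rather than a parametric family being learned, so there is no uniform-convergence union bound to take. This is why the sample complexity lacks the $\log \mathcal{N}_{\infty,1}$ and $n^2$ factors appearing in Theorem~\ref{gen}.
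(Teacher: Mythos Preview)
Your proposal is correct and follows essentially the same approach as the paper: bound $rev(v,x,y)\in[0,1]$ using the single-item feasibility constraint together with the payment $\le$ valuation $\le 1$ chain, then apply Hoeffding's inequality and invert to get the sample-complexity condition. The paper writes the revenue bound as the explicit chain $\sum_i p_i \le \sum_i g_i v_i \le \sum_i g_i \le 1$ (covering randomized allocations), which is the same reasoning you give in the ``one winner, IR payment'' language; your added remark about why no covering-number union bound is needed is a helpful clarification not present in the paper.
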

\begin{proof}
    Since $v_i\le 1$, we have 
    \begin{equation*}
        rev(v,x,y)=\sum_{i=1}^n p_i(v,x,y)\le\sum_{i=1}^n b_i(v,x,y)v_i\le\sum_{i=1}^n b_i(v,x,y)\le 1.
    \end{equation*}
    According to the Hoeffding’s inequality, we have

\begin{equation*}
\mathbb{P}\bigg[\bigg|\frac{1}{L}\sum_{\ell=1}^L rev(v^{(\ell)},x^{(\ell)},y^{(\ell)})-\mathbb{E}_{(v,x,y)}\Big[rev(v,x,y)\Big]\bigg|\ge \epsilon\bigg]\le 2\exp(-2L\epsilon^2).
\end{equation*}
Let $2\exp(-2L\epsilon^2)\le\delta$, then we obtain what we need. The proof is completed.
\end{proof}

\subsection{{Orbit Averaging over Subsets of Bidders/Items}}
In addition, we can extended our theory to orbit averaging over the subset of the bidders/items.
\begin{theorem}
Let $\mathcal{Q}$ be the orbit averaging over any subset of bidders and items, and $(g,p)$ be any mechanism. Then we have
\begin{equation*}
\mathbb{E}_{(v,x,y)}\bigg[\sum\limits_{i=1}^n [\mathcal{Q}{p}]_i(v,x,y)\bigg]=\mathbb{E}_{(v,x,y)}\bigg[\sum\limits_{i=1}^n{p}_i(v,x,y)\bigg],\end{equation*}
and
\begin{equation*}
\mathbb{E}_{(v,x,y)}\bigg[\sum\limits_{i=1}^n{reg}_i(v,x,y)\bigg]\ge\mathbb{E}_{(v,x,y)}\bigg[\sum\limits_{i=1}^n[\mathcal{Q}{reg}]_i(v,x,y)\bigg],
\end{equation*}
where $reg_i$ is the ex-post regret of bidder $i$.
\end{theorem}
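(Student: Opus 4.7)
The plan is to mimic the proofs of Theorems \ref{the} and \ref{theorem plus} verbatim, replacing the full symmetric groups $S_n$ and $S_m$ with the subgroups $S_B$ and $S_I$ that fix the complement of a bidder subset $B\subseteq[n]$ and an item subset $I\subseteq[m]$, respectively. Concretely, I would take
\begin{equation*}
\mathcal{Q}g(v,x,y)=\frac{1}{|S_B|\,|S_I|}\sum_{\sigma_B\in S_B}\sum_{\sigma_I\in S_I}\sigma_B^{-1}g(\sigma_B v\sigma_I,\sigma_B x,y\sigma_I)\sigma_I^{-1},
\end{equation*}
and define $\mathcal{Q}p$ analogously (with no trailing $\sigma_I^{-1}$, since $p$ is an $n\times 1$ object). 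The bidder/item-symmetry hypothesis implies invariance of $\mathbb{P}(v,x,y)$ under every $(\sigma_B,\sigma_I)\in S_B\times S_I$, which is exactly what the previous proofs use about $S_n\times S_m$.

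\textbf{Step 1 (revenue invariance).} Reindex the sum $\sum_i p_{\sigma_B^{-1}(i)}=\sum_i p_i$ (permutation of a finite sum), then push the expectation through the average and apply distributional invariance $\mathbb{E}[p_i(\sigma_B v\sigma_I,\sigma_B x,y\sigma_I)]=\mathbb{E}[p_i(v,x,y)]$. This gives $\mathbb{E}\sum_i[\mathcal{Q}p]_i=\mathbb{E}\sum_i p_i$ exactly as in Step~1 of Theorem \ref{the}.

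\textbf{Step 2 (utility-sum identity).} Using $\sum_{i,j}g_{\sigma_B^{-1}(i)\sigma_I^{-1}(j)}(\cdot)[\sigma_B v\sigma_I]_{\sigma_B^{-1}(i)\sigma_I^{-1}(j)}=\sum_{i,j}g_{ij}(\sigma_B v\sigma_I,\sigma_B x,y\sigma_I)[\sigma_B v\sigma_I]_{ij}$ and then distributional invariance, show $\mathbb{E}\sum_i[\mathcal{Q}u]_i(v_i,v,x,y)=\mathbb{E}\sum_i u_i(v_i,v,x,y)$, paralleling Step~2 of Theorem \ref{the}.

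\textbf{Step 3 (regret inequality).} Apply $\max_{v'}\sum_i f_{i,\sigma}(v')\le \sum_\sigma\max_{v'}f_{i,\sigma}(v')$ with the linearity of the orbit average to pull $\max_{v'}$ inside the averaging over $(\sigma_B,\sigma_I)$, then reindex bidders and items and use distributional invariance to collapse each term to $\mathbb{E}\max_{v'}\sum_i u_i(v_i,(v_i',v_{-i}),x,y)$. Combining this with the identity from Step~2 yields $\mathbb{E}\sum_i[\mathcal{Q}\mathrm{reg}]_i\le\mathbb{E}\sum_i\mathrm{reg}_i$.

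\textbf{Main obstacle.} The argument is essentially routine given the template established for $\mathcal{Q}_1,\mathcal{Q}_2,\mathcal{Q}_3$. The only genuine subtlety is verifying that the distributional invariance used in Steps~1--3 really holds under the sub-permutation groups $S_B$ and $S_I$; this follows because invariance under $S_n\times S_m$ implies invariance under every subgroup, so no new hypothesis is needed. A minor bookkeeping point is that the feasibility checks in Appendix~\ref{A.1} must be re-verified for the sub-group averages, but the same pointwise bounds $\sum_i(\mathcal{Q}g)_{ij}\le 1$ and $(\mathcal{Q}p)_i\le\sum_j(\mathcal{Q}g)_{ij}v_{ij}$ go through unchanged because they only use convexity of the average.
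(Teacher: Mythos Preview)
Your proposal is correct, but the paper takes a slightly different route. Rather than rerunning Steps 1--3 of Theorem~\ref{the} with the subgroups $S_B$ and $S_I$ in place of $S_n$ and $S_m$, the paper reduces to Theorem~\ref{the} by \emph{conditioning}: it fixes the data of the bidders/items outside the chosen subset (call these $Z_1$ for the complement of $B$, $Z_2$ for the complement of $I$), observes that the conditional distribution of the remaining variables is still permutation-invariant, applies Theorem~\ref{the} conditionally on $Z_1$ (respectively $Z_2$), and then removes the conditioning via the tower property $\mathbb{E}[\mathbb{E}[X\mid Y]]=\mathbb{E}[X]$. Finally it shows $\mathcal{Q}_3=\mathcal{Q}_1\mathcal{Q}_2=\mathcal{Q}_2\mathcal{Q}_1$ exactly as in Lemma~\ref{lemma:composition} with $\tilde n,\tilde m$ in place of $n,m$. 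Your direct approach has the advantage of being self-contained and sidesteps the (mild) check that the conditional setup really fits the hypotheses of Theorem~\ref{the}; the paper's approach is terser and makes the reuse of Theorem~\ref{the} explicit. Substantively the two arguments coincide once unpacked, since invariance under $S_n\times S_m$ trivially implies invariance under any subgroup, which is the only property of the group used in Steps~1--3.
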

\begin{proof}
    Without loss of generality, we assume that $\mathcal{Q}_1$ takes average over the first $\tilde n$ bidders, $\mathcal{Q}_2$ takes average over the first $\tilde m$ items and $\mathcal{Q}_3$ takes average over the first $\tilde n$ bidders and $\tilde m$ items. Denote $Z_1$ as $(v_{ij},x_i:i>\tilde n,j\in[m])$ and $Z_2$ as $(v_{i j},y_j:i\in[n],j>\tilde{m})$. Following Theorem \ref{the}, we have
    \begin{equation*}
        \mathbb{E}\bigg[\sum\limits_{i=1}^n [\mathcal{Q}_1{p}]_i(v,x,y)\bigg|Z_1\bigg]=\mathbb{E}\bigg[\sum\limits_{i=1}^n{p}_i(v,x,y)\bigg|Z_1\bigg],
    \end{equation*}
    and
    \begin{equation*}
        \mathbb{E}\bigg[\sum\limits_{i=1}^n{reg}_i(v,x,y)\bigg|Z_1\bigg]\ge\mathbb{E}\bigg[\sum\limits_{i=1}^n[\mathcal{Q}_1{reg}]_i(v,x,y)\bigg|Z_1\bigg].
    \end{equation*}
    Then, combining the fact that $\mathbb{E}[\mathbb{E}[X|Y]]=\mathbb{E}[X]$, we have
        \begin{equation*}
        \mathbb{E}\bigg[\sum\limits_{i=1}^n [\mathcal{Q}_1{p}]_i(v,x,y)\bigg]=\mathbb{E}\bigg[\sum\limits_{i=1}^n{p}_i(v,x,y)\bigg],
    \end{equation*}
    and
    \begin{equation*}
        \mathbb{E}\bigg[\sum\limits_{i=1}^n{reg}_i(v,x,y)\bigg]\ge\mathbb{E}\bigg[\sum\limits_{i=1}^n[\mathcal{Q}_1{reg}]_i(v,x,y)\bigg].
    \end{equation*}
    Similarly, replace $Z_1$ by $Z_2$, we can obtain the equations all hold for $\mathcal{Q}_2$.
    
    Finally, we prove that $\mathcal{Q}_3=\mathcal{Q}_1\mathcal{Q}_2=\mathcal{Q}_2\mathcal{Q}_1$. The proof is same with the proof of Lemma \ref{lemma:composition}. Only replace $n$ and $m$ by $\tilde n$ and $\tilde m$ respectively, and we obtain the result. 
    
    The proof is completed.
\end{proof}
\begin{theorem}
Let $\mathcal{Q}$ be the orbit averaging over any subset of bidders and items, and $\mathcal{U}=\{u^\omega:\omega\in\Omega\}$ and $\mathcal{P}=\{p^\omega:\omega\in\Omega\}$ the sets of all possible utilities and payment rules. Then we have
\begin{equation*}
\mathcal{N}_{\infty,1}(\mathcal{Q}{\mathcal{U}},r)\le\mathcal{N}_{\infty,1}(\mathcal{U},r)~~\text{and}~~\mathcal{N}_{\infty,1}(\mathcal{Q}{\mathcal{P}},r)\le\mathcal{N}_{\infty,1}(\mathcal{P},r),
\end{equation*}
where $\mathcal{N}_{\infty,1}({\mathcal{U}},r)$ and $\mathcal{N}_{\infty,1}({\mathcal{P}},r)$ are the minimum numbers of balls with radius $r$ that can cover $\mathcal{U}$ and $\mathcal{P}$ under $l_{\infty,1}$-distance, respectively.
\end{theorem}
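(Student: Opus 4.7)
The plan is to mimic the two-step argument used in the proof of Theorem \ref{covering}, relying on the fact that an orbit averaging over a subset of bidders/items is still a convex combination of operators that permute coordinates. Concretely, let $\tilde n \le n$ and $\tilde m \le m$ and let $H \le S_n \times S_m$ be the subgroup fixing all bidders with index $> \tilde n$ and all items with index $> \tilde m$. The subset orbit averaging is then
\begin{equation*}
    \mathcal{Q}f(v,x,y) = \frac{1}{|H|}\sum_{(\sigma_n,\sigma_m)\in H}\sigma_n^{-1} f(\sigma_n v \sigma_m, \sigma_n x, y\sigma_m)\sigma_m^{-1},
\end{equation*}
with the analogous definitions on the utility/payment spaces.

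\textbf{Step 1 (non-expansiveness of $\mathcal{Q}$).} I would first verify $l_{\infty,1}(\mathcal{Q}p, \mathcal{Q}p') \le l_{\infty,1}(p, p')$ and $l_{\infty,1}(\mathcal{Q}u, \mathcal{Q}u') \le l_{\infty,1}(u, u')$. The argument is essentially copied from Step 1 in the proof of Theorem \ref{covering}: expand $\mathcal{Q}p - \mathcal{Q}p'$ as an average over $H$, apply the triangle inequality to push the absolute value inside, exchange the max with the average (using $\max \sum \le \sum \max$), and then observe that for each fixed $(\sigma_n,\sigma_m)\in H$ the substitution $(v,x,y)\mapsto (\sigma_n v\sigma_m,\sigma_n x,y\sigma_m)$ is a bijection on the domain, so the inner maximum equals $l_{\infty,1}(p,p')$. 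Summing the $1/|H|$ contributions collapses to $l_{\infty,1}(p,p')$. The utility case is identical except one also permutes the misreport argument $v_i'$ along with the other coordinates.

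\textbf{Step 2 (covering number under a non-expansive surjection).} Having established non-expansiveness, $\mathcal{Q}: \mathcal{U} \to \mathcal{Q}\mathcal{U}$ and $\mathcal{Q}: \mathcal{P} \to \mathcal{Q}\mathcal{P}$ are surjective $1$-Lipschitz maps. I then invoke the generic lemma used in Step 2 of the proof of Theorem \ref{covering}: if $\mathcal{A}$ is an $r$-cover of $\mathcal{U}$ (resp.\ $\mathcal{P}$) under $l_{\infty,1}$, then $\mathcal{Q}\mathcal{A}$ is an $r$-cover of $\mathcal{Q}\mathcal{U}$ (resp.\ $\mathcal{Q}\mathcal{P}$). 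Taking $\mathcal{A}$ minimal gives $\mathcal{N}_{\infty,1}(\mathcal{Q}\mathcal{U}, r) \le |\mathcal{Q}\mathcal{A}| \le |\mathcal{A}| = \mathcal{N}_{\infty,1}(\mathcal{U}, r)$, and similarly for $\mathcal{P}$, which is exactly the claim.

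\textbf{Anticipated obstacle.} Nothing truly difficult arises: the only subtle point is checking that the coordinate re-indexing induced by $\sigma_n^{-1}(\cdot)\sigma_m^{-1}$ on the output is compatible with the $\ell_1$-sum over bidders inside the $l_{\infty,1}$ definition, so that permuting bidders really does leave $\sum_i |p_i - p_i'|$ invariant and permuting items leaves $\sum_i |u_i - u_i'|$ (with synchronized permutation of the misreport argument) invariant. Once this bookkeeping is done for a single element of $H$, the convex combination over $H$ goes through verbatim, and the extension to combined bidder/item subset averaging follows from Lemma \ref{lemma:composition} (or its obvious subset analogue) exactly as in the post-proof remark of Theorem \ref{covering}.
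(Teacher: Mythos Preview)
Your proposal is correct and follows essentially the same route as the paper: the paper's proof simply states that one replaces $n$ and $m$ by $\tilde n$ and $\tilde m$ in the proof of Theorem \ref{covering}, which is exactly the two-step non-expansiveness plus surjective-Lipschitz covering argument you outline. Your write-up is in fact more detailed than the paper's (which is a one-line reduction), but the underlying argument is identical.
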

\begin{proof}
 Without loss of generality, we assume that $\mathcal{Q}_1$ takes average over the first $\tilde n$ bidders, $\mathcal{Q}_2$ takes average over the first $\tilde m$ items and $\mathcal{Q}_3$ takes average over the first $\tilde n$ bidders and $\tilde m$ items.
 
    We can prove the distance between two mechanisms becomes smaller after orbit averaging, {\it i.e.},
    \begin{equation*}
        l_{\infty,1}(\mathcal{Q}p,\mathcal{Q}p')\le l_{\infty,1}(p,p')~~\text{and}~~l_{\infty,1}(\mathcal{Q}u,\mathcal{Q}u')\le l_{\infty,1}(u,u').
    \end{equation*}
    Only replace $n$ and $m$ by $\tilde n$ and $\tilde m$ in the proof of Theorem \ref{covering}, and we obtain the results.
    
    The proof is completed.
\end{proof}
}

{

\subsection{Average over Subgroups}
It is worth noting that our proof only replies one assumption that the valuation joint distribution is invariant under the bidder/item permutation. Consequently, we may adopt orbit averaging over any subgroup of $S_n\times S_m$, while {the benefits on revenue and ex-post regret still hold}. Hence, there is a trade-off between the auction mechanism performance (revenue and ex-post regret) and the computational complexity: better performance requires more computation.
In addition, the choice of the subgroup can also depend on the input feature $x$ \cite{puny2022frame}, which could be more flexible.}
\section{Additional Experimental Details}\label{exdetails}

This section presents additional experimental details and results omitted from the main text due to space limitation. 

\subsection{Additional Experimental Settings}

In this section, we present detailed experimental settings. 

\subsubsection{Network Architectures}

We first describe the RegretNet's architecture \cite{dutting2019optimal}. A RegretNet consists of two parts: the allocation network $g^\omega:\mathbb{R}^{nm}\to[0,1]^{nm}$ and the payment network $p^\omega:\mathbb{R}^{nm}\to \mathbb{R}_{\ge0}^n$, both of which are modeled as three-layer fully-connected networks with {\it tanh} activations. Every layer in the two networks includes $100$ nodes. 

For each item $j$, the payment network outputs a probability vector $(g_{1j}^\omega(b),\dots,g_{nj}^\omega(b))^T$, where $g_{ij}^\omega(b)$ is the probability of allocating the item $j$ to the bidder $i$. To avoid allocating one item over once, a feasible allocation network needs to satisfy $\sum_{i=1}^ng_{ij}^\omega(b)\le 1$ for all $j\in[m]$, $\omega\in\Omega$, and $b\in\mathcal{V}^{nm}$. Therefore, we compute the allocation via a {\it softmax activation function}. In addition, to present the probability that the item is reserved, an extra dummy node is included in the softmax computation.

To ensure the {\it individual rational} condition, the payment network $p^\omega$ is required to output a payment vector $p^\omega(b)$, such that $p_i^\omega(b)\le \sum_{j=1}^mg_{ij}^\omega(b)b_{i j}$ for all $i\in[n]$. Therefore, the payment network first computes a fractional
payment $\overline{p}_i^\omega(b)\in[0,1]$ for each bidder $i$ using a sigmoidal unit. Then, the final payment of the bidder $i$ is 
\begin{equation*}
    p^\omega_i(b)=\overline{p}_i^\omega(b)\sum_{j=1}^mg_{i j}^\omega(b)b_{i j}\le\sum_{j=1}^mg_{ij}^\omega(b)b_{i j}.
\end{equation*}
An overview of the RegretNet's architecture is illustrated in the following Figure \ref{fig:architecture}.

\begin{figure}[h]
    \centering
    \includegraphics[scale=0.47]{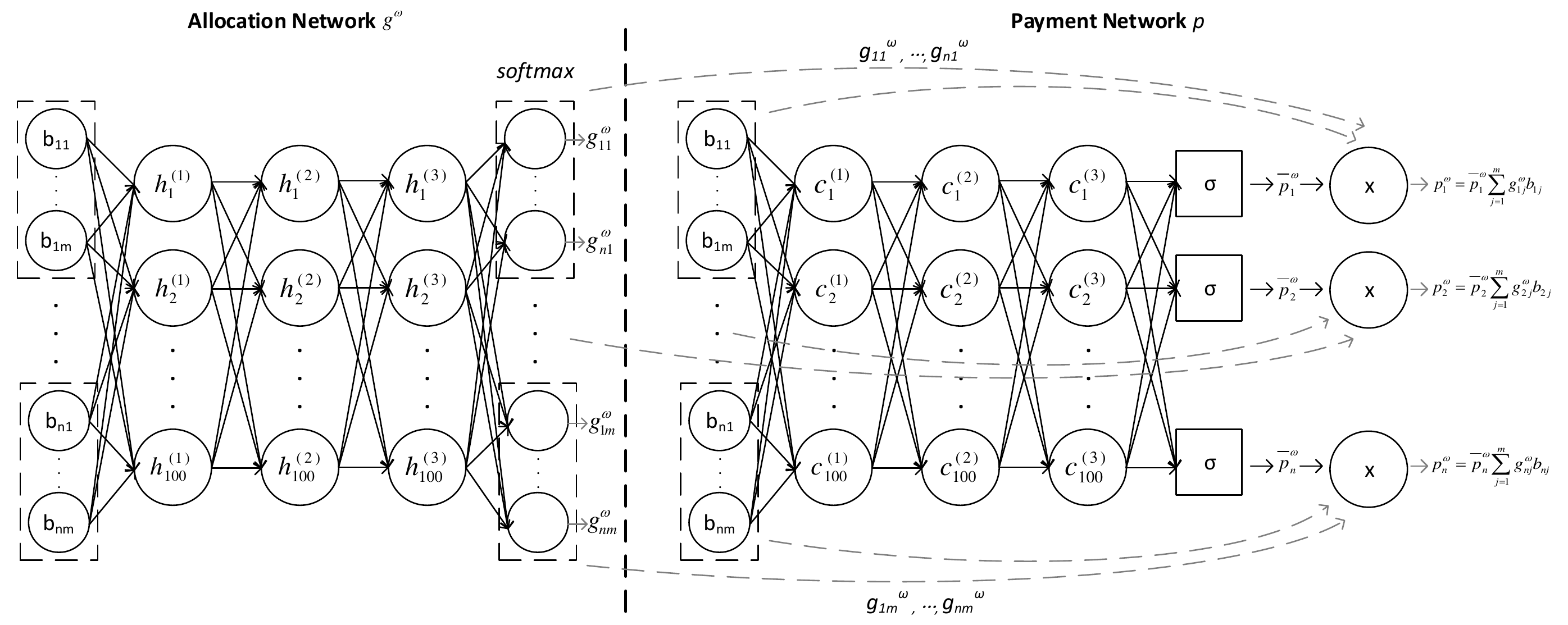}
    \caption{The allocation network $g^\omega$ and the payment network $p^\omega$ of the RegretNet with the overall mechanism parameter $\omega=(\omega_g,\omega_p)$.}
    \label{fig:architecture}
\end{figure}

RegretNet-PE is designed by modifying RegretNet. We adopt the allocation rule as $\widetilde{g}^\omega=\mathcal{Q}_3g^\omega$ and the payment rule as $\widetilde{p}^\omega=\mathcal{Q}_3p^\omega$, respectively. In this way, we may guarantee that in RegretNet-PE, the allocation is feasible and the mechanism is {\it individual rational}, {\it i.e.}, $\sum_{i=1}^n\widetilde{g}_{ij}^\omega(b)\le 1$, and $\widetilde{p}^\omega_i(b)\le\sum_{j=1}^m\widetilde{g}^\omega_{ij}(b)b_{ij}$. We may also show that the RegretNet-PE is always permutation-equivariant and has the same number of coefficients as the RegretNet. The proof can be found in 
Appendix \ref{A.1}.

\subsubsection{Training Procedures}

We adopt the augmented Lagrangian method to minimize the following object function with a
quadratic penalty term for violating the constraints,
\begin{equation*}
	\mathcal{L}_{\rho}(\omega,\lambda)=-\frac{1}{L}\sum_{l=1}^L\sum_{i=1}^np^\omega_i(v^{(l)})+\sum_{i=1}^n \lambda_i \widehat{reg}_i(\omega)+\frac{\rho}{2}\Big(\sum_{i=1}^n\widehat{reg}_i\Big)^2,
\end{equation*}
where $L$ is the number of samples, $\lambda$ is a vector of Lagrange multipliers, and $\rho>0$ is a parameter to control the weight of the quadratic penalty. We alternately update the overall mechanism parameter $\omega$ and the Lagrange multiplier $\lambda$ as follows:

(a) $\omega^{new}\in\arg\min_{\omega} \mathcal{L}_{\rho}(\omega^{old},\lambda^{old})$  (every iteration);

(b) $\lambda^{new}_i=\lambda_i^{old}+\rho\cdot\widehat{reg}_i(\omega^{new})$, $\forall i\in[n]$ (every $T_\lambda$ iterations).

The training procedure is described in the following Algorithm \ref{algorithm}. 

\begin{algorithm}[h]\label{algorithm}
	\caption{RegretNet and RegretNet-PE Training}
	\KwIn{Batches $\mathcal{S}_1,\dots,\mathcal{S}_T$ of size $B$}
	\textbf{Parameters: }$\forall t\in [T]$, $\rho_t>0$, $\gamma>0$, $\eta>0$, $T\in\mathbb{N}$, $R\in\mathbb{N}$, $T_\lambda\in\mathbb{N}$\\
	\textbf{Initialize: }$\omega^0\in\mathbb{R}^d$, $\lambda^0\in\mathbb{R}^n$\\
	\For{$t = 0$ \KwTo $T$}
	{Receive batch $S_t=\{v^{(1)},\dots,v^{(B)}\}$\\
		Initialize misreports \ ${v'}_i^{(l)} \in \mathcal{V}^m$, $\forall l\in [B]$, $i\in[n]$\\
		\For{$r = 0$ \KwTo $R$}{
			$\forall l\in[B]$, $i\in [n]$:\\
			\ \ \ \ \ ${v'}_i^{(l)} \leftarrow {v'}_i^{(l)}+\gamma\nabla_{v_i'}u_i^\omega\big(v_i^{(l)},({v'}_i^{(l)},v_{-i}^{(l)})\big)$
		}
		Compute  ex-post regret  gradient :
		{$\forall l\in[B]$, $i\in [n]$:\\
			\ \ \ \ \ $g_{l,i}^t \leftarrow \nabla_{\omega}\Big[u_i^\omega\big(v_i^{(l)},({v'}_i^{(l)},v_{-i}^{(l)})\big)-u_i^\omega\big(v_i^{(l)},v^{(l)}\big)\Big]\Big|_{\omega=\omega^t}$
		}\\
		Compute Lagrangian gradient using Equation \ref{lag} and update $\omega^t$:\\
		\ \ \ \ \ {$\omega^{t+1}\leftarrow\omega^t-\eta \nabla_{\omega}\mathcal{L}_{\rho_t}(\omega^t,\lambda^t)$}\\
		Update Lagrange multipliers $\lambda$ once in $T_\lambda$ iterations:\\
		\eIf{$t$ is a multiple of $T_\lambda$}
		{$\lambda^{t+1}_i=\lambda^t_i+\rho_t\widehat{reg}_i(\omega^{t+1})$, $\forall i\in[n]$}
		{$\lambda^{t+1}=\lambda^t$}
	}
\end{algorithm}

We divide all training samples $\mathcal{S} $ into $T$ batches $\mathcal{S}_1,\dots,\mathcal{S}_T$ of size $B$. At iteration $t$, we use the batch $\mathcal{S}_t=\{v^{(1)},\dots,v^{(B)}\}$. 

The update (a) is computed via Adam. The gradient of $\mathcal{L}_\rho$ {\it w.r.t.} $\omega$ for a fixed $\lambda^t$ is as below,
\begin{equation*}
	\nabla_\omega\mathcal{L}_\rho(\omega,\lambda^t)=-\frac{1}{B}\sum_{l=1}^B\sum_{i=1}^n\nabla_{\omega}p_i^{\omega}(v^{(l)})+\sum_{l=1}^B\sum_{i=1}^n\lambda^t_ig_{l,i}^t+\rho\sum_{l=1}^B\sum_{i=1}^n\widehat{reg}_i(\omega)g_{l,i}^t,
\end{equation*}
where
\begin{gather*}
	\widehat{reg}_i(\omega)=\frac{1}{B}\sum_{l=1}^B\max_{v_i'\in\mathcal{V}^m}u_i^\omega(v_i^{(l)},(v_i',v_{-i}^{(l)}))-u_i^\omega(v_i^{(l)},v^{(l)}),
\end{gather*} 
and
\begin{gather*}
	g_{l,i}^t = \nabla_{\omega} \Big[\max_{v_i'\in\mathcal{V}^m}u_i^\omega(v_i^{(l)},(v_i',v_{-i}^{(l)}))-u_i^\omega(v_i^{(l)},v^{(l)})\Big]\Big|_{\omega=\omega^t}.
\end{gather*} 
	 Because $\widehat{reg}_i(\omega) $ and $g_{l,i}^t$ both contain a ``max'' over misreports\footnote{The misreport refers to an arbitrary bid, rather than restricted to be a truthful bid \cite{dutting2019optimal}.}, we use another Adam to compute the approximated best biddings ${v'}^{(l)}$. In each update on $\omega^t$, we perform $R$ updates to compute a best bidding ${v'}^{(l)}_i$ for each $i\in[n]$. In particular, we maintain the misreports ${v'}^{(l)}$ for each sample $l$ as the initial value in the next iteration. Then, we use these biddings ${v'}^{(l)}$ to compute the gradient $\nabla_\omega\mathcal{L}_\rho(\omega,\lambda^t)$ and then, update $\omega^t$ as $\omega^{t+1}=\omega^t-\eta\nabla_\omega\mathcal{L}_\rho(\omega^t,\lambda^t)$. 
	 After every $T_\lambda$ iterations, we update $\lambda^t$ as $\lambda^{t+1}_i=\lambda^t_i+\rho\widehat{reg}_i(\omega^{t+1})$. 
	 In addition, we increase the value of $\rho$ every a certain number of iterations, where we set the value of $\rho_t$ in each iteration $t$ prior to training.
\begin{table}[t]
	\centering{
		\caption{Additional experimental results. "$n\times m$ Normal" refers that there are $n$ bidders and $m$ items, and the valuation is drawn from the truncated normal distribution $\mathcal{N}(0.3,0.1)$ in [0,1]. The true values of the ex-post regret and the generalization error (GE) are the products of the values in the table and a factor of $10^{-5}$.}
		\label{table2}
		\setlength{\tabcolsep}{1.6mm}
		\renewcommand{\arraystretch}{1.2}
		
		\begin{tabular}{c|ccc|ccc}
			\toprule
			\multirow{2}{*}{Method} & \multicolumn{3}{c|}{$2\times1$ Normal}            & \multicolumn{3}{c}{$3\times1$ Normal}.  \\
			& Revenue & Regret & GE & Revenue & Regret & GE  \\
			\midrule
		Optimal                 & $0.304$   & $0$    & -  & $0.391$  & 0   & - \\
			\midrule
			RegretNet               & $0.275$   & $97.0$ & $8.50$ & $0.321$ & $84.0$ & $45.5$\\
			RegretNet-Test          & $0.275$   & $95.2$ & -    & $0.321$ & $75.0$ & - \\
			RegretNet-PE            & $0.276$   & $85.4$ & $8.40$  & $0.382$ &
			$69.7$ & $27.6$ \\
			\midrule
			\multirow{2}{*}{Method} & \multicolumn{3}{c|}{$2\times2$ Normal} & \multicolumn{3}{c}{$5\times3$ Normal}\\
			& Revenue  & Regret & GE  & Revenue & Regret  & GE\\ 
			\midrule
			RegretNet    &  $0.577$ & $343$ & $246$ & $1.05$  & $114$ & $77.0$ \\
			RegretNet-Test  &  $0.577$ & $327$ & - & $1.05$  & $32.0$ & - \\
			RegretNet-PE    &  $0.577$ & $318$ & $77.0$ & $1.09$  & $75.0$ & $70.0$ \\
			\bottomrule
			
		\end{tabular}
	}
\end{table}
\begin{table}[t]
	\centering{
		\caption{Additional experimental result, where "$n\times m$ Compound" refers that there are $n$ bidders and $m$ items, and the valuations are i.i.d. sampled from the compound distributions.}
		\label{table3}
		\setlength{\tabcolsep}{1.6mm}
		\renewcommand{\arraystretch}{1.2}
		
		\begin{tabular}{c|cc|cc}
			\toprule
			\multirow{2}{*}{Method} & \multicolumn{2}{c|}{$3\times1$ Compound}     &   \multicolumn{2}{c}{$5\times1$ Compound}    \\
			& Revenue & Regret & Revenue & Regret \\          
			\midrule
			RegretNet              	  & $0.516$   & $<0.001$ & $0.329$ & $<0.001$  \\
			EquivariantNet          & $0.498$   & $<0.001$ &  $0.311$    & $<0.001$  \\
			RegretNet-PE            & $0.539$   & $<0.001$ & $0.356$  & $<0.001$ \\
			\bottomrule
		\end{tabular}
	}
\end{table}

\begin{table}[t]
	\centering{
		\caption{Additional experimental result, where "$n\times m$ Uniform" refers that there are $n$ bidders and $m$ items, and the valuations are i.i.d. sampled from the uniform distribution $U[0,1]$.}
		\label{table5}
		\setlength{\tabcolsep}{1.6mm}
		\renewcommand{\arraystretch}{1.2}
		
		\begin{tabular}{c|ccc|ccc}
			\toprule
			\multirow{2}{*}{Method} &  \multicolumn{3}{c|}{$2\times5$ Uniform} & \multicolumn{3}{c}{$5\times3$ Uniform} \\
       & Revenue & Regret  & GE & Revenue & Regret  & GE \\
\midrule
RegretNet   & $2.24$  & $104$ & $86.4$  & ${1.56}$ & $28.4$ & ${19.4}$\\
RegretNet-Test & $2.24$  & $74.0$ & - & ${1.56}$ & $8.60$ & -\\
RegretNet-PE   & $2.38$  & $89.9$ & $24.9$ & $1.85$ & ${20.1}$ & ${11.8}$\\
			\bottomrule
		\end{tabular}
	}
\end{table}



\begin{figure}
    \centering
    \includegraphics[scale=0.77]{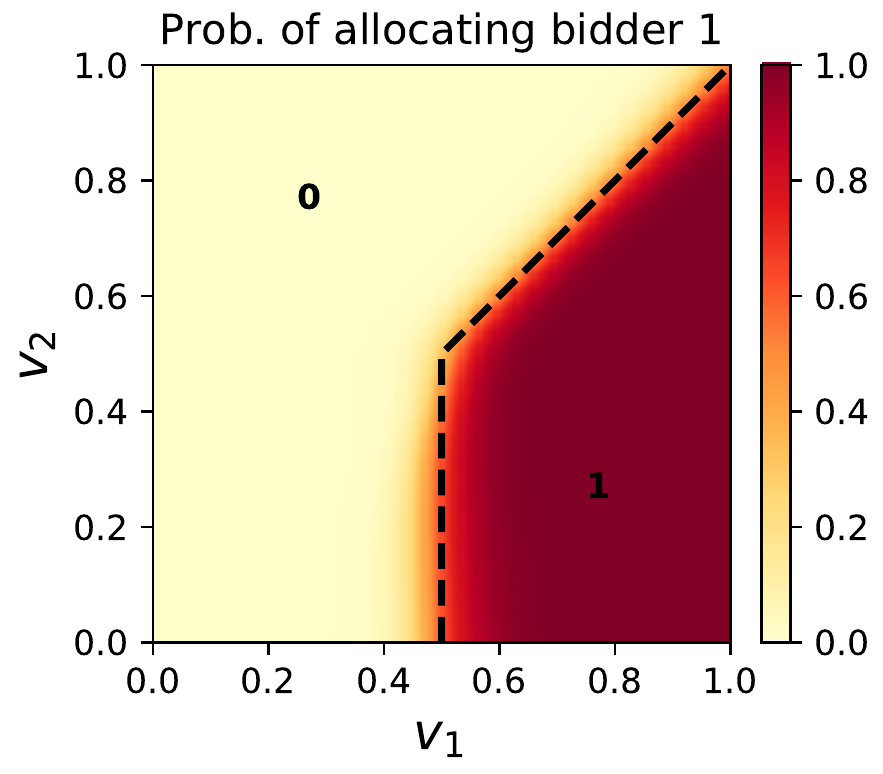}
    \includegraphics[scale=0.77]{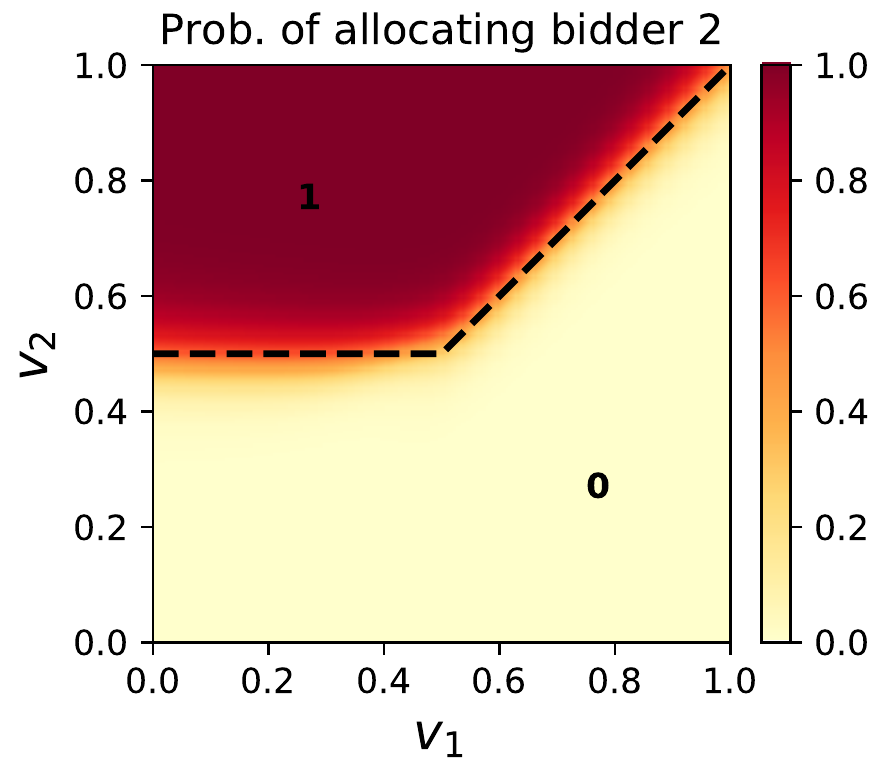}\\
    \includegraphics[scale=0.77]{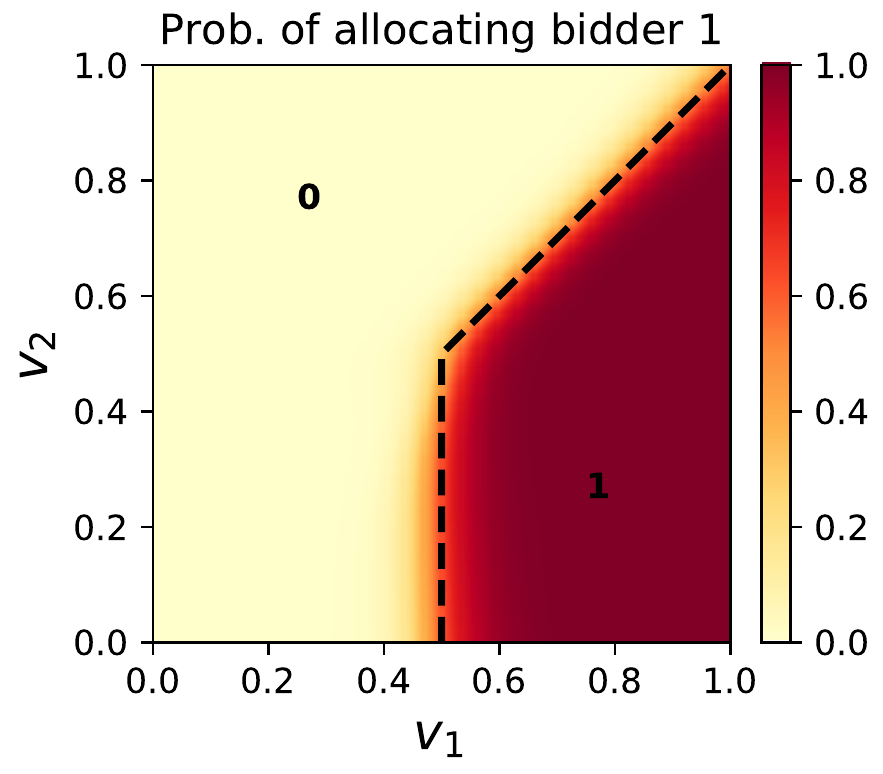}
    \includegraphics[scale=0.77]{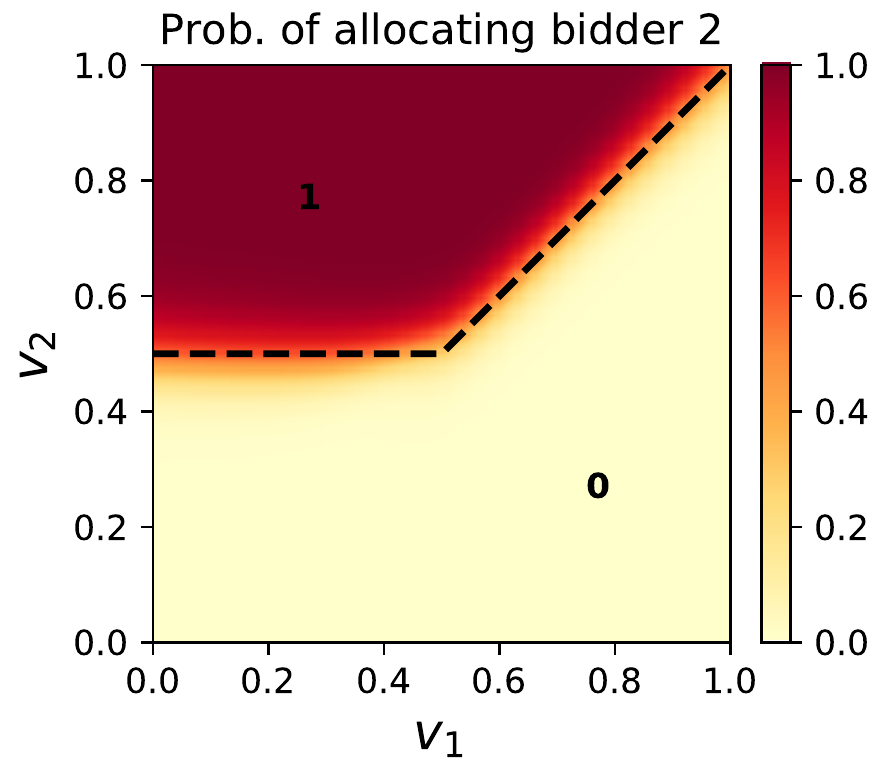}
    \caption{Allocation rule learned by RegretNet (up) and RegretNet-PE (down) for two-bidder and one-item setting. The solid regions describe the probability to allocating the item to bidder $1$ (left) and bidder $2$ (right). The optimal auction mechanism is described by the regions separated by the dashed black lines, where the number $0$ or $1$ is the probability of optimal allocation rule in the region.}
    \label{figure 2}
\end{figure}

\begin{figure}
    \centering
    \includegraphics[scale=0.77]{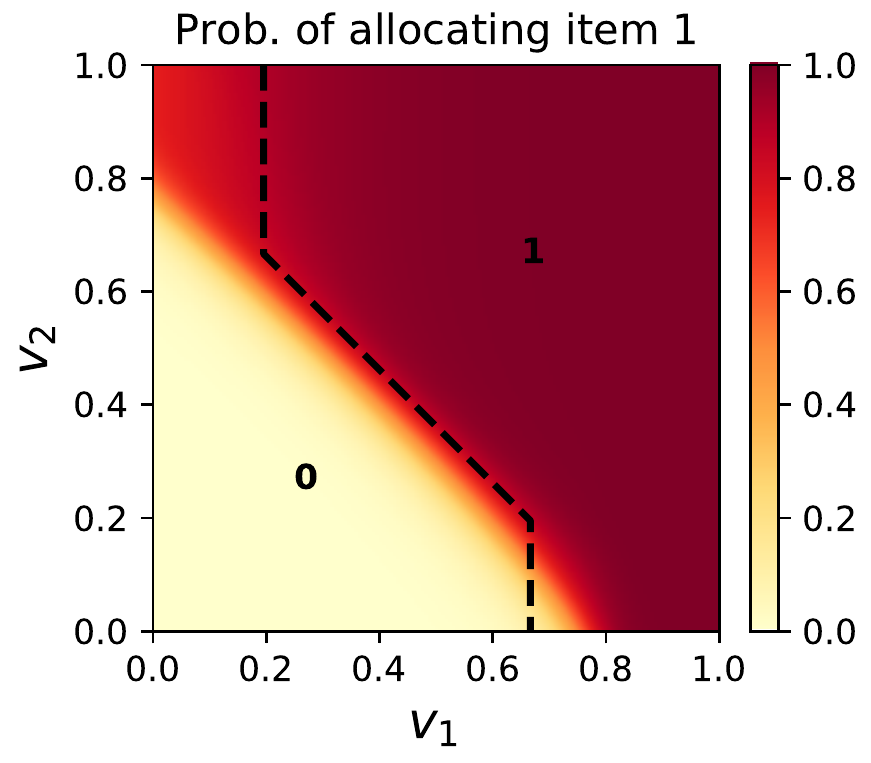}
    \includegraphics[scale=0.77]{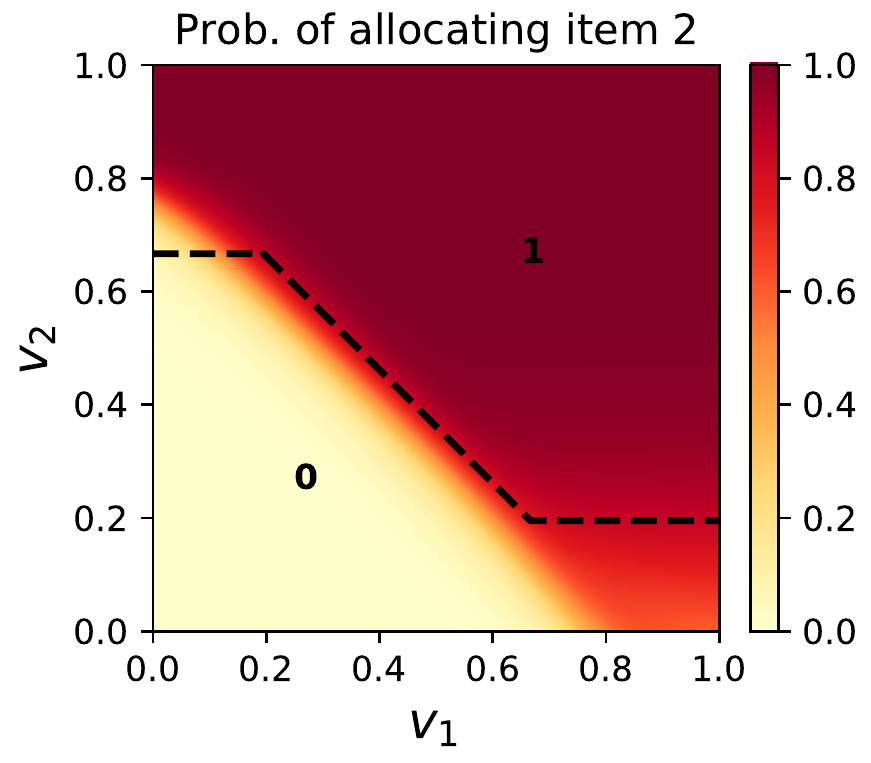}\\
    \includegraphics[scale=0.77]{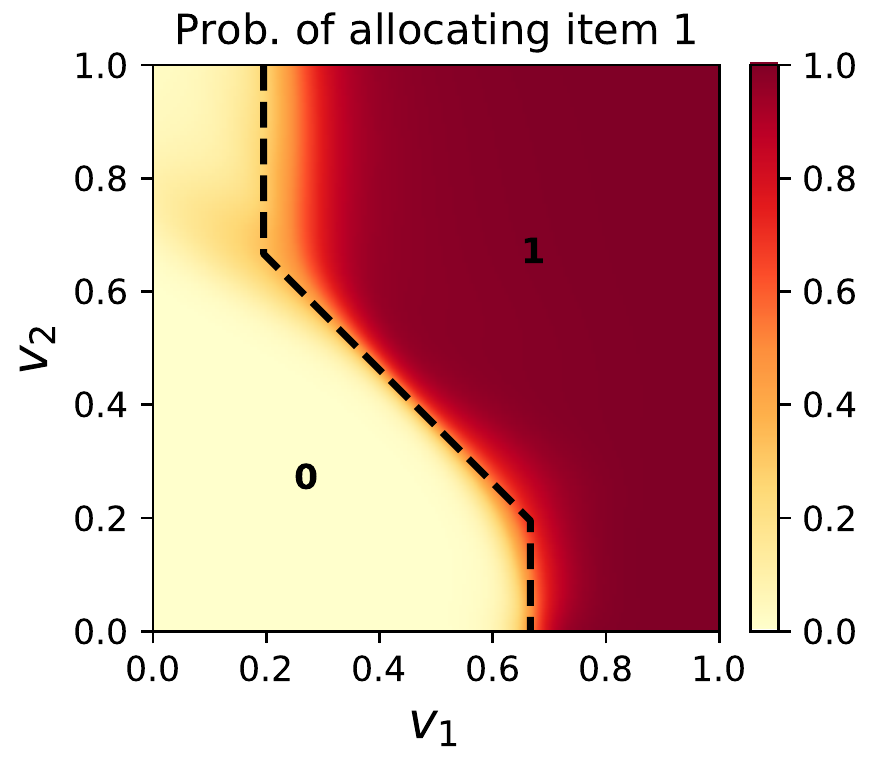}
    \includegraphics[scale=0.77]{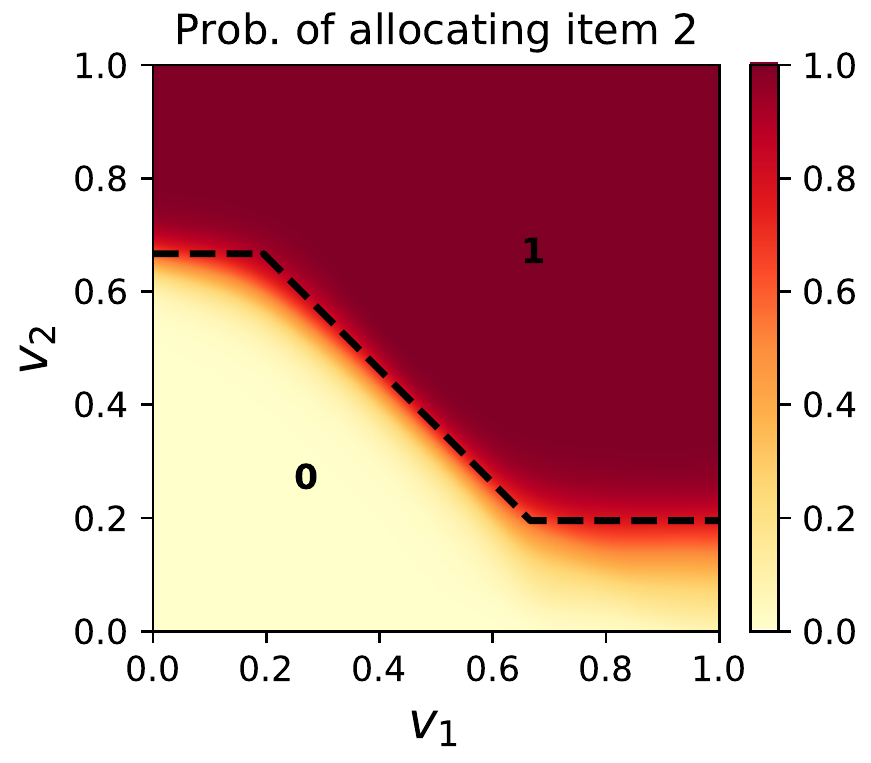}
    \caption{{Allocation rule learned by RegretNet (up) and RegretNet-PE (down) for one-bidder and two-item setting. The solid regions describe the probability of allocating the first item (left) and the second item (right). The optimal auction mechanism is described by the regions separated by the dashed black lines, where the number $0$ or $1$ is the probability of optimal allocation rule in the region.}}
    \label{figure 3}
\end{figure}

\subsubsection{Test Settings}
To verify our Theorem \ref{the} and Theorem \ref{theorem plus}, we first train a RegretNet and then project the will-trained RegretNet to be permutation-equivariant through bidder-item aggregated averaging $\mathcal{Q}_3$, denoted as ``RegretNet-Test''. To meet the {\it symmetric valuation} condition in Theorem \ref{the} and Theorem \ref{theorem plus}, we sample a set of valuations from the distribution, which is denoted by $\mathcal{S}$, and then, induce a set of symmetric samples $\widetilde{\mathcal{S}}=\{\sigma_nv\sigma_m:\sigma_n\in S_n,v\in \mathcal{S},\sigma_m\in S_m\}$ for test.

For a RegretNet-PE, there is no difference between test on $\tilde{\mathcal{S}}$ and on $\mathcal{S}$, because
\begin{equation*}
\frac{1}{n!m!L}\sum_{l=1}^{L}\sum_{\sigma_n\in S_n}\sum_{\sigma_m\in S_m} f(\sigma_n v_i\sigma_m)=\frac{1}{n!m!L}\sum_{l=1}^{L}\sum_{\sigma_n\in S_n}\sum_{\sigma_m\in S_m} f(v_i)=\frac{1}{L}\sum_{l=1}^L f(v_i),
\end{equation*}
for any permutation-equivariant function $f$ and a RegretNet-PE is always permutation-equivariant. 

 To compute the best bidding $v_i'$ for each bidder $i$, we first randomly initialize $1,000$ misreports in all settings, and then, perform {$2,000$} updates on each misreport via Adam with the same settings. Finally, we choose the best one (which induces a maximal utility of the bidder $i$) as the approximated best bidding $v_i'$.
 
\subsubsection{Implementation Details}

We train the models (RegretNet and RegretNet-PE) for up to {150} epochs with a batch size of 128 ($B=128$) and report the early-stop results for RegretNet-PE to obtain a comparable ex-post regret. {The terminal iteration numbers for RegretNet-PE are $10,000$ in the $2\times 1$ setting, $17,000$ in the $3\times 1$ setting, $18,000$ in the $5\times 1$ setting, $300,000$ in the $1\times 2$ setting, and $600,000$ in the $2\times 2$ setting. Our insight is that the larger terminal iteration number required in the $2\times 2$ setting is because of the small model size, {\it i.e.}, where the networks have three layers (each of $100$ nodes). } The value of $\rho$ is initialized as $1.0$ and increased by $5$ every $200$ batches. For each update on $\omega^t$, we initialize one misreport and update the misreport by Adam for each bidder with $25$ steps ($R=25$) and learning rate $0.1$ ($\gamma=0.1$). The final optimal misreports will be used to initialize the misreports for the same batch in the next epoch. We update $\omega^t$ via Adam for every batch with a learning rate of {0.001}. Besides, we update $\lambda^t$ every {200 batches}.

\subsection{Additional Experiment Results}\label{B.2}

We present additional experimental results.
Each valuation $v_{ij}$ is sampled independently from (1) a truncated normal distribution {$\mathcal{N}(0.3,0.1)$} in $[0,1]$; (2) a compound distribution $\mathcal{N}(\frac{x_i}{6},0.1)$ truncated in $[0,1]$, where $x_i$ is sampled independently and uniformly from $\{1,2,3,4,5\}$ (cf. Setting A, \cite{duan2022context}); and (3) a compound distribution $U[0,Sigmoid(x_i^Ty_j)]$, where $x_i$ and $y_j$ are sampled independently and uniformly from $[-1,1]$ (Setting C, \cite{duan2022context}). All results are shown in Table \ref{table2} and Table \ref{table3}. The revenue and ex-post regret of RegretNet and EquivariantNet in Table \ref{table3} come from the previous work \cite{duan2022context}. 
In Table \ref{table3}, we report the ex-post regret as ``$<0.001$'' following the previous works.

{Moreover, we extend our experiments to more complex settings, including two-bidder five-item and five-bidder three-item settings. Due to the computation limitations, we sample $\{3840, 1280\}$ data points and initialize $\{150,120\}$ misreports for test. Each valuation is sampled from the uniform distribution $U[0,1]$ and the truncated normal distribution {$\mathcal{N}(0.3,0.1)$} in $[0,1]$. The results are shown in Tables \ref{table2} and \ref{table5}.} 



{\subsection{{Allocation Rules Learned by RegretNet and RegretNet-PE}}
In this section, we show the allocation rules learned by RegretNet and RegretNet-PE in two-bidder, one-item setting and one-bidder, two-item setting, where the valuation is drawn from the uniform distribution $U[0,1]$. The optimal auction mechanisms are both known.

\subsubsection{Two-bidder and One-item Setting} For the two-bidder, one-item setting, the optimal mechanism is well-known as Myerson auction \cite{myerson1981optimal}, which allocates the item to the highest bidder with receiving a payment of the maximum of the second price and the reserve price, if the highest bid is higher than the reserve price. The allocation rules learned by RegretNet and RegretNet-PE are shown in Figure \ref{figure 2}. From Figure \ref{figure 2}.
We can find that the two learned allocation rules are both almost the same as the optimal mechanism.

\subsubsection{One-bidder and Two-item Setting} The optimal mechanism is given by \cite{manelli2006bundling}. Same with the above, we show the allocation rules learned by RegretNet and that learned by RegretNet-PE in Figure \ref{figure 3}. The improvement is significant. when one item’s valuation is close to $0$ and another item's is close to $1$, the mechanism learned by RegretNet has a positive probability to allocate the item with the lower valuation to the bidder, while RegretNet-PE and the optimal mechanisms would not.}

\end{document}